\documentclass[11pt,a4paper]{article}

\usepackage{graphicx}%
\usepackage{multirow}%
\usepackage{amsmath,amssymb,amsfonts}%
\usepackage{amsthm}%
\usepackage{mathrsfs}%
\usepackage[title]{appendix}%
\usepackage{xcolor}%
\usepackage{textcomp}%
\usepackage{manyfoot}%
\usepackage{booktabs}%
\usepackage{algorithm}%
\usepackage{algorithmicx}%
\usepackage{algpseudocode}%
\usepackage{listings}%
\usepackage[table]{xcolor}
\usepackage{tabularx}
\usepackage{booktabs}
\usepackage{float}
\usepackage{doi,cite}

\newtheorem{theorem}{Theorem}[section]
\newtheorem{lemma}[theorem]{Lemma}
\newtheorem{definition}[theorem]{Definition}
\newtheorem{proposition}[theorem]{Proposition}
\newtheorem{corollary}[theorem]{Corollary}
\newtheorem{example}[theorem]{Example}

\title{Critical sets of Latin squares based on autoparatopisms}

\author{
Manuel Gonz\'alez-Regadera\thanks{Department of Applied Mathematics I, Universidad de Sevilla, Spain. \texttt{mgonzalez34@us.es}}
\and
Ra\'ul M. Falc\'on\thanks{Department of Applied Mathematics I, Universidad de Sevilla, Spain. \texttt{rafalgan@us.es}}
\and
Mar\'\i a Dolores Frau \thanks{Department of Applied Mathematics I, Universidad de Sevilla, Spain. \texttt{mdfrau@us.es}}
}

\date{}

\begin{document}

\maketitle

\begin{abstract}
In cryptography, critical sets of Latin squares have particularly been implemented to design secret sharing schemes. A main problem in these cryptographic protocols arises from absent holders of pieces of information that are common to different critical sets, because they become indispensable to recover the secret. This paper solves this problem by making use of the orbits of entries described by the autoparatopism group of the Latin square under consideration. To this end, we introduce the more general problem of computing critical sets of Latin squares having a given paratopism in their autoparatopism group. These critical sets depend only on the conjugacy class of the autoparatopism and the main class of the Latin square under consideration. Based on this fact, as an illustrative example, we determine the smallest and largest sizes of critical sets associated with autoparatopisms of Latin squares of order up to six. We implement this approach in the design of a new secret sharing scheme.
\end{abstract}

\noindent\textbf{Keywords:} Latin square, critical set, paratopism, isotopism, secret sharing scheme.

\noindent\textbf{2020 MSC:} 05B05; 20N05; 94A62.

\maketitle

\section{Introduction}\label{sec1}

A {\em partial Latin square} of order $n$ is an $n\times n$ partial array with nonempty entries in a given set of $n$ symbols such that each symbol appears at most once per row and at most once per column. This is a {\em Latin square} if there are no empty entries in the array.

In cryptography, Latin squares and partial Latin squares have turned out to be useful in the design of new authentication schemes \cite{Denes1992}, block ciphers \cite{Stinson2007}, cipher systems \cite{Kong2008}, cryptosystems \cite{Desoky2024}, cryptographic primitives \cite{Mileva2010}, encryption-decryption algorithms \cite{Zolfaghari2022}, error-correcting codes \cite{Gupta2017}, image encryption \cite{Shen2022}, or pseudo-random sequences \cite{FalconAlvarez19}, among others. Of particular interest for the purposes of this paper, we highlight their implementation to design new secret sharing schemes \cite{Cooper1994,Fitina2006,Chum2010,Donovan2012}.

These schemes were independently introduced by Shamir \cite{Shamir1979} and Blakley \cite{Blakley1979}. In this cryptographic protocol, a trusted dealer divides the key into pieces of information that are distributed among a group of participants. Any authorized subgroup that shares at least a predetermined number of pieces of information is sufficient to recover the key, while no subgroup that shares less than the threshold value can do so. Cooper, Donovan, and Seberry described \cite{Cooper1994} a secret sharing scheme in which the key is a given Latin square $L$, whose order is made public. Then, its entries are distributed to a group of participants so that, if $L$ is the unique Latin square containing all the entries shared by a subgroup of participants, then they recover the key. Otherwise, they cannot do so.  If the removal of any participant from this group makes this recovery impossible, then their entries constitute a {\em critical set} of $L$, a notion that plays a relevant role in this cryptographic protocol. In practice, the huge growth of the number of Latin squares when the order increases makes any brute-force attack unfeasible for discovering the key unless the attacker knows a wide amount of pieces of information. The problem of absent participants is addressed by considering multiple critical sets of different sizes that necessarily have one or more entries in common. The owners of these entries have a much higher hierarchy in the scheme, because they could even become indispensable to recover the key. This could be a handicap if they were disabled. This paper solves this problem by avoiding the need to search for different critical sets with common entries. More specifically, our approach makes use of the symmetries of the Latin square under consideration to get substitutes for these absent participants. 

To formally introduce our approach, let $\mathcal{PL}(n)$ and $\mathcal{L}(n)$ denote, respectively, the set of partial Latin squares of order $n$ with nonempty entries in the set $[n]:=\{1,\ldots,n\}$, and its subset of Latin squares. Every partial Latin square $P:=(P[i,j])\in\mathcal{PL}(n)$ is uniquely identified with its set of nonempty entries $\mathrm{Ent}(P):=\{(i,j,P[i,j])\colon\, i,j,P[i,j]\in [n]\}$. The cardinality of this set is the {\em size} of $P$. Furthermore, for each pair $P_1,P_2\in \mathcal{PL}(n)$, it is said that $P_1$ {\em contains} $P_2$ if $\mathrm{Ent}(P_2)\subseteq \mathrm{Ent}(P_1)$. By abuse of notation, we denote this fact by $P_2\subseteq P_1$ (or by $P_2\subset P_1$ if $\mathrm{Ent}(P_2)\subset \mathrm{Ent}(P_1)$). If $P_1\in\mathcal{L}(n)$, then $P_2$ is {\em completable} to $P_1$, and $P_1$ is a {\em completion} of $P_2$. If $P_2$ is not completable to any other Latin square, then it is {\em uniquely completable}. If no partial Latin square $P_3\subset P_2$ is also uniquely completable to $P_1$, then $P_2$ is a {\em critical set} of $P_1$. This is {\em minimal} if there does not exist any other critical set of $P_1$ of smaller size. 

Both problems of deciding the existence of a completion and deciding whether a partial Latin square is uniquely completable are NP-complete \cite{Colbourn84,Colbourn1984}. A greedy algorithm to find critical sets in Latin squares was described in \cite{Hamalainen2007}. We refer to \cite{Keedwell2004} and \cite{Cavenagh2008} for a pair of comprehensive surveys on critical sets of Latin squares. 

Let $L\in\mathcal{L}(n)$. It is known \cite{Donovan2000,Donovan2000a} the existence of critical sets of size $\mathrm{cs}(L)$ with
{\small \[\left\lfloor \frac {n^2}4\right\rfloor\leq \mathrm{cs}(L)\leq \frac{n^2-n}2.\]}
The smallest and largest critical set of $L$ are respectively denoted by $\mathrm{scs}(L)$ and $\mathrm{lcs}(L)$. Those ones of any given Latin square of order $n$ are respectively denoted by $\mathrm{scs}(n)$ and $\mathrm{lcs}(n)$. Table \ref{table:scs_lcs} shows the known values of $\mathrm{scs}(n)$ and $\mathrm{lcs}(n)$ (see \cite{Curran1979,Donovan1995,Bate1999,Adams2001,Adams2003,Bean2005}).

\begin{table}[h]
\caption{Smallest and largest sizes of critical sets}\label{table:scs_lcs}%
\renewcommand{\tabcolsep}{7pt}
\begin{tabular}{@{}lllllllll@{}}
\toprule
$n$ & 1  & 2 & 3 & 4 & 5 & 6 & 7 & 8\\
\midrule
$\mathrm{scs}(n)$   & 0   & 1  & 2 & 4 & 6 & 9 & 12 & 16 \\
$\mathrm{lcs}(n)$     & 0   & 1 & 3 & 7 & 11 & 18  \\
\hline
\end{tabular}
\end{table}

Some partial results concerning the exact value of $\mathrm{scs}(n)$ are also known for some families of Latin squares such as back-circulant \cite{Howse1998,Bate2003} or symmetric ones \cite{Mojdeh2007}. More generally, it is known \cite{Smetaniuk1979,Curran1979, Cavenagh2007,Bean2000,Bean2003,Hatami2003} that $n\left\lfloor \frac 12 (\log n)^{1/3}\right\rfloor \leq \mathrm{scs}(n)\leq \left\lfloor \frac {n^2}4\right\rfloor$ and $n^2\left(1-\frac{2+\ln 2}{\ln n}\right)+n\left(1+\frac{\ln(8\pi)}{\ln n}\right)-\frac {\ln 2}{\ln n}\leq \mathrm{lcs}(n)\leq n^2-3n+3$. Both upper bounds are known to be reached for some values (see Table \ref{table:scs_lcs}). That of $\mathrm{scs}(n)$ is conjectured to be always exact. 

The number and sizes of critical sets of a given Latin square depend only on its paratopism class \cite{Donovan1995}. Recall here that, if $S_n$ denotes the symmetric group on the set $[n]$, then a {\em paratopism} is any pair $\Theta:=(\pi;f)$ formed by a permutation $\pi\in S_3$ and a triple $f:=(f_1,f_2,f_3)\in S_n\times S_n\times S_n$. Then, a partial Latin square $P\in \mathcal{PL}(n)$ is said to be {\em paratopic} to $P^\Theta\in \mathcal{PL}(n)$ if

{\footnotesize\begin{equation}\label{eq_paratopism}
\mathrm{Ent}(P^\Theta)=\left\{\left(f_{\pi(1)}(e_{\pi(1)}),\,f_{\pi(2)}(e_{\pi(2)}),\,f_{\pi(3)}(e_{\pi(3)})\right)\colon\, \left(e_1,\,e_2,\, e_3\right)\in\mathrm{Ent}(P)\right\}.
\end{equation}}
The semidirect product $S_3\wr S_n:=S_3\ltimes (S_n\times S_n\times S_n)$ is a group under the composition of paratopisms
\begin{equation}\label{equation_composition}
(\rho;g)(\pi;f):=\left(\pi\rho;g^\pi f\right),
\end{equation}
where $g^\pi:=\left(g_{\pi^{-1}(1)},g_{\pi^{-1}(2)},g_{\pi^{-1}(3)}\right)$. In particular, 

\begin{equation}\label{eq_inverse}
(\pi;f)^{-1}:=\left(\pi^{-1};(f^{-1})^{\pi^{-1}}\right).
\end{equation}

\newpage

The paratopism $(\pi;f)$ is an {\em isotopism} if $\pi$ is the trivial permutation $\mathrm{Id}_3\in S_3$; and an {\em isomorphism} if, in addition, $f_1=f_2=f_3$. Furthermore, if $f$ is trivial (that is, if its three components coincide with the trivial permutation $\mathrm{Id}_n$ on the set $[n]$), then $\Theta$ is called a {\em parastrophism} from $P$ to $P^\Theta$. If this is the case, then $P$ and $P^\Theta$ are said to be {\em parastrophic}. To be isotopic, isomorphic, parastrophic or paratopic constitute equivalence relations among partial Latin squares, which give respectively rise to the so-called {\em isotopism}, {\em isomorphism}, {\em conjugacy} and {\em main classes} of partial Latin squares. The distribution of Latin squares into these classes is known \cite{Hulpke2011, Kolesova1990, McKay2007} for order up to $11$, and that of partial Latin squares is known \cite{Falcon2013, Falcon2015, Falcon2018, FalconStones2020} for order up to six. A census of critical sets is known for every isotopism and main class of Latin squares of order up to six \cite{Adams2003}, while an example of a critical set is known for each main class of Latin squares of order seven \cite{Donovan1998}.

An {\em autoparatopism} of a partial Latin square $P\in\mathcal{PL}(n)$ (respectively, {\em autotopism}) is a paratopism (respectively, isotopism) from $P$ to itself. In this regard, autoparatopisms and autotopisms describe the symmetries of any partial Latin square. The set $\mathrm{Apar}(P)$ (respectively, $\mathrm{Atop}(P)$) formed by the autoparatopisms (respectively, autotopisms) of $P$ is a group under the composition described in (\ref{equation_composition}). In fact, every finite group is isomorphic to the autoparatopism group of at least one partial Latin square \cite{Stones2013}.

The concepts of completability and critical set of Latin squares have been extended \cite{Falcon2013} to those based on a given isotopism. More precisely, a partial Latin square $P\in\mathcal{PL}(n)$ is said to be {\em $f$-completable}, with $f\in S_n\times S_n\times S_n$, if there is a completion $L\in\mathcal{L}(n)$ of the former such that $f\in\mathrm{Atop}(L)$. If $P$ is not $f$-completable to any other Latin square, then it is {\em uniquely $f$-completable}. It is a {\em $f$-critical set} of $L$ if no partial Latin square properly contained in $P$ is also uniquely $f$-completable. If $f$ is the trivial isotopism, then the classical notions of completability and critical set arise. A census of critical sets based on isotopisms is known \cite{Falcon2021} for all $n\leq 5$.

This paper further generalizes these concepts by dealing with critical sets of Latin squares containing a given paratopism in their autoparatopism group. The paper is organized as follows. In Section~\ref{sec:preliminaries} we extend the concepts of completability and critical set from isotopisms to paratopisms. Then, we show how the orbits of entries induced by the autoparatopism under consideration constitute a first approach to construct these critical sets. In Section \ref{sec:conjugacy}, we show that the set of critical sets based on a given autoparatopism of a Latin square depends only on the conjugacy class of the former and the main class of the latter. The study of cases concerning these conjugacy classes is carried out according to the cycle structures of their respective paratopisms. This enables us to compute the sizes of critical sets based on autoparatopisms of Latin squares of order up to six. Finally, in Section \ref{sec:secret}, we implement these critical sets based on paratopisms to design a generalization of the secret sharing scheme described in \cite{Cooper1994} that is not so dependent on the existence of absent but indispensable participants.

\section{Critical sets based on autoparatopisms}\label{sec:preliminaries}

In this section, we introduce the notions of completability and critical set of a Latin square with respect to a given paratopism $\Theta\in S_3\wr S_n$ as a natural generalization of those based on isotopisms. To this end, we define the set
\[\mathcal{L}(\Theta):=\left\{L\in\mathcal{L}(n)\colon\, \Theta\in\mathrm{Apar}(L)\right\}.\]

\begin{definition}\label{definition_CriticalSet} A partial Latin square $P\in\mathcal{PL}(n)$ is {\em $\Theta$-completable} if it is completable to a Latin square $L\in\mathcal{L}(\Theta)$. If $P$ is not $\Theta$-completable to any other Latin square, then it is {\em uniquely $\Theta$-completable}. Moreover, it is a {\em $\Theta$-critical set} of $L$ if no partial Latin square properly contained in $P$ is also uniquely $\Theta$-completable. We denote by $\mathrm{CS}_\Theta(L)$ the set of $\Theta$-critical sets of $L$. The smallest and largest $\Theta$-critical set of $L$ are respectively denoted by $\mathrm{scs}_\Theta(L)$ and $\mathrm{lcs}_\Theta(L)$. Those ones of any given Latin square of order $n$ are respectively denoted by $\mathrm{scs}_\Theta(n)$ and $\mathrm{lcs}_\Theta(n)$.
\end{definition}

Classical critical sets of Latin squares arise for the trivial autoparatopism $\Theta=(\mathrm{Id}_3;(\mathrm{Id}_n,\mathrm{Id}_n,\mathrm{Id}_n))$, whereas those based on a given isotopism do for $\Theta=(\mathrm{Id}_3;(f_1,f_2,f_3))\in S_3\wr S_n$. Definition \ref{definition_CriticalSet} also includes critical sets of symmetric Latin squares, which was introduced in \cite{Mojdeh2007} for even orders. This refers to Latin squares that are equal to their own transpose, and arises for $\Theta=((12);(\mathrm{Id}_n,\mathrm{Id}_n,\mathrm{Id}_n))$. In any case, the set $\mathrm{CS}_\Theta(L)$ is preserved by $\Theta$. To see it, we first show the relationship among (uniquely) completable partial Latin squares of two paratopic Latin squares.

\begin{lemma}\label{lemma_powers} If $P\in\mathcal{PL}(n)$ is (uniquely) $\Theta$-completable to $L\in\mathcal{L}(\Theta)$, then $P^\Theta$ is also (uniquely) $\Theta$-completable to $L$.
\end{lemma}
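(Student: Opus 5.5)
The plan is to use the two basic compatibility properties of the action (\ref{eq_paratopism}). First, it is induced by a bijection on triples: the map
\[\phi_\Theta\colon (e_1,e_2,e_3)\mapsto \left(f_{\pi(1)}(e_{\pi(1)}),f_{\pi(2)}(e_{\pi(2)}),f_{\pi(3)}(e_{\pi(3)})\right)\]
is a bijection of $[n]^3$, and $\mathrm{Ent}(P^\Theta)=\phi_\Theta(\mathrm{Ent}(P))$. Second, $\phi_\Theta$ sends partial Latin squares to partial Latin squares and Latin squares to Latin squares. This holds because it permutes the three coordinates and relabels each one bijectively, so it preserves the property that any two coordinates of an entry determine the third, and it preserves cardinalities. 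Both facts are implicit in the definition of paratopism; I would only record them. From the first fact, inclusion is preserved in both directions: $P\subseteq L$ if and only if $P^\Theta\subseteq L^\Theta$. The same holds for $\Theta^{-1}$, using (\ref{eq_inverse}) and the fact that $(P^\Theta)^{\Theta^{-1}}=P$ by the group action.

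Next comes existence. If $P\subseteq L$ with $L\in\mathcal{L}(\Theta)$, then $P^\Theta\subseteq L^\Theta=L$, because $\Theta\in\mathrm{Apar}(L)$. Hence $P^\Theta$ is $\Theta$-completable to $L$.

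Then comes uniqueness. Assume $P$ is uniquely $\Theta$-completable to $L$, and let $L'\in\mathcal{L}(\Theta)$ contain $P^\Theta$. Apply $\Theta^{-1}$ to get $P\subseteq L'^{\Theta^{-1}}$. Since $\Theta\in\mathrm{Apar}(L')$ and $\mathrm{Apar}(L')$ is a group, $\Theta^{-1}\in\mathrm{Apar}(L')$, so $L'^{\Theta^{-1}}=L'$. Thus $P\subseteq L'$ with $L'\in\mathcal{L}(\Theta)$, and uniqueness gives $L'=L$. Therefore $P^\Theta$ is uniquely $\Theta$-completable to $L$.

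The only step needing care is the claim that $\Theta^{-1}$ fixes $L'$ and that $(P^\Theta)^{\Theta^{-1}}=P$, that is, that (\ref{eq_paratopism}) is a genuine group action consistent with the composition (\ref{equation_composition}). I would take this as given by the paper's statement that $\mathrm{Apar}(L)$ is a group under that composition. If needed, an alternative avoids inverses altogether. Since $\Theta$ has finite order $m$, we have $\Theta^{-1}=\Theta^{m-1}$, and iterating the forward map $m-1$ times on $L'$ returns $L'$.
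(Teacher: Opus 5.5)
Your proof is correct and follows the paper's argument: existence comes from $P^\Theta\subseteq L^\Theta=L$, and uniqueness comes from pulling a second completion $L'\in\mathcal{L}(\Theta)$ back via $\Theta^{-1}$ using $L'^{\Theta^{-1}}=L'$. The paper phrases that second step as a contradiction, whereas you argue directly. Your remark that inclusion is reflected by the injective map on triples already settles the uniqueness step without any reference to $\Theta^{-1}$, since $P^\Theta\subseteq L'=L'^\Theta$ gives $P\subseteq L'$; so the fallback via $\Theta^{m-1}$ is not needed.
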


\begin{proof} Since $P^\Theta\subseteq L^\Theta=L$, we have that $P^\Theta$ is $\Theta$-completable to $L$. Furthermore, if $P$ is uniquely $\Theta$-completable, but $P^\Theta\subseteq L'$  for some $L'\in\mathcal{L}(\Theta)\setminus\{L\}$, then $P=\left(P^\Theta\right)^{\Theta^{-1}}\subseteq {L'}^{\Theta^{-1}}=L'$, which is a contradiction. Thus, $P^\Theta$ is also uniquely $\Theta$-completable.
\end{proof}

\begin{proposition}\label{proposition_powers} Let $L\in\mathcal{L}(\Theta)$ and $P\in\mathrm{CS}_\Theta(L)$. Then, $P^\Theta\in \mathrm{CS}_\Theta(L)$.
\end{proposition}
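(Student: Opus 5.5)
The plan is to check the two defining conditions of a $\Theta$-critical set for $P^\Theta$: first that it is uniquely $\Theta$-completable to $L$, and then that it is minimal with respect to this property. The first condition follows at once from Lemma~\ref{lemma_powers}. Indeed, $P$ is uniquely $\Theta$-completable to $L$ because $P\in\mathrm{CS}_\Theta(L)$. Hence $P^\Theta$ is uniquely $\Theta$-completable to $L$.

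For minimality, I will argue by contradiction. Suppose there is a partial Latin square $Q\subset P^\Theta$ that is uniquely $\Theta$-completable, necessarily to $L$ since $Q\subseteq P^\Theta\subseteq L$. Then I apply $\Theta^{-1}$. I need two facts for this step.
\begin{enumerate}
\item The map $R\mapsto R^{\Theta^{-1}}$ is a bijection on $\mathrm{Ent}$-sets that preserves strict inclusion. Hence $Q^{\Theta^{-1}}\subset (P^\Theta)^{\Theta^{-1}}=P$.
\item $\Theta^{-1}\in\mathrm{Apar}(L)$, so $L\in\mathcal{L}(\Theta)$ also gives $L^{\Theta^{-1}}=L$.
\end{enumerate}

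Lemma~\ref{lemma_powers} is stated for $\Theta$, not for $\Theta^{-1}$, so I cannot quote it directly. Instead I repeat its argument with $\Theta^{-1}$ in place of $\Theta$. First, $Q^{\Theta^{-1}}\subseteq L^{\Theta^{-1}}=L$. Next, suppose $Q^{\Theta^{-1}}\subseteq L'$ for some $L'\in\mathcal{L}(\Theta)$. Then $Q\subseteq L'^{\Theta}=L'$, and uniqueness for $Q$ forces $L'=L$. Thus $Q^{\Theta^{-1}}$ is a uniquely $\Theta$-completable partial Latin square properly contained in $P$. This contradicts $P\in\mathrm{CS}_\Theta(L)$.

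Alternatively, since $S_3\wr S_n$ is finite, $\Theta^{-1}=\Theta^{k}$ for some $k\geq 1$. Iterating Lemma~\ref{lemma_powers} $k$ times then gives the same conclusion directly.

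The only real obstacle is this point: justifying that applying $\Theta^{-1}$ preserves both unique $\Theta$-completability and strict containment. Strict containment is immediate because paratopisms act bijectively on triples. For unique completability, the cleanest route is the power argument $\Theta^{-1}=\Theta^k$, which reduces everything to Lemma~\ref{lemma_powers}.
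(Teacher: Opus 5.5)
Your proof is correct and follows the paper's argument: unique $\Theta$-completability of $P^\Theta$ comes from Lemma~\ref{lemma_powers}, and then a hypothetical uniquely $\Theta$-completable $Q\subset P^\Theta$ is pulled back by $\Theta^{-1}$ to a proper subset of $P$, contradicting the criticality of $P$. Your explicit check that $Q^{\Theta^{-1}}$ is still uniquely $\Theta$-completable, either by rerunning the lemma's argument or by writing $\Theta^{-1}=\Theta^k$, makes precise a step the paper simply attributes to Lemma~\ref{lemma_powers}.
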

\begin{proof} From Lemma \ref{lemma_powers}, $P^\Theta$ is uniquely $\Theta$-completable to $L$. Thus, $P^\Theta\not\in \mathrm{CS}_\Theta(L)$ if and only if it contains properly a partial Latin square $Q$ that is uniquely $\Theta$-completable to $L$. But then, $Q^{\Theta^{-1}}$ is properly contained in $P$. Moreover, it is $\Theta$-completable to $L$ from Lemma \ref{lemma_powers}. This contradicts the fact that $P\in\mathrm{CS}_\Theta(L)$. Hence, $P^\Theta\in \mathrm{CS}_\Theta(L)$.
\end{proof}

\vspace{0.2cm}

From now on, we denote $\Theta^k:=\Theta^{k-1}\Theta$ for every positive integer $k$, where $\Theta^0$ is the trivial paratopism. It is well defined because $S_3\wr S_n$ is a group under the composition described in (\ref{equation_composition}). The {\em order} of $\Theta$ is the smallest positive integer $o(\Theta)$ such that $\Theta^{o(\Theta)}=\Theta^0$. 

\begin{lemma}\label{lemma_bounds} Let $L\in\mathcal{L}(\Theta)$ and let $k\leq o(\Theta)$ be a non-negative integer. Every $\Theta^k$-critical set of $L$ contains a $\Theta$-critical set of $L$.
\end{lemma}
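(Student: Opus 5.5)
The plan is to compare the two families of Latin squares involved, namely $\mathcal{L}(\Theta)$ and $\mathcal{L}(\Theta^k)$, and then extract a $\Theta$-critical set by minimality. The first step is the inclusion $\mathcal{L}(\Theta)\subseteq\mathcal{L}(\Theta^k)$. It holds because $\mathrm{Apar}(L')$ is a group under the composition (\ref{equation_composition}) for every $L'\in\mathcal{L}(n)$. Hence $\Theta\in\mathrm{Apar}(L')$ implies $\Theta^k\in\mathrm{Apar}(L')$ for every non-negative integer $k$. The case $k=0$ is included, since the trivial paratopism is an autoparatopism of every Latin square, so that $\mathcal{L}(\Theta^0)=\mathcal{L}(n)$. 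In particular $L\in\mathcal{L}(\Theta^k)$, so $\Theta^k$-critical sets of $L$ make sense.

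Next, let $P$ be a $\Theta^k$-critical set of $L$. Then $P$ is uniquely $\Theta^k$-completable to $L$: the only Latin square in $\mathcal{L}(\Theta^k)$ containing $P$ is $L$. Since $L\in\mathcal{L}(\Theta)$ and $\mathcal{L}(\Theta)\subseteq\mathcal{L}(\Theta^k)$, the only Latin square in $\mathcal{L}(\Theta)$ containing $P$ is also $L$. So $P$ is uniquely $\Theta$-completable to $L$. In other words, having fewer candidate completions can only preserve uniqueness, and $L$ itself remains a candidate.

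Finally, consider the family of partial Latin squares $Q\subseteq P$ that are uniquely $\Theta$-completable to $L$. It is nonempty, since it contains $P$, and it is finite, since $\mathrm{Ent}(P)$ is finite. So I will choose a member $Q$ that is minimal with respect to inclusion. Every $Q'\subset Q$ is a partial Latin square contained in $P$. By minimality, $Q'$ is not uniquely $\Theta$-completable; it still lies inside $L$, so it is $\Theta$-completable, but not uniquely. Hence $Q\in\mathrm{CS}_\Theta(L)$ by Definition~\ref{definition_CriticalSet}, and $Q\subseteq P$, as required.

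There is no real obstacle here. The only point needing care is the direction of the inclusion between $\mathcal{L}(\Theta)$ and $\mathcal{L}(\Theta^k)$, together with the fact that $L$ belongs to both families, so that unique completability transfers from $\Theta^k$ to $\Theta$. The hypothesis $k\le o(\Theta)$ is not needed beyond $k\ge 0$.
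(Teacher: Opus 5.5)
Your proposal is correct and follows essentially the same route as the paper: the inclusion $\mathcal{L}(\Theta)\subseteq\mathcal{L}(\Theta^k)$ from the group property of $\mathrm{Apar}(L')$, then the transfer of unique completability from $\Theta^k$ to $\Theta$, then the choice of an inclusion-minimal uniquely $\Theta$-completable subset of $P$. Your remark that the hypothesis $k\le o(\Theta)$ plays no role beyond $k\ge 0$ is also accurate.
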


\begin{proof} For any Latin square $L'$, the set $\mathrm{Apar}(L')$ is a group, so $\Theta\in \mathrm{Apar}(L')$ implies $\Theta^k\in \mathrm{Apar}(L')$. Thus, $L(\Theta)\subseteq L(\Theta^k)$. As a consequence, if $P\in\mathrm{CS}_{\Theta^k}(L)$, then the set of $\Theta$-completions of $P$ is a subset of the set of $\Theta^k$-completions of $P$. Since $P$ is uniquely $\Theta^k$-completable to $L$, we have that $P$ is also uniquely $\Theta$-completable to $L$. Since $P$ is finite, the non-empty collection of subsets of $P$ that are uniquely $\Theta$-completable to $L$ has a minimal element $Q$ with respect to inclusion. By Definition \ref{definition_CriticalSet}, $Q$ is a $\Theta$-critical set of $L$, and $Q\subseteq P$.
\end{proof}

\vspace{0.2cm}

As an immediate consequence, the values described in Table \ref{table:scs_lcs} are upper bounds of $\mathrm{scs}_\Theta(L)$. In particular, this value is zero if $n=1$, and $1$ if $n=2$. The same two trivial cases hold for $\mathrm{lcs}_\Theta(L)$, which can be checked directly. More generally, the next result holds readily from Lemma \ref{lemma_bounds}.

\begin{proposition}\label{proposition_bounds} If $L\in\mathcal{L}(\Theta)$, then
\[\mathrm{scs}_\Theta(L)\leq \min_{0\leq k< o(\Theta)}\left\{\mathrm{scs}_{\Theta^k}(L)\right\}\hspace{1cm}\text{and}\hspace{1cm}\mathrm{scs}_\Theta(n)\leq \min_{0\leq k< o(\Theta)}\left\{\mathrm{scs}_{\Theta^k}(n)\right\}.\]
\end{proposition}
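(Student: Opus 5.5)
The plan is to derive both inequalities from Lemma~\ref{lemma_bounds}. The first is a pointwise statement about a fixed $L\in\mathcal{L}(\Theta)$. The second is obtained by passing to extremal values over order $n$, and that passage is where I expect the real difficulty.

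For the first inequality, fix $L\in\mathcal{L}(\Theta)$ and an integer $k$ with $0\leq k<o(\Theta)$. Since $\mathrm{Apar}(L)$ is a group, $\Theta^k\in\mathrm{Apar}(L)$, so $L\in\mathcal{L}(\Theta^k)$ and $\mathrm{scs}_{\Theta^k}(L)$ is defined. Choose $P\in\mathrm{CS}_{\Theta^k}(L)$ of size $\mathrm{scs}_{\Theta^k}(L)$. By Lemma~\ref{lemma_bounds}, $P$ contains some $Q\in\mathrm{CS}_\Theta(L)$, so
\[\mathrm{scs}_\Theta(L)\leq |Q|\leq |P|=\mathrm{scs}_{\Theta^k}(L).\]
Since $k$ is arbitrary, taking the minimum over $k$ gives the first inequality.

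For the second inequality, fix $k$ and let $M$ be a Latin square attaining $\mathrm{scs}_{\Theta^k}(n)$, that is, $M\in\mathcal{L}(\Theta^k)$ with $\mathrm{scs}_{\Theta^k}(M)=\mathrm{scs}_{\Theta^k}(n)$. If $M\in\mathcal{L}(\Theta)$, the pointwise bound from the first step gives
\[\mathrm{scs}_\Theta(n)\leq \mathrm{scs}_\Theta(M)\leq \mathrm{scs}_{\Theta^k}(M)=\mathrm{scs}_{\Theta^k}(n),\]
and minimizing over $k$ finishes the argument. So the proof reduces to showing that the minimizer $M$ can always be chosen inside $\mathcal{L}(\Theta)$. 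The argument is complete as written only under the reading that the Latin squares of order $n$ in question are those in $\mathcal{L}(\Theta)$, the class on which $\mathrm{scs}_\Theta(n)$ is defined.

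The main obstacle is this choice of $M$ when the minimum defining $\mathrm{scs}_{\Theta^k}(n)$ is attained only outside $\mathcal{L}(\Theta)$. The case $k=0$ shows the issue most clearly, since there the minimum runs over all Latin squares of order $n$. My first attempt would be to transport $M$ by a paratopism $\Psi$ with $M^{\Psi}\in\mathcal{L}(\Theta)$. By the invariance of critical sets under paratopisms proved in Section~\ref{sec:conjugacy}, the sizes of $\Theta^k$-critical sets are preserved when $\Theta^k$ is replaced by its conjugate under $\Psi$. This would require $\Psi$ to conjugate $\Theta^k$ back to $\Theta^k$ while also placing $\Theta$ in the autoparatopism group.

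If such a $\Psi$ does not exist in general, I do not expect the literal statement to hold for arbitrary $\Theta$. In that case I would state explicitly that the second inequality is meant with all the quantities $\mathrm{scs}_{\Theta^k}(n)$ taken over the Latin squares in $\mathcal{L}(\Theta)$. Under that reading the pointwise argument above proves it completely.
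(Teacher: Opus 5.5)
Your proof of the first inequality is what the paper means when it says the proposition ``holds readily from Lemma~\ref{lemma_bounds}'': a minimum-size $\Theta^k$-critical set of $L$ contains a $\Theta$-critical set of $L$. For the second inequality the paper says nothing beyond that sentence, and your diagnosis is right. Minimizing the pointwise bound over $L\in\mathcal{L}(\Theta)$ only gives $\mathrm{scs}_\Theta(n)\leq\min_{L\in\mathcal{L}(\Theta)}\mathrm{scs}_{\Theta^k}(L)$. That right-hand side can exceed $\mathrm{scs}_{\Theta^k}(n)$ when $\mathcal{L}(\Theta)\subsetneq\mathcal{L}(\Theta^k)$. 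So under your restricted reading your proof is complete and is the paper's proof. For the literal statement, the gap you identify is equally a gap in the paper.

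What does not hold up is your reduction, and the conclusion you draw from its possible failure. Requiring the minimizer $M$ of $\mathrm{scs}_{\Theta^k}(n)$ to lie in $\mathcal{L}(\Theta)$, or to be moved there by a paratopism, is much stronger than needed. It fails even where the inequality is true.

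\emph{Counterexample to the reduction.} Take $n=4$, $k=0$ and $\Theta=(\mathrm{Id}_3;((34),(1324),(1324)))$, an autotopism of $L_{4.1}$ (class $c_{4.1.4}$).
\begin{itemize}
\item The autotopisms of the Cayley table of $\mathbb{Z}_4$ are $(x\mapsto\phi(x)+a,\,y\mapsto\phi(y)+b,\,z\mapsto\phi(z)+a+b)$ with $\phi=\pm\mathrm{Id}$. A $4$-cycle component forces $\phi=\mathrm{Id}$, and then every component is a translation. So that table has no autotopism whose components have cycle structures $21^2,4,4$ in any order.
\item Hence $\mathcal{L}(\Theta)$ lies in the main class of $L_{4.1}$, where every critical set has size at least $5$. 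Meanwhile $\mathrm{scs}(4)=4$ is attained only in the main class of $L_{4.2}$.
\item Paratopisms preserve main classes, so no $M$ or $\Psi$ of the kind you want exists.
\end{itemize}
Yet the inequality holds here. The $4$-cycle in the column component makes every $\Theta$-orbit have length $4$. So Lemma~\ref{lemma_orbit} gives $\mathrm{scs}_\Theta(4)\leq 16/4=4=\mathrm{scs}(4)$; the value listed for $c_{4.1.4}$ is even $2$.

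So failure of transport is no evidence against the literal statement. Proving the literal statement would need a genuinely different idea. Retreating to the restricted reading would be justified only by an explicit counterexample, and neither you nor the paper gives one.
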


A first approach to understand much better all the notions described in Definition \ref{definition_CriticalSet} consists of making use of the orbits of entries that are induced by the paratopism under consideration. Note in this regard from (\ref{eq_paratopism}) that every autoparatopism $\Theta:=(\pi;(f_1,f_2,f_3))\in S_3\wr S_n$ of a Latin square $L\in\mathcal{L}(n)$ acts faithfully on the set of entries $\mathrm{Ent}(L)$ so that
\[\begin{array}{cccc}
\Theta: & \mathrm{Ent}(L) & \to & \mathrm{Ent}(L)\\
& e:=(e_1,e_2,e_3) & \mapsto &\Theta(e):=\left(f_{\pi(1)}(e_{\pi(1)}),\,f_{\pi(2)}(e_{\pi(2)}),\,f_{\pi(3)}(e_{\pi(3)})\right).
\end{array}\]
The following definition generalizes in a natural way that one described in \cite{Falcon2021} about the orbits of entries induced by a given autotopism. The projection onto the first two coordinates of each entry gives rise, in fact, to the notion of cell orbit induced by a given autoparatopism, which was described in \cite{Mendis2017} (or by a given autotopism, in \cite{Stones2012}). 

\begin{definition}\label{definition_orbit} Let $L\in\mathcal{L}(\Theta)$. The {\em $\Theta$-orbit} of an entry $e\in\mathrm{Ent}(L)$ is the set 
\[\mathrm{Orb}_{\Theta}(e):=\left\{\Theta^k(e)\colon\, 0\leq  k\leq o(\Theta)\right\}\subseteq \mathrm{Ent}(L).\]
We denote by $\mathrm{Orb}_{\Theta}(L)$ the set formed by all $\Theta$-orbits in $\mathrm{Ent}(L)$.
\end{definition}

The next result illustrates the relevant role that the set $\mathrm{Orb}_{\Theta}(L)$ plays in the study of $\mathrm{CS}_\Theta(L)$. It holds readily from the fact that every $\Theta$-orbit in $L$ is uniquely determined by any of its entries. 

\begin{lemma}\label{lemma_orbit} Every $\Theta$-critical set of a Latin square $L\in\mathcal{L}(\Theta)$ has at most one entry in each $\Theta$-orbit of $L$. As a consequence,
\begin{equation}\label{eq_orb}
\mathrm{scs}_{\Theta}(L)\leq \mathrm{lcs}_{\Theta}(L)\leq |\mathrm{Orb}_{\Theta}(L)|.
\end{equation}
\end{lemma}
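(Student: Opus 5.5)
The plan is to argue by contradiction using minimality of critical sets, together with the observation that any $\Theta$-completion must contain whole $\Theta$-orbits. First I would show the key closure fact. Let $L'\in\mathcal{L}(\Theta)$ and $e\in\mathrm{Ent}(L')$. Then $\Theta^k(e)\in\mathrm{Ent}(L')$ for every $k$, because $\Theta\in\mathrm{Apar}(L')$ gives $L'^{\Theta}=L'$, and by (\ref{eq_paratopism}) this means $\Theta$ maps $\mathrm{Ent}(L')$ into itself. Iterating, $\mathrm{Ent}(L')\supseteq\mathrm{Orb}_\Theta(e)$. Note that the orbit computed in $L'$ is the same set of triples as the orbit computed in $L$, since the action of $\Theta$ on triples does not depend on the Latin square. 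Hence, if $e\in\mathrm{Ent}(L)$ lies in some $\Theta$-completion $L'$, then $L'$ contains all of $\mathrm{Orb}_\Theta(e)$.

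Next, let $P\in\mathrm{CS}_\Theta(L)$, and suppose two distinct entries $e,e'\in\mathrm{Ent}(P)$ lie in the same $\Theta$-orbit of $L$, say $e'=\Theta^k(e)$. Set $Q:=P\setminus\{e'\}$, which is properly contained in $P$. The partial Latin square $Q$ is $\Theta$-completable to $L$ since $Q\subseteq P\subseteq L$. For uniqueness, take any $L'\in\mathcal{L}(\Theta)$ with $Q\subseteq L'$. Since $e\in Q$, the previous step gives $e'=\Theta^k(e)\in\mathrm{Ent}(L')$, so $P\subseteq L'$. Unique $\Theta$-completability of $P$ then forces $L'=L$. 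Thus $Q$ is uniquely $\Theta$-completable, which contradicts the minimality in Definition~\ref{definition_CriticalSet}. Therefore $P$ meets each $\Theta$-orbit in at most one entry.

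For the inequalities, the $\Theta$-orbits partition $\mathrm{Ent}(L)$ because $\Theta$ acts as a permutation of this finite set. Every critical set is a subset of $\mathrm{Ent}(L)$ with at most one entry per orbit, so its size is at most $|\mathrm{Orb}_\Theta(L)|$. In particular this holds for the largest $\Theta$-critical set. The inequality $\mathrm{scs}_\Theta(L)\le\mathrm{lcs}_\Theta(L)$ holds by definition, once we know $\mathrm{CS}_\Theta(L)\neq\emptyset$. This follows as in the proof of Lemma~\ref{lemma_bounds}: $L$ itself is uniquely $\Theta$-completable, so the finite nonempty family of uniquely $\Theta$-completable subsets of $L$ has a minimal element.

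The only step needing care is the closure fact. One must check that the orbit in Definition~\ref{definition_orbit} is given purely by the action of $\Theta$ on triples, so that it is the same inside any $\Theta$-completion $L'$. This is immediate from (\ref{eq_paratopism}), so there is no real obstacle.
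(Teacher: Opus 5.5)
Your proof is correct and follows the same idea as the paper, which justifies the lemma only by noting that every $\Theta$-orbit is determined by any one of its entries. You spell this out: any $\Theta$-completion containing one entry of an orbit contains the whole orbit, so a second entry from the same orbit is redundant and contradicts minimality. You also check that $\mathrm{CS}_\Theta(L)\neq\emptyset$, which the paper leaves implicit.
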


Taking into account the previous lemma, Algorithm \ref{alg_1} describes a procedure to determine all feasible sizes of $\Theta$-critical sets of $L$.

\begin{algorithm}
\caption{Feasible sizes in $\mathrm{CS}_\Theta(L)$.}\label{alg_1}
\begin{algorithmic}[1]
\Procedure{$\text{\em Sizes}$}{$\Theta$, $L$}
\Comment{{\small Input: $\Theta\in S_3\wr S_n$ and $L\in\mathcal{L}(\Theta)$.}}
\State $K:=$\text{ Empty list}
\State $E:=\left\{e_1,\ldots,e_{|\mathrm{Orb}_{\Theta}(L)|}\right\}$ \text{ such that } $\mathrm{Orb}_\Theta(e_i)\neq \mathrm{Orb}_\Theta(e_j)$ \text{ if } $i\neq j$
\For{$k\gets 1, |\mathrm{Orb}_{\Theta}(L)|$}
    \State $E_k:=\{S\subseteq E\colon\, |S|=k\}$
\EndFor
\For{$k\gets 1, |\mathrm{Orb}_{\Theta}(L)|$}
        \For{$S\in E_k$}
            \If{$S\in \mathrm{CS}_\Theta(L)$}
                \State $K\gets K\cup\{k\}$
                \State \textbf{break}
            \EndIf
        \EndFor
\EndFor
\State \textbf{return} $K$
\EndProcedure
\end{algorithmic}
\end{algorithm}

As we have already mentioned in the introductory section, deciding unique completability of a partial Latin square is NP-complete in general \cite{Colbourn84,Colbourn1984}. However, enforcing invariance under non-trivial autoparatopisms $\Theta$ drastically reduces the search space of candidate completions $\mathcal{L}(\Theta)$. It allows us to search over $\Theta$-orbits of entries rather than individual cells. In this regard, for a Latin square $L\in\mathcal{L}(n)$, Algorithm \ref{alg_1} filters subsets of representative entries from the orbit partition $Orb_\Theta(L)$. So, the computational complexity is $2^{|\mathrm{Orb}_{\Theta}(L)|}$ instead of $2^{n^2}$.

At the end of this section, Example \ref{example_upperbound} shows that the upper bound in (\ref{eq_orb}) can indeed be reached. Furthermore, to be in the same $\Theta$-orbit is an equivalence relation among the entries of $L$, so $\mathrm{Orb}_{\Theta}(L)$ is a partition of $\mathrm{Ent}(L)$ and hence, of the cells of $L$. We visually represent this partition by coloring two cells of $L$ with the same color if and only if they correspond to the same $\Theta$-orbit. Of course, if $\Theta$ is the trivial paratopism, there are so many colors as cells. We term this representation a {\em $\Theta$-coloring} of $L$.

\begin{example}\label{example_CriticalSet} We consider the paratopism
\[\Theta:=\left((12);(\mathrm{Id}_3,(123),(12))\right)\in S_3\wr S_3.\]
From (\ref{equation_composition}), we have $o(\Theta)=6$ and the compositions
\[\Theta^2=\left(\mathrm{Id}_3;((123),(123),\mathrm{Id}_3)\right),\hspace{1cm} \Theta^3=\left((12);((123),(132),(12))\right)\]
\[\Theta^4=\left(\mathrm{Id}_3;((132),(132),\mathrm{Id}_3)\right)\hspace{0.4cm}\text{and} \hspace{0.4cm}\Theta^5=\left((12);((132),\mathrm{Id}_3,(12))\right).\]
We claim that
\[L\equiv\begin{array}{|c|c|c|} \hline
1 & 3 & 2 \\ \hline
2 & 1 & 3 \\ \hline
3 & 2 & 1 \\ \hline
\end{array}\in\mathcal{L}\left(\Theta^k\right)\]
for every positive integer $k\leq o(\Theta)$. Since $\mathrm{Apar}(L)$ is a group, it is enough to prove this fact for $k=1$. In this regard, note that $\Theta$ acts on the entries of $L$ as follows.
\[(1,1,1)\xrightarrow{\Theta} (2,1,2) \xrightarrow{\Theta} (2,2,1)\xrightarrow{\Theta} (3,2,2) \xrightarrow{\Theta} (3,3,1) \xrightarrow{\Theta} (1,3,2) \xrightarrow{\Theta} (1,1,1)\]
\[(1,2,3)\xrightarrow{\Theta} (3,1,3) \xrightarrow{\Theta} (2,3,3) \xrightarrow{\Theta} (1,2,3)\]
Based on this action, we have the following $\Theta^k$-colorings of $L$.
\[\begin{array}{ccccccc}
\begin{array}{|c|c|c|}\hline
\cellcolor{red!100}{\color{white} 1} & \cellcolor{blue!100}{\color{white} 3} & \cellcolor{red!100}{\color{white} 2}\\\hline
\cellcolor{red!100}{\color{white} 2} & \cellcolor{red!100}{\color{white} 1} & \cellcolor{blue!100}{\color{white} 3}\\ \hline
\cellcolor{blue!100}{\color{white} 3} & \cellcolor{red!100}{\color{white} 2} & \cellcolor{red!100}{\color{white} 1}\\ \hline
\end{array} & \phantom{\hspace{1cm}}  &  &
\begin{array}{|c|c|c|}\hline
\cellcolor{red!100}{\color{white} 1} & \cellcolor{blue!100}{\color{white} 3} & \cellcolor{green!100}{\color{white} 2}\\\hline
\cellcolor{green!100}{\color{white} 2} & \cellcolor{red!100}{\color{white} 1} & \cellcolor{blue!100}{\color{white} 3}\\ \hline
\cellcolor{blue!100}{\color{white} 3} & \cellcolor{green!100}{\color{white} 2} & \cellcolor{red!100}{\color{white} 1}\\ \hline
\end{array}  & \phantom{\hspace{1cm}}   &  &
\begin{array}{|c|c|c|}\hline
\cellcolor{red!100}{\color{white} 1} & \cellcolor{blue!100}{\color{white} 3} & \cellcolor{green!100}{\color{white} 2}\\\hline
\cellcolor{pink!100}{\color{white} 2} & \cellcolor{green!100}{\color{white} 1} & \cellcolor{orange!100}{\color{white} 3}\\ \hline
\cellcolor{yellow!100}{\color{white} 3} & \cellcolor{red!100}{\color{white} 2} & \cellcolor{pink!100}{\color{white} 1}\\ \hline
\end{array}\\
k\in\{1,5\} &  & & k\in \{2,4\}&  & & k=3
\end{array}\]
Now, we consider the partial Latin squares
\[P_1\equiv\begin{array}{|c|c|c|} \hline
1 & \phantom{3} & \phantom{2} \\ \hline
  &   &   \\ \hline
  &   &   \\ \hline
\end{array} \hspace{1cm} \text{and} \hspace{1cm} P_2\equiv\begin{array}{|c|c|c|} \hline
1 & 3 & \phantom{2} \\ \hline
  &   &   \\ \hline
  &   &   \\ \hline
\end{array}.\]
In addition, let $k\in\{1,3,5\}$ and $l\in\{0,2,4\}$. It is readily verified that $P_1\in\mathrm{CS}_{\Theta^k}(L)\setminus \mathrm{CS}_{\Theta^0}(L)$. Moreover, $P_2\in\mathrm{CS}_{\Theta^l}(L)\setminus \mathrm{CS}_{\Theta^k}(L)$, because $P_1\subset P_2$. In case of being interested in all the $\Theta^k$-critical sets of $L$, the following study of cases arises.
\begin{itemize}
    \item For $k\in\{1,5\}$, every partial Latin square in $\mathcal{PL}(3)$ that is formed by exactly one entry in $\mathrm{Orb}_{\Theta^k}((1,1,1))$ is a $\Theta^k$-critical set of $L$. This is not the case for any subset of entries in $\mathrm{Orb}_{\Theta^k}((1,2,3))$, because the resulting partial Latin square would also be $\Theta^k$-completable to
    
\[\begin{array}{|c|c|c|} \hline
\cellcolor{red!100}{\color{white} 2} & \cellcolor{blue!100}{\color{white} 3} & \cellcolor{red!100}{\color{white} 1}\\\hline
\cellcolor{red!100}{\color{white} 1} & \cellcolor{red!100}{\color{white} 2} & \cellcolor{blue!100}{\color{white} 3}\\ \hline
\cellcolor{blue!100}{\color{white} 3} & \cellcolor{red!100}{\color{white} 1} & \cellcolor{red!100}{\color{white} 2}\\ \hline
\end{array}\in\mathcal{L}\left(\Theta^k\right),\]
for which we have also described a $\Theta^k$-coloring. As a consequence, $\mathrm{scs}_{\Theta^k}(L)=\mathrm{lcs}_{\Theta^k}(L)=1$. 

\vspace{0.2cm}

\item For $k\in\{2,4\}$, we note that every $\Theta^k$-critical set of $L$ is formed by two entries in different orbits. Thus,  $\mathrm{scs}_{\Theta^k}(L)=\mathrm{lcs}_{\Theta^k}(L)=2$.

\vspace{0.2cm}

\item For $k=3$, we observe that every partial Latin square in $\mathcal{PL}(3)$ that is formed by exactly one entry in $\mathrm{Orb}_{\Theta^k}((1,1,1))\cup \mathrm{Orb}_{\Theta^k}((1,3,2))\cup\mathrm{Orb}_{\Theta^k}((2,1,2))$ is a $\Theta^k$-critical set of $L$. Since the symbol appearing in the remaining $\Theta^k$-orbits is mandatorily $3$, we have that $\mathrm{scs}_{\Theta^k}(L)=\mathrm{lcs}_{\Theta^k}(L)=1$.\hfill $\lhd$
\end{itemize}
\end{example}

In Example \ref{example_CriticalSet}, every $\Theta^k$-critical set, with $k\in\{1,5\}$, necessarily contains an entry in $\mathrm{Orb}_{\Theta^k}((1,1,1))$. In general, it is useful to know the existence of these types of orbits for which every critical set must contain one of its entries. We finish this section with some results in this regard. 

Previously, for each paratopism $\Theta:=(\pi;f)\in S_3\wr S_n$, with $f:=(f_1,f_2,f_3)$, and each triple $e:=(e_1,e_2,e_3)\in [n]^3$, we define the following sets.

\begin{align*}\mathrm{Fix}_{\mathrm{row}}(e):& =\{(\pi;f)\in S_3\wr S_n\colon\, f_{\pi(1)}(e_{\pi(1)})=e_1\}\\
\mathrm{Fix}_{\mathrm{col}}(e):& =\{(\pi;f)\in S_3\wr S_n\colon\, f_{\pi(2)}(e_{\pi(2)})=e_2\}\\
\mathrm{Fix}_{\mathrm{sym}}(e):& =\{(\pi;f)\in S_3\wr S_n\colon\, f_{\pi(3)}(e_{\pi(3)})=e_3\}
\end{align*}
In addition, for each Latin square $L\in\mathcal{L}(n)$, we define the subgroup
\[\mathrm{Apar}_{(12)}(L):=\mathrm{Atop}(L)\cup \left\{((12);f)\in \mathrm{Apar}(L)\right\}\leq\mathrm{Apar}(L).\]

\begin{theorem}\label{theorem_bitrade} Let $L\in\mathcal{L}(n)$ and $\Theta:=(\pi;f)\in \mathrm{Apar}_{(12)}(L)$ be such that there exist an entry $e\in\mathrm{Ent}(L)$ and a positive integer $k_0\leq o(\Theta)$ such that
\begin{enumerate}
    \item[{\rm (C1)}] $\Theta\in \mathrm{Fix}_{\mathrm{row}}(e)\setminus \mathrm{Fix}_{\mathrm{col}}(e)$;
    \item[{\rm (C2)}] $\Theta^{k_0}\in \mathrm{Fix}_{\mathrm{col}}(e)$; and
    \item[{\rm (C3)}] $\Theta^{o(\Theta)-k_0+1}\in \mathrm{Fix}_{\mathrm{sym}}(e)$.
\end{enumerate}
Then, every $\Theta$-critical set of $L$ contains an entry in $\mathrm{Orb}_\Theta(e)$. The same happens if the first two conditions are respectively replaced by 
\begin{enumerate}
    \item[{\rm (C1')}] $\Theta\in \mathrm{Fix}_{\mathrm{col}}(e)\setminus \mathrm{Fix}_{\mathrm{row}}(e)$.
    \item[{\rm (C2')}] $\Theta^{k_0}\in \mathrm{Fix}_{\mathrm{row}}(e)$.
\end{enumerate}
\end{theorem}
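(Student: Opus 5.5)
The plan is to argue by contradiction through a Latin bitrade supported on $\mathrm{Orb}_\Theta(e)$. Suppose $P\in\mathrm{CS}_\Theta(L)$ contains no entry of $\mathrm{Orb}_\Theta(e)$. I will construct a Latin square $L'\in\mathcal{L}(\Theta)$ with $L'\neq L$ that agrees with $L$ on every cell outside the cells of $\mathrm{Orb}_\Theta(e)$. Then $P\subseteq L'$, so $P$ is $\Theta$-completable to two different Latin squares. This contradicts $P$ being uniquely $\Theta$-completable, which it is, being a $\Theta$-critical set. The primed version (C1'), (C2') should follow by the same argument with the roles of rows and columns exchanged. Since $\pi\in\{\mathrm{Id}_3,(12)\}$, this exchange is itself a paratopism that normalises $\mathrm{Apar}_{(12)}(L)$, so I expect to treat only the unprimed case.

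\textbf{Step 1 (shape of the orbit).} First I describe the orbit explicitly. Write $e^{(k)}:=\Theta^k(e)$.
\begin{itemize}
\item By (C1), $e^{(1)}$ lies in row $e_1$ but in a column different from $e_2$. Since $\pi$ fixes the third coordinate, $\Theta^k$ acts on symbols through the powers of $f_3$.
\item When $\pi=\mathrm{Id}_3$, the whole orbit lies in row $e_1$.
\item When $\pi=(12)$, the rows and columns of consecutive orbit elements interchange. Then (C2) says the column of $e$ returns to $e_2$ after $k_0$ steps.
\item Condition (C3) is equivalent to $\Theta^{k_0-1}\in\mathrm{Fix}_{\mathrm{sym}}(e)$, because $\Theta^{o(\Theta)}$ is trivial.
\end{itemize}
Combining these, I aim to show two facts about the cell set $C$ of $\mathrm{Orb}_\Theta(e)$. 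First, in every row and every column meeting $C$, the symbols of $L$ on $C$ form the same set. Second, the map $e^{(k)}\mapsto$ (cell of $e^{(k)}$, symbol of $e^{(k+k_0-1)}$) is well defined on $C$. The key consequence is that the entries $e^{(k)}$ and $e^{(k+k_0-1)}$ share a symbol, while $e^{(k)}$ and $e^{(k+k_0)}$ share a column.

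\textbf{Step 2 (the alternative square).} Next I define $L'$ from $L$ by keeping each cell outside $C$ and, for each $k$, placing at the cell of $e^{(k)}$ the symbol of $e^{(k+1)}$, shifted along the orbit by the offset forced in Step 1. I then check three things.
\begin{enumerate}
\item $L'$ is Latin. Within each row and column, the modification only permutes the symbols already used on $C$, by the first fact of Step 1.
\item $L'\neq L$. By (C1), some cell of the orbit receives a symbol different from the one it has in $L$.
\item $\Theta\in\mathrm{Apar}(L')$. The modification is defined equivariantly along the orbit: the new entry at cell $\Theta(c)$ is the $\Theta$-image of the new entry at cell $c$. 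Outside $C$, $\Theta$ preserves $L=L'$.
\end{enumerate}

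\textbf{Main obstacle.} The hard part is Step 1 combined with the Latin property in Step 2: turning the index conditions (C2) and (C3) into the statement that $\mathrm{Orb}_\Theta(e)$ and its shifted copy form a Latin bitrade. I would first work out the isotopic case $\pi=\mathrm{Id}_3$, where the whole orbit sits in row $e_1$. There I would check column by column, using the cycle of $f_2$ through $e_2$ and (C3), that each column symbol removed by the shift is restored by another orbit cell in the same column. If that works, I would handle $\pi=(12)$ by applying the same bookkeeping to $\Theta^2$, which is an isotopism, keeping track of odd and even powers separately.
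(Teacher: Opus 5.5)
Your overall reduction matches the paper's: if $P$ misses $\mathrm{Orb}_\Theta(e)$, exhibit some $L'\in\mathcal{L}(\Theta)\setminus\{L\}$ that agrees with $L$ off the orbit. The paper takes $L'$ from Theorem 3.2 of \cite{Cavenagh2025}. It replaces $\mathrm{Orb}_\Theta(e)$ by the $\Theta$-orbit of $\epsilon:=(e_1,e_2,f_3(e_3))$, i.e.\ it puts at the cell of $\Theta^k(e)$ the symbol of $\Theta^{k+1}(e)$; equivalently, it applies $f_3$ to every symbol on the orbit cells.

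You want to prove this bitrade property yourself, but that is exactly the step you leave open, and your sketched route would not work:
\begin{itemize}
\item The case $\pi=\mathrm{Id}_3$ cannot be checked ``column by column''. The orbit lies in row $e_1$, so each column meets it in one cell and nothing can be ``restored''. In fact (C1)--(C3) are inconsistent there. Lying in one row of $L$ forces the cycles of $e_2$ under $f_2$ and of $e_3$ under $f_3$ to have a common length $\ell$. Then (C2) and (C3) give $\ell\mid k_0$ and $\ell\mid k_0-1$, so $\ell=1$, contradicting (C1).
\item Passing to the isotopism $\Theta^2$ for $\pi=(12)$ fails too. By the previous point no isotopism satisfies (C1)--(C3), and within a row the orbit links $\Theta^{2j}(e)$ with $\Theta^{2j+1}(e)$, so odd and even powers cannot be handled separately.
\item Step 1 contains a false claim. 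For $\pi=(12)$, $\Theta^k(e)$ and $\Theta^{k+k_0}(e)$ share a column only for even $k$; for odd $k$ they share a row.
\item Step 1 also contains an empty claim. Your ``well-defined map'' via $e^{(k+k_0-1)}$ is the identity, since (C3) says precisely $f_3^{k_0-1}(e_3)=e_3$. So ``shifted by the offset forced in Step 1'' defines no alternative square.
\end{itemize}

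The missing idea is local: $\epsilon$ agrees in two coordinates with exactly three entries of $L$, all lying in the orbit:
\begin{itemize}
\item $e$, in the cell;
\item $\Theta(e)$, in row and symbol, by (C1);
\item $\Theta^{k_0}(e)$, in column and symbol, by (C2) and $f_3^{k_0}(e_3)=f_3(e_3)$ from (C3).
\end{itemize}
Transporting by $\Theta^j$ shows that no $\Theta^j(\epsilon)$ clashes with $L\setminus\mathrm{Orb}_\Theta(e)$, and that $\Theta^j(\epsilon)$ occupies the cell of $\Theta^j(e)$.

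To exclude clashes inside the new orbit, first show two parity facts:
\begin{itemize}
\item $k_0$ is odd. If $k_0=2t$, then $\Theta^{2t-1}(e)$ lies in column $e_2$ with symbol $e_3$, so it equals $e$, and then $\Theta^{k_0}(e)=\Theta(e)$ contradicts (C1)--(C2).
\item The orbit length $N$ is even. An odd power of $\Theta$ fixing $e$ would fix $\Theta(e)$ while swapping rows and columns, forcing $f_1(e_1)=e_2$.
\end{itemize}
Then the row--symbol partner in $L$ of $\Theta^j(\epsilon)$ is $\Theta^{j+1}(e)$ for even $j$ and $\Theta^{j+k_0}(e)$ for odd $j$. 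This is a bijection of $\mathbb{Z}_N$, so two new entries sharing a row and a symbol coincide; columns are symmetric. Hence $L'$ is Latin and $\Theta$-invariant, and $L'\neq L$ because $f_3(e_3)\neq e_3$.

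Your ``first fact'' is actually true and, with this $L'$, gives the Latin property at once, but proving it needs the same parity analysis. Also, transposition carries $\mathrm{Apar}_{(12)}(L)$ to $\mathrm{Apar}_{(12)}$ of the transpose of $L$ rather than normalising it. Your reduction of (C1')--(C2') to (C1)--(C2) is nonetheless fine.
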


\begin{proof} If $P\subset L$ does not contain any entry in $\mathrm{Orb}_{\Theta}(e)$, then Theorem 3.2 in \cite{Cavenagh2025} implies that $P$ is $\Theta$-completable to the Latin square $L'\in\mathcal{L}(\Theta)$, where
\[\mathrm{Ent}(L')=\left(\mathrm{Ent}(L)\setminus\mathrm{Orb}_{\Theta}(e)\right)\cup \left\{\Theta^k((e_1,e_2,f_3(e_3)))\colon\,0\leq k<o(\Theta)\right\}.\]
Hence, $P\not\in\mathrm{CS}_\Theta(L)$, and the result holds.    
\end{proof}

\vspace{0.2cm}

Example \ref{example_CriticalSet} illustrates this theorem for $e=(1,1,1)$ and $k_0=5$. It is readily verified that $\Theta$ and $\Theta^5$ satisfy both conditions (C1') and (C2'), and $\Theta$ holds condition (C3). Theorem \ref{theorem_bitrade} also enables us to find an example in which the upper bound in Lemma \ref{lemma_orbit} is reached. 

\begin{example}\label{example_upperbound} We consider the paratopism
\[\Theta:=((12);(\mathrm{Id}_4,(1234),(12)(34)))\in S_3\wr S_4\]
and the Latin square
\[L\equiv\begin{array}{|c|c|c|c|} \hline
\cellcolor{red!100}{\color{white} 1} & \cellcolor{blue!100}{\color{white} 3} & \cellcolor{blue!100}{\color{white} 4} & \cellcolor{red!100}{\color{white} 2}\\\hline
\cellcolor{red!100}{\color{white} 2} & \cellcolor{red!100}{\color{white} 1} & \cellcolor{blue!100}{\color{white} 3} & \cellcolor{blue!100}{\color{white} 4}\\ \hline
\cellcolor{blue!100}{\color{white} 4} & \cellcolor{red!100}{\color{white} 2} & \cellcolor{red!100}{\color{white} 1} & \cellcolor{blue!100}{\color{white} 3}\\ 
\cellcolor{blue!100}{\color{white} 3} & \cellcolor{blue!100}{\color{white} 4} & \cellcolor{red!100}{\color{white} 2} & \cellcolor{red!100}{\color{white} 1}\\ 
\end{array}\in\mathcal{L}\left(\Theta^k\right)\]
for which we describe a $\Theta$-coloring. In particular, $o(\Theta)=8$ and
\[\Theta^7=((12);((1432),\mathrm{Id}_4,(12)(34)))\in S_3\wr S_4.\]
It is readily verified that $\Theta$ and $\Theta^7$ satisfy both conditions (C1') and (C2') for both entries $(1,1,1)$ and $(1,2,3)$. Since $\Theta$ also holds Condition (C3), Theorem \ref{theorem_bitrade} implies that every $\Theta$-critical set of $L$ must contain an entry in $\mathrm{Orb}_{\Theta}((1,1,1))$ and also an entry in $\mathrm{Orb}_{\Theta}((1,2,3))$. Hence, $\mathrm{scs}_\Theta(L)=\mathrm{lcs}_\Theta(L)=|\mathrm{Orb}_\Theta(L)|=2$. \hfill $\lhd$
\end{example}

\vspace{0.2cm}

The set $\mathrm{Orb}_\Theta(e)$ in Theorem \ref{theorem_bitrade} constitutes indeed the set of entries of a {\em Latin trade} of $L$. That is, a partial Latin square $P\subset L$ such that there is another partial Latin square $P'\subset L$ of the same size such that $\mathrm{Ent}(P)\cap\mathrm{Ent}(P')=\emptyset$; and $\left(\mathrm{Ent}(L)\setminus \mathrm{Ent}(P)\right)\cup \mathrm{Ent}(P')$ is the set of entries of a new Latin square. (We refer again to Theorem 3.2 in \cite{Cavenagh2025} for more details.) From now on, we denote by $\tau_\Theta(L)$ the number of distinct Latin trades in the set $\mathrm{Orb}_{\Theta}(L)$ whose entries satisfy the conditions described in Theorem \ref{theorem_bitrade}. The following corollary follows readily from that theorem.

\begin{corollary}\label{corollary_bitrade} Let $L\in\mathcal{L}(n)$ and $\Theta:=(\pi;f)\in \mathrm{Apar}_{(12)}(L)$. Then, 
\[\tau_\Theta(L)\leq \mathrm{scs}_\Theta(L).\] 
\end{corollary}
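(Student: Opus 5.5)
The argument turns Theorem~\ref{theorem_bitrade} into a counting statement, using Lemma~\ref{lemma_orbit} to see that distinct forced orbits contribute distinct entries. Let $O_1,\ldots,O_t$, with $t=\tau_\Theta(L)$, be the distinct $\Theta$-orbits in $\mathrm{Orb}_\Theta(L)$ whose entries satisfy the hypotheses of Theorem~\ref{theorem_bitrade}: conditions (C1), (C2), (C3), or the primed variant (C1'), (C2'), (C3), for some $k_0\leq o(\Theta)$. If $t=0$ there is nothing to prove, since $\mathrm{scs}_\Theta(L)\geq 0$. We assume $t\geq 1$ and fix an arbitrary $\Theta$-critical set $P\in\mathrm{CS}_\Theta(L)$.

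First, for each $i\in\{1,\ldots,t\}$, pick an entry $e^{(i)}\in O_i$ witnessing the conditions, so that $O_i=\mathrm{Orb}_\Theta(e^{(i)})$. Since $\Theta\in\mathrm{Apar}_{(12)}(L)$, Theorem~\ref{theorem_bitrade} applies to $e^{(i)}$. It shows that $P$ contains at least one entry of $O_i$; call it $p_i$.

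Next, because $\mathrm{Orb}_\Theta(L)$ is a partition of $\mathrm{Ent}(L)$ and the orbits $O_1,\ldots,O_t$ are pairwise distinct, they are pairwise disjoint. Hence $p_i\neq p_j$ whenever $i\neq j$, and $|\mathrm{Ent}(P)|\geq t$. Lemma~\ref{lemma_orbit} also gives that $P$ has exactly one entry in each $O_i$, although only ``at least one'' is needed here.

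Since $P$ was an arbitrary $\Theta$-critical set, every $\Theta$-critical set of $L$ has size at least $\tau_\Theta(L)$. In particular this holds for the smallest one, which gives $\tau_\Theta(L)\leq\mathrm{scs}_\Theta(L)$. The set $\mathrm{CS}_\Theta(L)$ is nonempty: $L$ itself is uniquely $\Theta$-completable, so by the minimality argument in the proof of Lemma~\ref{lemma_bounds} it contains a $\Theta$-critical set. Thus $\mathrm{scs}_\Theta(L)$ is well defined.

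The only delicate point is interpreting $\tau_\Theta(L)$ correctly. It counts Latin trades, but these are orbits of the form $\mathrm{Orb}_\Theta(e)$. Two distinct trades counted by $\tau_\Theta(L)$ are therefore distinct orbits, and hence disjoint. This disjointness is what lets the forced entries be counted separately.
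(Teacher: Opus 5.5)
Your proof is correct and is essentially the paper's argument: the paper simply says the corollary ``follows readily'' from Theorem~\ref{theorem_bitrade}. You spell this out by noting that each trade counted by $\tau_\Theta(L)$ is a $\Theta$-orbit that every $\Theta$-critical set must meet, and that distinct orbits are disjoint.
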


\vspace{0.1cm}

Example \ref{example_upperbound} illustrates that this lower bound is reached. In addition, the next result shows that $\tau_\Theta(L)$ is an invariant in the conjugacy class of the paratopism $\Theta$ within the subgroup $\mathrm{Apar}_{(12)}(L)\leq S_3\wr S_n$. (Recall here that two elements $a$ and $b$ in a group $G$ are {\em conjugate} if there is a third element $c\in G$ such that $b=cac^{-1}$.) 

\begin{corollary}\label{corollary_tau}  Let us consider a Latin square $L\in\mathcal{L}(n)$, and two conjugate autoparatopisms $\Theta_1:=(\pi_1;f_1)$ and $\Theta_2:=(\pi_2;f_2)$ in the group $\mathrm{Apar}_{(12)}(L)$. Then $\tau_{\Theta_1}(L)=\tau_{\Theta_2}(L)$.
\end{corollary}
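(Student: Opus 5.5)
The plan is to build an explicit bijection between the $\Theta_1$-orbits and the $\Theta_2$-orbits of $\mathrm{Ent}(L)$ that carries the trades counted by $\tau_{\Theta_1}(L)$ onto those counted by $\tau_{\Theta_2}(L)$. Since $\Theta_1$ and $\Theta_2$ are conjugate in $\mathrm{Apar}_{(12)}(L)$, fix $\Psi:=(\sigma;g)\in\mathrm{Apar}_{(12)}(L)$ with $\Theta_2=\Psi\Theta_1\Psi^{-1}$. Because $\Psi$ is an autoparatopism of $L$, it acts as a permutation of $\mathrm{Ent}(L)$. A direct computation gives $\Theta_2^k(\Psi(e))=\Psi(\Theta_1^k(e))$ for every $k$. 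Hence $\Psi(\mathrm{Orb}_{\Theta_1}(e))=\mathrm{Orb}_{\Theta_2}(\Psi(e))$, and $\Psi$ induces a bijection $\mathrm{Orb}_{\Theta_1}(L)\to\mathrm{Orb}_{\Theta_2}(L)$. We also have $o(\Theta_1)=o(\Theta_2)$, since conjugation is a group automorphism.

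The key step is to check that $\Psi$ transports conditions (C1)--(C3) and (C1')--(C2') of Theorem~\ref{theorem_bitrade}. For $\Theta=(\pi;f)$, the relation $\Theta\in\mathrm{Fix}_{\mathrm{row}}(e)$ just says that $\Theta(e)$ and $e$ have the same first coordinate, and similarly for columns and symbols. Write $\Psi(e)_i$ for the $i$-th coordinate of $\Psi(e)$. Then $\Theta_2^k(\Psi(e))=\Psi(\Theta_1^k(e))$, and the $i$-th coordinate of $\Psi(x)$ is $g_{\sigma(i)}(x_{\sigma(i)})$. Therefore $\Theta_2^k(\Psi(e))$ and $\Psi(e)$ agree in coordinate $i$ if and only if $\Theta_1^k(e)$ and $e$ agree in coordinate $\sigma(i)$. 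Since $\sigma\in\{\mathrm{Id}_3,(12)\}$, the symbol coordinate is always preserved. If $\sigma=\mathrm{Id}_3$, every condition passes unchanged with the same $k_0$. If $\sigma=(12)$, the row and column roles swap, so (C1),(C2) for $e$ under $\Theta_1$ become (C1'),(C2') for $\Psi(e)$ under $\Theta_2$, and conversely. Condition (C3) is preserved in both cases. Because Theorem~\ref{theorem_bitrade} allows either pair of conditions, the orbit $\mathrm{Orb}_{\Theta_2}(\Psi(e))$ satisfies its hypotheses whenever $\mathrm{Orb}_{\Theta_1}(e)$ does.

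Next, $\Psi$ maps Latin trades of $L$ to Latin trades of $L$. If $P\subset L$ with disjoint mate $P'$ and $(\mathrm{Ent}(L)\setminus\mathrm{Ent}(P))\cup\mathrm{Ent}(P')=\mathrm{Ent}(L'')$ for a Latin square $L''$, then applying $\Psi$ gives $(\mathrm{Ent}(L)\setminus\mathrm{Ent}(P^\Psi))\cup\mathrm{Ent}(P'^\Psi)=\mathrm{Ent}(L''^\Psi)$. This works because paratopisms send Latin squares to Latin squares and $L^\Psi=L$. It also preserves sizes and disjointness, and $P'^\Psi\subset L''^\Psi$ is the required mate. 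So the counted trades correspond injectively, $\tau_{\Theta_1}(L)\le\tau_{\Theta_2}(L)$. Applying the same argument to $\Psi^{-1}$, which conjugates $\Theta_2$ to $\Theta_1$ and lies in the subgroup, gives the reverse inequality.

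The main obstacle is the bookkeeping in the coordinate identity: one must verify $\Psi\Theta_1\Psi^{-1}$ acting on $\Psi(e)$ using the composition rule (\ref{equation_composition}) and (\ref{eq_inverse}), and confirm that the index permutation appearing is exactly $\sigma$ rather than $\sigma^{-1}$. Here this distinction is harmless because $\sigma$ is an involution. I would first check the identity in the isotopism case and then handle $\sigma=(12)$ explicitly.
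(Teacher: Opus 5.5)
Your proposal is correct and follows essentially the same route as the paper. You conjugate by $\Psi\in\mathrm{Apar}_{(12)}(L)$ and use $\Theta_2^k(\Psi(e))=\Psi(\Theta_1^k(e))$ with $\sigma(3)=3$ to transfer (C3), while (C1)--(C2) go to themselves or to (C1')--(C2') according as $\sigma=\mathrm{Id}_3$ or $(12)$; the reverse inequality then follows by symmetry. Your extra check that $\Psi$ maps Latin trades to Latin trades is harmless but unnecessary, since any orbit satisfying the hypotheses of Theorem~\ref{theorem_bitrade} is already such a trade.
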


\begin{proof} Since $\Theta_1$ and $\Theta_2$ are conjugate within $\mathrm{Apar}_{(12)}(L)$, there exists a third autoparatopism $\Theta_3:=(\pi_3;f_3)\in\mathrm{Apar}_{(12)}(L)$ such that $\Theta_2=\Theta_3\Theta_1\Theta_3^{-1}$. Let us suppose that conditions (C1--C3) in Theorem \ref{theorem_bitrade} hold for an entry $e\in\mathrm{Ent}(L)$, a positive integer $k_0\leq o(\Theta_1)$, and the autoparatopism $\Theta_1$. (A similar reasoning follows from conditions (C1'-C2') and (C3).) Since $\Theta_3\in\mathrm{Apar}_{(12)}(L)$, we have $\pi_3\in \{\mathrm{Id}_3,\,(12)\}$, so $\pi_3(3)=3$. Since $\Theta_2^m(\Theta_3(e))=\Theta_3(\Theta_1^m(e))$ for $m:=o(\Theta_1)-k_0+1$, and $\Theta_1^m(e)$ agrees with $e$ in the third coordinate by (C3) for $\Theta_1$, it follows that  $\Theta_2^m(\Theta_3(e))$ agree with $\Theta_3(e)$ in the third component as well. That is, Condition (C3) holds for $\Theta_2$, the entry $\Theta_3(e)\in\mathrm{Ent}(L)$ and the positive integer $k_0\leq o(\Theta_2)=o(\Theta_1)$. In addition, both conditions (C1) and (C2) hold if $\pi=\mathrm{Id}_3$, whereas both conditions (C1') and (C2') hold if $\pi=(12)$. Hence, $\tau_{\Theta_1}(L)\leq \tau_{\Theta_2}(L)$. The reciprocal follows similarly and hence, the result holds.   
\end{proof}

\vspace{0.1cm}

This last corollary is useful to determine whether two autoparatopisms in $\mathrm{Apar}_{(12)}(L)$ are conjugate. The following example illustrates this fact.

\begin{example}\label{example_conjugacy} The paratopisms $\Theta_1:=((12); ((45), (12)(3465), (12)(3465)))$ and $\Theta_2:=((12); ((12)(3465), (36), (12)(3465)))$, both of them in $S_3\wr S_6$, are autoparatopisms of the Latin square $L_{6.1}$ described in Figure \ref{Fig_MainClasses}, for which we have the following $\Theta_1$- and $\Theta_2$-colorings.

\[{\scriptsize \begin{array}{ccc}
\begin{array}{|c|c|c|c|c|c|}  \hline
\cellcolor{red!100}{\color{white} 1} & \cellcolor{red!100}{\color{white} 2} & \cellcolor{blue!100}{\color{white} 3} & \cellcolor{green!100}{\color{white} 4} & \cellcolor{pink!100}{\color{white} 5} & \cellcolor{orange!100}{\color{white} 6} \\ \hline
\cellcolor{red!100}{\color{white} 2} & \cellcolor{red!100}{\color{white} 1} & \cellcolor{pink!100}{\color{white} 4} & \cellcolor{orange!100}{\color{white} 3} & \cellcolor{blue!100}{\color{white} 6} & \cellcolor{green!100}{\color{white} 5} \\ \hline
\cellcolor{pink!100}{\color{white} 3} & \cellcolor{blue!100}{\color{white} 5} & \cellcolor{yellow!100}{\color{white} 6} & \cellcolor{brown!100}{\color{white} 1} & \cellcolor{yellow!100}{\color{white} 4} & \cellcolor{gray!100}{\color{white} 2} \\ \hline
\cellcolor{blue!100}{\color{white} 4} & \cellcolor{pink!100}{\color{white} 6} & \cellcolor{yellow!100}{\color{white} 5} & \cellcolor{gray!100}{\color{white} 2} & \cellcolor{yellow!100}{\color{white} 3} & \cellcolor{brown!100}{\color{white} 1} \\ \hline
\cellcolor{orange!100}{\color{white} 5} & \cellcolor{green!100}{\color{white} 3} & \cellcolor{gray!100}{\color{white} 1} & \cellcolor{purple!100}{\color{white} 6} & \cellcolor{brown!100}{\color{white} 2} & \cellcolor{purple!100}{\color{white} 4} \\ \hline
\cellcolor{green!100}{\color{white} 6} & \cellcolor{orange!100}{\color{white} 4} & \cellcolor{brown!100}{\color{white} 2} & \cellcolor{purple!100}{\color{white} 5} & \cellcolor{gray!100}{\color{white} 1} & \cellcolor{purple!100}{\color{white} 3} \\ \hline
\end{array} & \phantom{\hspace{2cm}} & 
\begin{array}{|c|c|c|c|c|c|}  \hline
\cellcolor{red!100}{\color{white} 1} & \cellcolor{red!100}{\color{white} 2} & \cellcolor{blue!100}{\color{white} 3} & \cellcolor{green!100}{\color{white} 4} & \cellcolor{pink!100}{\color{white} 5} & \cellcolor{orange!100}{\color{white} 6} \\ \hline
\cellcolor{red!100}{\color{white} 2} & \cellcolor{red!100}{\color{white} 1} & \cellcolor{pink!100}{\color{white} 4} & \cellcolor{orange!100}{\color{white} 3} & \cellcolor{blue!100}{\color{white} 6} & \cellcolor{green!100}{\color{white} 5} \\ \hline
\cellcolor{green!100}{\color{white} 3} & \cellcolor{orange!100}{\color{white} 5} & \cellcolor{yellow!100}{\color{white} 6} & \cellcolor{brown!100}{\color{white} 1} & \cellcolor{gray!100}{\color{white} 4} & \cellcolor{brown!100}{\color{white} 2} \\ \hline
\cellcolor{orange!100}{\color{white} 4} & \cellcolor{green!100}{\color{white} 6} & \cellcolor{gray!100}{\color{white} 5} & \cellcolor{brown!100}{\color{white} 2} & \cellcolor{yellow!100}{\color{white} 3} & \cellcolor{brown!100}{\color{white} 1} \\ \hline
\cellcolor{blue!100}{\color{white} 5} & \cellcolor{pink!100}{\color{white} 3} & \cellcolor{purple!100}{\color{white} 1} & \cellcolor{gray!100}{\color{white} 6} & \cellcolor{purple!100}{\color{white} 2} & \cellcolor{yellow!100}{\color{white} 4} \\ \hline
\cellcolor{pink!100}{\color{white} 6} & \cellcolor{blue!100}{\color{white} 4} & \cellcolor{purple!100}{\color{white} 2} & \cellcolor{yellow!100}{\color{white} 5} & \cellcolor{purple!100}{\color{white} 1} & \cellcolor{gray!100}{\color{white} 3} \\ \hline
\end{array}\\
\Theta_1\text{-coloring} & & \Theta_2\text{-coloring} 
\end{array}}\]
The only Latin trade of $L$ whose entries satisfy the conditions described in Theorem \ref{theorem_bitrade} for $\Theta_1$ is formed by the entries of $\mathrm{Orb}_{\Theta_1}((1,1,1))$. In addition, there are three Latin trades of $L$ satisfying these conditions for $\Theta_2$. They are respectively formed by the entries of $\mathrm{Orb}_{\Theta_2}((1,1,1))$, $\mathrm{Orb}_{\Theta_2}((3,4,1))$ and $\mathrm{Orb}_{\Theta_2}((5,3,1))$. Hence, Corollary \ref{corollary_tau} implies that $\Theta_1$ and $\Theta_2$ are not conjugate within $\mathrm{Apar}_{(12)}(L)$. By the way, after implementing Algorithm \ref{alg_1} in both cases, we obtain
\[\mathrm{scs}_{\Theta_1}(L_{6.1})=\mathrm{lcs}_{\Theta_1}(L_{6.1})=4 \hspace{1cm}\text{and}\hspace{1cm} \mathrm{scs}_{\Theta_2}(L_{6.1})=\mathrm{lcs}_{\Theta_2}(L_{6.1})=5.\]
We illustrate here a pair of $\Theta_1$- and $\Theta_2$-critical sets of $L$.
\[{\scriptsize \begin{array}{ccc}
\begin{array}{|c|c|c|c|c|c|} \hline
1&\phantom{2}&3&4&\phantom{5}&\phantom{6}\\\hline
&&&&&\\\hline
&&&1&&\\\hline
&&&&&\\\hline
&&&&&\\\hline
&&&&&\\\hline
\end{array} & \phantom{\hspace{2cm}} &
\begin{array}{|c|c|c|c|c|c|} \hline
1&\phantom{2}&3&4&\phantom{5}&\phantom{6}\\\hline
&&&&&\\\hline
&&&1&&\\\hline
&&&&&\\\hline
&&1&&&\\\hline
&&&&&\\\hline
\end{array}\\
\Theta_1\text{-critical set} & & \Theta_2\text{-critical set} 
\end{array}}
\]
\hfill $\lhd$
\end{example}

\section{Conjugacy classes of autoparatopisms}\label{sec:conjugacy}

In what follows, we prove that the smallest and largest sizes of critical sets based on a given autoparatopism $\Theta \in S_3\wr S_n$ of a Latin square $L\in\mathcal{L}(n)$ depend only on the conjugacy class of $\Theta$ within the subgroup $\mathrm{Apar}(L)\leq S_3\wr S_n$ and the main class of $L$. First, we show that every paratopism $\Theta_0\in S_3\wr S_n$ constitutes a one-to-one correspondence between the autoparatopism group of any partial Latin square $P\in\mathcal{PL}(n)$ and that of $P^{\Theta_0}$, which in turn preserves the conjugacy classes of their respective paratopisms.

\begin{lemma}\label{lemma_uc} If $P\in\mathcal{PL}(n)$ and $\Theta_0\in S_3\wr S_n$, then
\[\mathrm{Apar}\left(P^{\Theta_0}\right)=\left\{\Theta_0\Theta_1\Theta_0^{-1}\colon\,\Theta_1\in \mathrm{Apar}(P)\right\}.\]
\end{lemma}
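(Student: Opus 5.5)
The natural plan is a double inclusion. Both rest on the fact that the map $P\mapsto P^\Theta$ is a group action of $S_3\wr S_n$ on $\mathcal{PL}(n)$, with the order of composition fixed by (\ref{equation_composition}).

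The first step is to record the action identity itself: for all $\Theta,\Psi\in S_3\wr S_n$ and every $P\in\mathcal{PL}(n)$,
\[\left(P^{\Psi}\right)^{\Theta}=P^{\Theta\Psi},\]
where $\Theta\Psi$ is the product in (\ref{equation_composition}). This is consistent with how the paper already uses composition: it writes $\Theta(e)$ as a map on entries, uses $L^\Theta=L$, and uses $\left(P^\Theta\right)^{\Theta^{-1}}=P$ in the proof of Lemma \ref{lemma_powers}. The check is done entrywise via (\ref{eq_paratopism}), tracking the twisted triple $g^\pi$.

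I expect this verification to be the main (purely bookkeeping) obstacle, because the convention $(\rho;g)(\pi;f)=(\pi\rho;g^\pi f)$ must match "apply $(\pi;f)$ first, then $(\rho;g)$". If the indices turn out to produce the opposite order, the lemma still holds: I would simply run the argument with $\Theta_0^{-1}\Theta_1\Theta_0$ and note that conjugation by $\Theta_0$ and by $\Theta_0^{-1}$ give the same bijection up to relabeling. The robust fact needed is only that $P\mapsto P^\Theta$ is a (left or right) action, together with the identity $(P^\Theta)^{\Theta^{-1}}=P$, which the paper already invokes.

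With the action identity in hand, the inclusion "$\supseteq$" follows directly. Let $\Theta_1\in\mathrm{Apar}(P)$. Then
\[\left(P^{\Theta_0}\right)^{\Theta_0\Theta_1\Theta_0^{-1}}=P^{\Theta_0\Theta_1\Theta_0^{-1}\Theta_0}=P^{\Theta_0\Theta_1}=\left(P^{\Theta_1}\right)^{\Theta_0}=P^{\Theta_0},\]
so $\Theta_0\Theta_1\Theta_0^{-1}\in\mathrm{Apar}(P^{\Theta_0})$.

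For the inclusion "$\subseteq$", take $\Psi\in\mathrm{Apar}(P^{\Theta_0})$ and set $\Theta_1:=\Theta_0^{-1}\Psi\Theta_0$. By the same computation,
\[P^{\Theta_1}=\left(\left(P^{\Theta_0}\right)^{\Psi}\right)^{\Theta_0^{-1}}=\left(P^{\Theta_0}\right)^{\Theta_0^{-1}}=P,\]
so $\Theta_1\in\mathrm{Apar}(P)$ and $\Psi=\Theta_0\Theta_1\Theta_0^{-1}$.

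Finally, I would add a remark to support the sentence preceding the lemma. The map $\Theta_1\mapsto\Theta_0\Theta_1\Theta_0^{-1}$ is a group isomorphism, being an inner automorphism of $S_3\wr S_n$ restricted to $\mathrm{Apar}(P)$. It therefore maps conjugacy classes of $\mathrm{Apar}(P)$ onto conjugacy classes of $\mathrm{Apar}(P^{\Theta_0})$.
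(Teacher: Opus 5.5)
Your proposal is correct and is essentially the paper's proof. Both prove the two inclusions from the action identity $(P^{\Psi})^{\Theta}=P^{\Theta\Psi}$, and obtain the reverse inclusion by setting $\Theta_1:=\Theta_0^{-1}\Psi\Theta_0$.

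Your fallback remark is not needed and is not right as stated. Reading $\pi\rho$ as $\pi\circ\rho$, the product in (\ref{equation_composition}) does act as $(\pi;f)$ followed by $(\rho;g)$, so the identity holds. Under a right action you would instead get $\Theta_0^{-1}\mathrm{Apar}(P)\Theta_0$, which in general is a different subgroup from $\Theta_0\mathrm{Apar}(P)\Theta_0^{-1}$.
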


\begin{proof} First, if $\Theta_1\in\mathrm{Apar}(P)$, then $\Theta_0\Theta_1\Theta_0^{-1}\in\mathrm{Apar}(P^\Theta)$, because
\[\left(P^{\Theta_0}\right)^{\Theta_0\Theta_1\Theta_0^{-1}}=\left(\left(\left(P^{\Theta_0}\right)^{\Theta_0^{-1}}\right)^{\Theta_1}\right)^{\Theta_0}=\left(P^{\Theta_1}\right)^{\Theta_0}=P^{\Theta_0}.\]
Similarly, if $\Theta_2\in \mathrm{Apar}\left(P^{\Theta_0}\right)$, then $\Theta_0^{-1}\Theta_2\Theta_0 \in \mathrm{Apar}(P)$. Thus, the result holds because $\Theta_2=\Theta_0(\Theta_0^{-1}\Theta_2\Theta_0)\Theta_0^{-1}$.
\end{proof}

\vspace{0.1cm}

Then, we prove that both values $\mathrm{scs}_\Theta(L)$ and $\mathrm{lcs}_\Theta(L)$ depend only on the conjugacy class of $\Theta$ within $\mathrm{Apar}(L)$.

\begin{proposition}\label{proposition_conjugate} Let $L\in\mathcal{L}(n)$. If two autoparatopisms $\Theta_1,\,\Theta_2\in\mathrm{Apar}(L)$ are conjugate within $\mathrm{Apar}(L)$, then there is a one-to-one correspondence between both sets $\mathrm{CS}_{\Theta_1}(L)$ and $\mathrm{CS}_{\Theta_2}(L)$ so that
\[\mathrm{scs}_{\Theta_1}(L)=\mathrm{scs}_{\Theta_2}(L)\hspace{1cm}\text{and}\hspace{1cm}\mathrm{lcs}_{\Theta_1}(L)=\mathrm{lcs}_{\Theta_2}\left(L\right).\]
\end{proposition}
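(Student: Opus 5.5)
The plan is to write down the correspondence explicitly and check that it is a bijection preserving sizes. Since $\Theta_1$ and $\Theta_2$ are conjugate within $\mathrm{Apar}(L)$, there is an autoparatopism $\Theta_0\in\mathrm{Apar}(L)$ with $\Theta_2=\Theta_0\Theta_1\Theta_0^{-1}$. The candidate map is
\[\Phi\colon\mathrm{CS}_{\Theta_1}(L)\to\mathrm{CS}_{\Theta_2}(L),\qquad P\mapsto P^{\Theta_0}.\]
Because $\Theta_0$ acts injectively on triples, as in (\ref{eq_paratopism}), we have $|P^{\Theta_0}|=|P|$. Moreover, $P\subseteq L$ gives $P^{\Theta_0}\subseteq L^{\Theta_0}=L$. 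The inverse candidate is $Q\mapsto Q^{\Theta_0^{-1}}$, using $\Theta_1=\Theta_0^{-1}\Theta_2\Theta_0$. So once $\Phi$ is shown to be well defined, the symmetric argument gives bijectivity. Equality of $\mathrm{scs}$ and $\mathrm{lcs}$ then follows because $\Phi$ preserves sizes.

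\textbf{Step 1: $\Theta_0$ maps $\mathcal{L}(\Theta_1)$ onto $\mathcal{L}(\Theta_2)$.} For any Latin square $L'$, Lemma~\ref{lemma_uc} gives $\mathrm{Apar}(L'^{\Theta_0})=\Theta_0\,\mathrm{Apar}(L')\,\Theta_0^{-1}$. Hence
\[\Theta_1\in\mathrm{Apar}(L')\iff\Theta_2\in\mathrm{Apar}(L'^{\Theta_0}).\]
Thus $L'\mapsto L'^{\Theta_0}$ is a bijection from $\mathcal{L}(\Theta_1)$ to $\mathcal{L}(\Theta_2)$, with inverse given by $\Theta_0^{-1}$.

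\textbf{Step 2: transfer unique completability.} I would show that $P$ is uniquely $\Theta_1$-completable to $L$ if and only if $P^{\Theta_0}$ is uniquely $\Theta_2$-completable to $L^{\Theta_0}=L$. Containment is preserved under the action: $P\subseteq L'$ if and only if $P^{\Theta_0}\subseteq L'^{\Theta_0}$. Combined with Step 1, the $\Theta_2$-completions of $P^{\Theta_0}$ are exactly the images under $\Theta_0$ of the $\Theta_1$-completions of $P$. So the two sets of completions have the same cardinality, and uniqueness transfers. This is the same argument as in Lemma~\ref{lemma_powers}, with $\Theta_0$ no longer assumed to be the autoparatopism defining the completability notion.

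\textbf{Step 3: transfer minimality.} Proper containment is preserved in both directions by $\Theta_0$ and $\Theta_0^{-1}$. Suppose some $Q\subset P^{\Theta_0}$ were uniquely $\Theta_2$-completable to $L$. Then $Q^{\Theta_0^{-1}}\subset P$ would be uniquely $\Theta_1$-completable to $L$, by Step 2 applied with $\Theta_0^{-1}$ and the roles of $\Theta_1,\Theta_2$ swapped. This contradicts criticality of $P$, mirroring Proposition~\ref{proposition_powers}.

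The only delicate point is the bookkeeping of the action in Step 1, namely that $(P^{\Theta})^{\Theta'}$ equals $P$ acted on by the appropriate product in the composition convention (\ref{equation_composition}). Lemma~\ref{lemma_uc} already encapsulates this, so I will rely on it rather than recompute.
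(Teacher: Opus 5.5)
Your proposal is correct and follows the paper's proof, which takes $\Theta_3\in\mathrm{Apar}(L)$ with $\Theta_2=\Theta_3\Theta_1\Theta_3^{-1}$ and cites Lemma~\ref{lemma_uc} and Definition~\ref{definition_CriticalSet}. You spell out what the paper leaves implicit: the size-preserving map $P\mapsto P^{\Theta_0}$, the transfer of $\Theta_1$-completions to $\Theta_2$-completions using $L^{\Theta_0}=L$, and the transfer of minimality.
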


\begin{proof} Since $\Theta_1$ and $\Theta_2$ are conjugate within $\mathrm{Apar}(L)$, there is a paratopism $\Theta_3\in\mathrm{Apar}(L)$ such that $\Theta_2=\Theta_3\Theta_1\Theta_3^{-1}$. Then, the result follows from Lemma \ref{lemma_uc} and Definition \ref{definition_CriticalSet}.    
\end{proof}

\vspace{0.1cm}

Example \ref{example_conjugacy} illustrates that considering in Proposition \ref{proposition_conjugate} the conjugacy within the autoparatopism group is mandatory. Being conjugate within $\mathrm{Apar}(L)$ is an equivalence relation among the autoparatopisms of $L$, which we denote by $\sim$. By Proposition \ref{proposition_conjugate}, if we want to compute all feasible sizes of critical sets based on autoparatopisms of $L$, then it suffices to implement Algorithm \ref{alg_1} on a representative paratopism of each class in the quotient group $\mathrm{Apar}(L)/\sim$. In addition, if one is interested in computing these values for all Latin squares in $\mathcal{L}(n)$, the following results show that it suffices to consider a representative Latin square of each main class.

\begin{proposition}\label{proposition_uc_orbits} If two Latin squares $L_1,L_2\in\mathcal{L}(n)$ are paratopic, then there is a one-to-one correspondence between their respective sets of critical sets based on their autoparatopisms.
\end{proposition}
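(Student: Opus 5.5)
Let $\Theta_0\in S_3\wr S_n$ be a paratopism with $L_2=L_1^{\Theta_0}$. The plan is to show that the map $P\mapsto P^{\Theta_0}$ sends $\mathrm{CS}_{\Theta}(L_1)$ bijectively onto $\mathrm{CS}_{\Theta_0\Theta\Theta_0^{-1}}(L_2)$ for every $\Theta\in\mathrm{Apar}(L_1)$. By Lemma \ref{lemma_uc}, $\Theta\mapsto\Theta_0\Theta\Theta_0^{-1}$ is already a bijection $\mathrm{Apar}(L_1)\to\mathrm{Apar}(L_2)$. Gluing these bijections over all $\Theta$ then gives the required correspondence between the full collections
\[\bigcup_{\Theta\in\mathrm{Apar}(L_1)}\{\Theta\}\times\mathrm{CS}_\Theta(L_1)\quad\text{and}\quad\bigcup_{\Theta'\in\mathrm{Apar}(L_2)}\{\Theta'\}\times\mathrm{CS}_{\Theta'}(L_2).\]

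\textbf{Transporting completions.} Fix $\Theta\in\mathrm{Apar}(L_1)$ and put $\Theta':=\Theta_0\Theta\Theta_0^{-1}$. Applying Lemma \ref{lemma_uc} to an arbitrary Latin square $L$ shows that $\Theta\in\mathrm{Apar}(L)$ if and only if $\Theta'\in\mathrm{Apar}(L^{\Theta_0})$. Since $L\mapsto L^{\Theta_0}$ is a bijection of $\mathcal{L}(n)$ with inverse $L\mapsto L^{\Theta_0^{-1}}$, it restricts to a bijection $\mathcal{L}(\Theta)\to\mathcal{L}(\Theta')$.

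\textbf{Transporting containment.} Next we need, for partial Latin squares $P,Q$, that $P\subseteq Q$ if and only if $P^{\Theta_0}\subseteq Q^{\Theta_0}$, and that proper containment is likewise preserved. This holds because, by (\ref{eq_paratopism}), $\Theta_0$ acts as a bijection on triples in $[n]^3$, so $\mathrm{Ent}(P^{\Theta_0})$ is the image of $\mathrm{Ent}(P)$ under an injective map. In particular, sizes are preserved.

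\textbf{Transporting critical sets.} Combining the two steps, $P$ is $\Theta$-completable to $L$ exactly when $P^{\Theta_0}$ is $\Theta'$-completable to $L^{\Theta_0}$. Hence $P$ is uniquely $\Theta$-completable to $L_1$ if and only if $P^{\Theta_0}$ is uniquely $\Theta'$-completable to $L_2$. Minimality with respect to proper containment transfers in the same way: if $Q\subset P^{\Theta_0}$ were uniquely $\Theta'$-completable, then $Q^{\Theta_0^{-1}}\subset P$ would be uniquely $\Theta$-completable. This gives $P\in\mathrm{CS}_\Theta(L_1)\iff P^{\Theta_0}\in\mathrm{CS}_{\Theta'}(L_2)$, with inverse map $P\mapsto P^{\Theta_0^{-1}}$. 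As a byproduct, $\mathrm{scs}_\Theta(L_1)=\mathrm{scs}_{\Theta'}(L_2)$ and $\mathrm{lcs}_\Theta(L_1)=\mathrm{lcs}_{\Theta'}(L_2)$.

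\textbf{Main obstacle.} The delicate point is the action convention. One must check that $(P^{\Theta_0})^{\Theta_1}=P^{\Theta_1\Theta_0}$ or $P^{\Theta_0\Theta_1}$ consistently with the composition rule (\ref{equation_composition}), and hence that the conjugate is $\Theta_0\Theta\Theta_0^{-1}$ rather than $\Theta_0^{-1}\Theta\Theta_0$. I would sidestep this by relying only on Lemma \ref{lemma_uc} as stated, together with the fact that $P\mapsto P^{\Theta_0}$ is invertible with inverse induced by $\Theta_0^{-1}$. Whichever conjugate appears, it still yields a bijection between the two autoparatopism groups, which is all the argument needs.
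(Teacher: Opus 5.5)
Your proposal is correct and follows the paper's own argument. The paper fixes $\Theta_0$ with $L_1^{\Theta_0}=L_2$ and invokes Lemma~\ref{lemma_uc} together with Definition~\ref{definition_CriticalSet} to get $P\in\mathrm{CS}_{\Theta}(L_1)$ if and only if $P^{\Theta_0}\in\mathrm{CS}_{\Theta_0\Theta\Theta_0^{-1}}(L_2)$; you do the same but spell out the transport of $\mathcal{L}(\Theta)$, of containment, and of minimality that the paper leaves implicit. Your concern about the action convention is unnecessary, since the composition rule (\ref{equation_composition}) does give $(P^{\Theta_0})^{\Theta_1}=P^{\Theta_1\Theta_0}$, which is exactly what the proof of Lemma~\ref{lemma_uc} uses.
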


\begin{proof} Since $L_1$ and $L_2$ are paratopic, there is a paratopism $\Theta_0\in S_3\wr S_n$ such that $L_1^{\Theta_0}=L_2$. Then, the result holds because Lemma \ref{lemma_uc}, together with Definition \ref{definition_CriticalSet}, implies that $P\in\mathrm{CS}_{\Theta_1}(L_1)$ for some $\Theta_1\in \mathrm{Apar}(L_1)$ if and only if $P^{\Theta_0}\in\mathrm{CS}_{\Theta_0\Theta_1\Theta_0^{-1}}(L_2)$ for some $\Theta_1\in \mathrm{Apar}(L_1)$.
\end{proof}

\begin{theorem}\label{theorem_Wanless} If $L\in\mathcal{L}(\Theta)$, then there is a paratopism $\Theta_0\in S_3\wr S_n$ such that the paratopism $\Theta_0\Theta\Theta_0^{-1}$ is either an isotopism or a paratopism of one of the following two types.
\begin{itemize}
    \item {\bf Type I}: $\left((12);(\mathrm{Id}_n,f_2,f_3)\right)\in S_3\wr S_n$.
    \item {\bf Type II}: $\left((123);(\mathrm{Id}_n,\mathrm{Id}_n,f_3)\right)\in S_3\wr S_n$.
\end{itemize}
In particular, there is a one-to-one correspondence between the sets $\mathrm{CS}_\Theta(L)$ and $\mathrm{CS}_{\Theta_0\Theta\Theta_0^{-1}}\left(L^{\Theta_0}\right)$.
\end{theorem}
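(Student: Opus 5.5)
The approach is to conjugate $\Theta=(\pi;f)$ by suitable paratopisms, splitting into cases according to the conjugacy class of $\pi$ in $S_3$: the identity, a transposition, or a $3$-cycle. The composition rule (\ref{equation_composition}) gives
\[
\Theta_0\Theta\Theta_0^{-1}=(\sigma;g)(\pi;f)\left(\sigma^{-1};(g^{-1})^{\sigma^{-1}}\right).
\]
I will first compute its $S_3$-component, which is a conjugate $\sigma^{-1}\pi\sigma$ of $\pi$ (up to the orientation convention in the paper). The triple component is then an explicit expression of the form $h_i=g_{a}f_{b}g_{c}^{-1}$ for suitable indices determined by $\sigma$ and $\pi$. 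If $\pi=\mathrm{Id}_3$, then $\Theta$ is already an isotopism and I take $\Theta_0$ trivial.

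\textbf{Transposition case.} First choose a parastrophism $\Theta_0'=(\sigma;(\mathrm{Id}_n,\mathrm{Id}_n,\mathrm{Id}_n))$ so that the conjugate has $S_3$-component $(12)$. This is possible because all transpositions are conjugate in $S_3$. We may therefore assume $\Theta=((12);(f_1,f_2,f_3))$. Next conjugate by an isotopism $(\mathrm{Id}_3;(g_1,g_2,g_3))$. By (\ref{eq_paratopism}), the first coordinate of $\Theta(e)$ is $f_2(e_2)$ and the second is $f_1(e_1)$. A direct computation then gives the new triple
\[
\left(g_1f_2g_2^{-1},\; g_2f_1g_1^{-1},\; g_3f_3g_3^{-1}\right),
\]
possibly with indices swapped according to the convention $g^\pi$; I will check this carefully. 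Taking $g_2=\mathrm{Id}_n$ and $g_1=f_2^{-1}$ (or $g_1=f_1$ if the indices are swapped) kills the first component. This yields Type I, with $f_2'=f_1f_2$ (in some order) and $f_3$ conjugated.

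\textbf{$3$-cycle case.} After conjugating by a parastrophism, assume $\pi=(123)$. Conjugating by an isotopism $(g_1,g_2,g_3)$ produces components of the form $g_{i}f_{\pi(i)}g_{\pi(i)}^{-1}$, again modulo the indexing convention. Set $g_1=\mathrm{Id}_n$ and solve successively for $g_2$ and $g_3$ so that the first two components become trivial. The third component is then the forced product $f_1f_2f_3$ (in cyclic order), which gives Type II.

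\textbf{Final assertion.} $L^{\Theta_0}\in\mathcal{L}(\Theta_0\Theta\Theta_0^{-1})$ by Lemma \ref{lemma_uc}. The correspondence $P\mapsto P^{\Theta_0}$ preserves containment and unique completability, since $\Theta_0$ maps $\mathcal{L}(\Theta)$ bijectively onto $\mathcal{L}(\Theta_0\Theta\Theta_0^{-1})$, again via Lemma \ref{lemma_uc}. This is exactly the argument of Proposition \ref{proposition_uc_orbits}.

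The main obstacle is the bookkeeping of the composition convention ($g^\pi$ indexing and $\pi\rho$ order). It determines precisely which components can be killed, and whether the $S_3$-part can be normalized to $(12)$ and $(123)$ rather than $(13)$ or $(132)$. I would first verify the conjugation formula on the paper's Example \ref{example_CriticalSet} to fix the convention, and then solve the equations.
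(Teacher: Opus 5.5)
Your proposal is correct. For the existence of $\Theta_0$ it takes a different route from the paper. The paper does no computation at that step: it invokes Theorem~2.2 of Mendis and Wanless~\cite{Mendis2016} for the normal form. It then uses Proposition~\ref{proposition_uc_orbits} for the correspondence, and your last paragraph is exactly that second step, via Lemma~\ref{lemma_uc}.

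You instead derive the normal form by hand, conjugating first by a parastrophism and then by an isotopism. The convention bookkeeping you were worried about works out with the paper's rules:
\begin{itemize}
\item Conjugating $(\pi;f)$ by $(\sigma;(\mathrm{Id}_n,\mathrm{Id}_n,\mathrm{Id}_n))$ gives $(\sigma^{-1}\pi\sigma;(f_{\sigma(1)},f_{\sigma(2)},f_{\sigma(3)}))$. So the $S_3$-part can indeed be normalized to $(12)$ or $(123)$.
\item Conjugating by $(\mathrm{Id}_3;(g_1,g_2,g_3))$ gives $(\pi;h)$ with $h_j=g_{\pi^{-1}(j)}f_jg_j^{-1}$.
\item For $\pi=(12)$, the choice $g=(f_1,\mathrm{Id}_n,\mathrm{Id}_n)$ produces $((12);(\mathrm{Id}_n,f_1f_2,f_3))$.
\item For $\pi=(123)$, the choice $g=(\mathrm{Id}_n,f_2,f_1^{-1})$ produces $((123);(\mathrm{Id}_n,\mathrm{Id}_n,f_2f_3f_1))$.
\end{itemize}

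What your route buys is a self-contained argument with an explicit $\Theta_0$ and explicit normal-form components. It also makes clear that the hypothesis $L\in\mathcal{L}(\Theta)$ plays no role in the existence of $\Theta_0$. That existence is a purely group-theoretic fact about $S_3\wr S_n$, and $L$ matters only for the correspondence between $\mathrm{CS}_\Theta(L)$ and $\mathrm{CS}_{\Theta_0\Theta\Theta_0^{-1}}(L^{\Theta_0})$. The paper's citation is shorter but leaves the computation to the literature.
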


\begin{proof} Theorem 2.2 in \cite{Mendis2016} implies the existence of the paratopism $\Theta_0$. Then, the mentioned correspondence follows from Proposition \ref{proposition_uc_orbits}.
\end{proof}

From now on, we denote by $\mathfrak{P}(n)$ the subset of $S_3\wr S_n$ that is formed by isotopisms and paratopisms of types I or II. Algorithm \ref{alg_2} implements both Proposition \ref{proposition_conjugate} and Theorem \ref{theorem_Wanless} to determine the sets of feasible sizes of critical sets based on the quotient group $\mathrm{Apar}(L)/\sim$ for a representative Latin square $L$ of a given main class. It uses the procedure {\em Sizes} described in Algorithm \ref{alg_1}.

\begin{algorithm}
\caption{Feasible sizes of critical sets based on $\mathrm{Apar}(L)/\sim$.}\label{alg_2}
\begin{algorithmic}[1]
\Procedure{$\text{\em MainSizes}$}{$L$}
\Comment{{\small Input: $L\in\mathcal{L}(n)$.}}
\State $A:=\mathrm{Apar}(L)/\sim$
\State $K:=\emptyset$
\For{$\Theta\in A$}
    \If{$\Theta\in \mathfrak{P}(n)$}
       \State $K\gets K\cup\{\Theta,\,\text{\em Sizes}(\Theta,L)\}$
    \Else
       \State $K\gets K\cup\left\{\Theta,\,\text{\em Sizes}\left(\Theta_0\Theta\Theta_0^{-1},L^{\Theta_0}\right)\right\}$,\, \text{ where } $\Theta_0\Theta\Theta_0^{-1}\in \mathfrak{P}(n)$
    \EndIf
\EndFor
\State \textbf{return} $K$
\EndProcedure
\end{algorithmic}
\end{algorithm}

\newpage

We have implemented Algorithm \ref{alg_2} in the {\sc GAP} system {\em (Groups, Algorithms, Programming)} \cite{GAP} to determine all feasible sizes of critical sets based on paratopisms of Latin squares of order $3\leq n\leq 6$. (The case $n<3$ is trivial.) The results are shown in Appendix \ref{secA}. In all tables therein, the column $L$ refers to the main class of the Latin square. Their representatives are shown in Figure \ref{Fig_MainClasses}. The column {\em class} refers to the conjugacy class of the autoparatopism. The columns {\em scs} and {\em lcs} show, respectively, the smallest and largest values of any critical set in each case. The column {\em time} gives the computational time in seconds required to determine all the critical sets in each case on a {\em $13^{\mathrm{th}}$ Gen Intel Core i9-13900H CPU @ 2.60GHz with 32 GB RAM}. Specific examples of critical sets associated with each case are listed in Appendix \ref{secA1}. Figure~\ref{fig:time} illustrates the global elapsed time required for the computation of critical sets based on non-trivial autoparatopisms.

\renewcommand{\tabcolsep}{2pt}
\begin{figure}[ht]
\centering
\tiny
$\begin{array}{c}
\begin{array}{ccccccccc}
\begin{tabular}{|c|c|c|}\hline
1 & 2 & 3\\\hline
2 & 3 & 1\\ \hline
3 & 1 & 2\\ \hline
\end{tabular} & \phantom{\hspace{0.5cm}} &
\begin{tabular}{|c|c|c|c|}\hline
1 & 2 & 3 & 4\\ \hline
2 & 1 & 4 & 3\\ \hline
3 & 4 & 1 & 2\\ \hline
4 & 3 & 2 & 1\\ \hline
\end{tabular}   & \phantom{\hspace{0.5cm}} & \begin{tabular}{|c|c|c|c|}\hline
1 & 2 & 3 & 4\\ \hline
2 & 1 & 4 & 3\\ \hline
3 & 4 & 2 & 1\\ \hline
4 & 3 & 1 & 2\\ \hline
\end{tabular}   & \phantom{\hspace{0.5cm}} & \begin{tabular}{|c|c|c|c|c|}\hline
1 & 2 & 3 & 4 & 5\\ \hline
2 & 3 & 4 & 5 & 1\\ \hline
3 & 4 & 5 & 1 & 2\\ \hline
4 & 5 & 1 & 2 & 3\\ \hline
5 & 1 & 2 & 3 & 4\\ \hline
\end{tabular}   & \phantom{\hspace{0.5cm}} & \begin{tabular}{|c|c|c|c|c|}\hline
1 & 2 & 3 & 4 & 5\\ \hline
2 & 1 & 4 & 5 & 3\\ \hline
3 & 4 & 5 & 1 & 2\\ \hline
4 & 5 & 2 & 3 & 1\\ \hline
5 & 3 & 1 & 2 & 4\\ \hline
\end{tabular}\\
L_3 & & L_{4.1} & & L_{4.2} & & L_{5.1} & & L_{5.2}
\end{array} \\ \\
\begin{array}{cccccc}
\begin{tabular}{|c|c|c|c|c|c|} \hline
1&2&3&4&5&6\\\hline
2&1&4&3&6&5\\\hline
3&5&6&1&4&2\\\hline
4&6&5&2&3&1\\\hline
5&3&1&6&2&4\\\hline
6&4&2&5&1&3\\\hline
\end{tabular}  &
\begin{tabular}{|c|c|c|c|c|c|}\hline
1&2&3&4&5&6\\\hline
2&3&1&5&6&4\\\hline
3&1&2&6&4&5\\\hline
4&6&5&1&3&2\\\hline
5&4&6&2&1&3\\\hline
6&5&4&3&2&1\\\hline
\end{tabular} &
\begin{tabular}{|c|c|c|c|c|c|}\hline
1&2&3&4&5&6\\\hline
2&3&1&5&6&4\\\hline
3&1&2&6&4&5\\\hline
4&6&5&3&2&1\\\hline
5&4&6&1&3&2\\\hline
6&5&4&2&1&3\\\hline
\end{tabular} &
\begin{tabular}{|c|c|c|c|c|c|} \hline
1&2&3&4&5&6\\\hline
2&3&1&6&4&5\\\hline
3&1&2&5&6&4\\\hline
4&6&5&2&1&3\\\hline
5&4&6&1&3&2\\\hline
6&5&4&3&2&1\\\hline
\end{tabular}  &
\begin{tabular}{|c|c|c|c|c|c|}\hline
1&2&3&4&5&6\\\hline
2&3&1&6&4&5\\\hline
3&4&5&2&6&1\\\hline
4&1&6&5&2&3\\\hline
5&6&2&3&1&4\\\hline
6&5&4&1&3&2\\\hline
\end{tabular} &
\begin{tabular}{|c|c|c|c|c|c|}\hline
1&2&3&4&5&6\\\hline
2&4&5&1&6&3\\\hline
3&1&2&6&4&5\\\hline
4&3&6&5&1&2\\\hline
5&6&1&2&3&4\\\hline
6&5&4&3&2&1\\\hline
\end{tabular}\\
L_{6.1} & L_{6.2} & L_{6.3} & L_{6.4} & L_{6.5} & L_{6.6}
\end{array} \\ \\
\begin{array}{cccccc}
\begin{tabular}{|c|c|c|c|c|c|} \hline
1&2&3&4&5&6\\\hline
2&4&5&1&6&3\\\hline
3&1&2&6&4&5\\\hline
4&5&6&2&3&1\\\hline
5&6&4&3&1&2\\\hline
6&3&1&5&2&4\\\hline
\end{tabular}  &
\begin{tabular}{|c|c|c|c|c|c|}\hline
1&2&3&4&5&6\\\hline
2&4&5&1&6&3\\\hline
3&5&6&2&4&1\\\hline
4&6&1&3&2&5\\\hline
5&1&4&6&3&2\\\hline
6&3&2&5&1&4\\\hline
\end{tabular} &
\begin{tabular}{|c|c|c|c|c|c|}\hline
1&2&3&4&5&6\\\hline
2&4&5&1&6&3\\\hline
3&6&4&2&1&5\\\hline
4&3&6&5&2&1\\\hline
5&1&2&6&3&4\\\hline
6&5&1&3&4&2\\\hline
\end{tabular} &
\begin{tabular}{|c|c|c|c|c|c|} \hline
1&2&3&4&5&6\\\hline
2&5&6&3&1&4\\\hline
3&6&2&1&4&5\\\hline
4&3&5&2&6&1\\\hline
5&4&1&6&3&2\\\hline
6&1&4&5&2&3\\\hline
\end{tabular}  &
\begin{tabular}{|c|c|c|c|c|c|}\hline
1&2&3&4&5&6\\\hline
2&6&4&3&1&5\\\hline
3&5&6&1&2&4\\\hline
4&3&5&2&6&1\\\hline
5&4&1&6&3&2\\\hline
6&1&2&5&4&3\\\hline
\end{tabular} &
\begin{tabular}{|c|c|c|c|c|c|}\hline
1&2&3&4&5&6\\\hline
2&6&4&5&3&1\\\hline
3&4&2&6&1&5\\\hline
4&5&1&2&6&3\\\hline
5&3&6&1&2&4\\\hline
6&1&5&3&4&2\\\hline
\end{tabular}\\
L_{6.7} & L_{6.8} & L_{6.9} & L_{6.10} & L_{6.11} & L_{6.12}
\end{array}
\end{array}$
\vspace{0.2cm}
\caption{Representative Latin squares of the main classes of $\mathcal{L}(n)$, with $3\leq n\leq 6$.}\label{Fig_MainClasses}
\end{figure}

\begin{figure}[ht]
\centering
\includegraphics[width=0.9\textwidth]{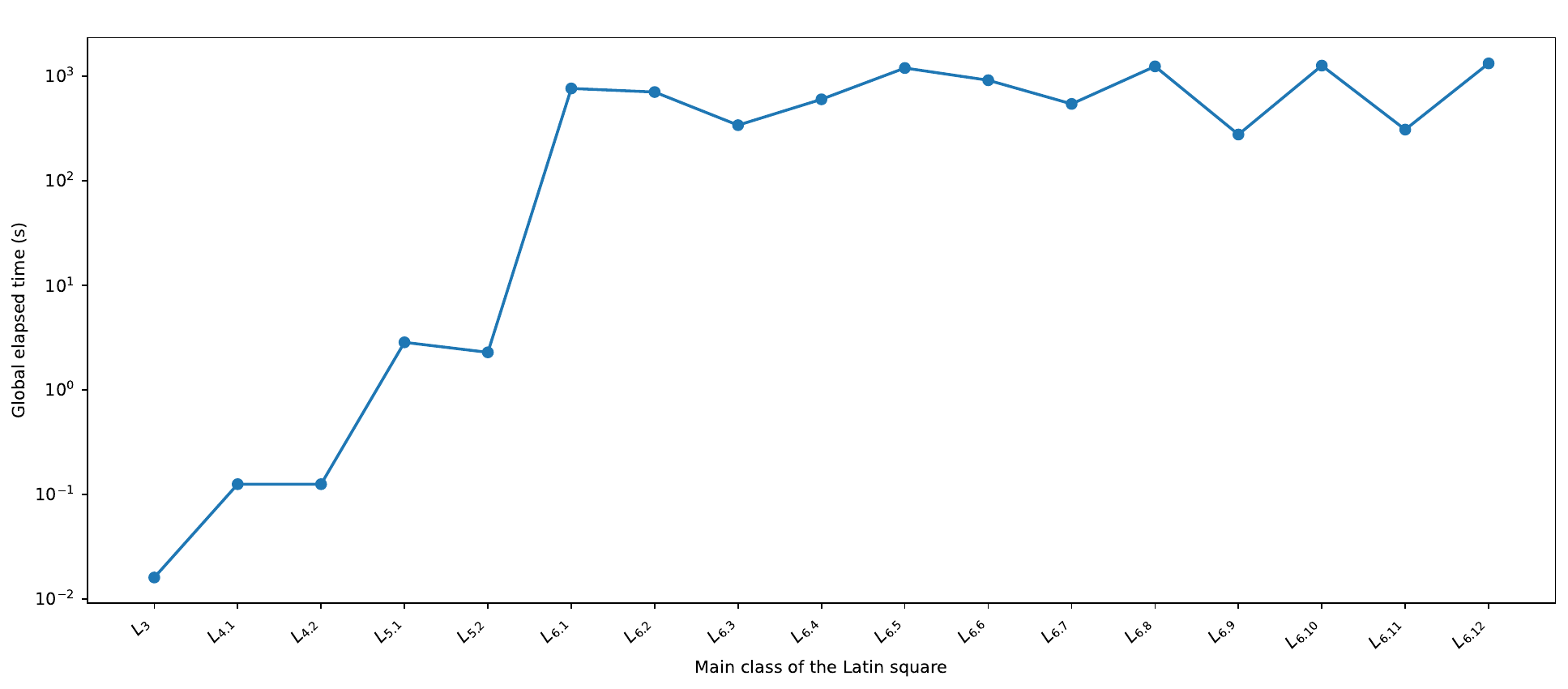}
\caption{Global computational time for non-trivial autoparatopisms.}\label{fig:time}
\end{figure}

To avoid enumeration of all paratopisms, we carry out our study according to their cycle structures, which are represented in the column $z$ of our tables. Recall here that the {\em cycle structure} of a permutation $\pi\in S_n$ is the expression $z_{\pi}:=n^{\lambda_n^{\pi}}\ldots 1^{\lambda_1^{\pi}}$, where $\lambda_\ell^\pi$ is the number of cycles of length $\ell$ in the unique decomposition of $\pi$ as a product of disjoint cycles. In practice, zero factors are omitted, and every factor of the form $\ell^1$ is written $\ell$. Thus, for example, $z_{(123)(456)(7)}=3^21$. Then, the cycle structure of a paratopism $\Theta=(\pi;(f_1,f_2,f_3))\in S_3\wr S_n$ is the tuple $z_\Theta:=(z_{\pi};(z_{f_1},z_{f_2},z_{f_3}))$. Mendis and Wanless \cite{Mendis2017} determined the cycle structures of all autoparatopisms of types I and II of Latin squares of order up to $17$. Those of all autotopisms of Latin squares of the same orders were already known \cite{Falcon2012, Stones2012}.

By Lemma \ref{lemma_uc}, the set of cycle structures of paratopisms in $\mathfrak{P}(n)$ that are conjugate to at least one paratopism in $\mathrm{Apar}(L)/\sim$ depends only on the main class of $L$. For each cycle structure $z$, we have calculated all feasible sizes of every critical set of $L$ based on a paratopism in $\mathrm{Apar}(L)/\sim$ that is conjugate to one in $\mathfrak{P}(n)$ with cycle structure $z$. These sizes were obtained in \cite{Falcon2021} for all autotopisms of Latin squares of order $n\leq 5$. We have confirmed all these values except for the largest size of a critical set based on any autotopism of $L_{5.2}$ with the cycle structure $(1^3; (31^2,31^2,31^2))$. This value is five instead of six, which was indicated in Example 32 in \cite{Falcon2021}. 

\section{A new secret sharing scheme}\label{sec:secret}

In this section, critical sets based on paratopisms are implemented to design a new secret sharing scheme customized for $\ell$ participants that is not so dependent on the existence of absent but indispensable participants. As a first approach, the dealer considers a pair $(L,\Theta)\in\mathcal{L}(n)\times \mathrm{Apar}(L)$ such that there is some critical set $P\in\mathrm{CS}_\Theta(L)$ of size $\ell$. Similarly to the scheme proposed in \cite{Cooper1994}, the dealer distributes to each participant a distinct entry of $P$ so that all of them must collaborate to recover the Latin square $L$, which is the key.

\begin{example}\label{ejemplo:nuevoEsquemaComparticion} For $\ell=3$, the dealer can consider the autoparatopism 
$\Theta= (\mathrm{Id}_3;((2354),(1243),(1243)))$ of the following $\Theta$-colored Latin square.

\[\L\equiv\begin{array}{|c|c|c|c|c|}\hline
        \cellcolor{red!100}{\color{white} 1} & \cellcolor{red!100}{\color{white} 2}  & \cellcolor{red!100}{\color{white} 3} &
        \cellcolor{red!100}{\color{white} 4} &
        \cellcolor{pink!100}{\color{white} 5}
        \\ \hline
        \cellcolor{blue!100}{\color{white} 2} &
        \cellcolor{violet!100}{\color{white} 3} & \cellcolor{magenta!100}{\color{white} 4}  & \cellcolor{orange!100}{\color{white} 5} &
        \cellcolor{teal!100}{\color{white} 1}
        \\ \hline
        \cellcolor{magenta!100}{\color{white} 3} &
        \cellcolor{blue!100}{\color{white} 4} &
        \cellcolor{orange!100}{\color{white} 5} & \cellcolor{violet!100}{\color{white} 1}  & \cellcolor{teal!100}{\color{white} 2}
        \\ \hline
        \cellcolor{violet!100}{\color{white} 4} &
        \cellcolor{orange!100}{\color{white} 5} &
        \cellcolor{blue!100}{\color{white} 1} &
        \cellcolor{magenta!100}{\color{white} 2} & \cellcolor{teal!100}{\color{white} 3}
        \\ \hline
        \cellcolor{orange!100}{\color{white} 5} &
        \cellcolor{magenta!100}{\color{white} 1} &
        \cellcolor{violet!100}{\color{white} 2} &
        \cellcolor{blue!100}{\color{white} 3} &
        \cellcolor{teal!100}{\color{white} 4}
        \\ \hline
        \end{array}\]
A $\Theta$-critical set of $L$ of size three is formed by the entries $\lbrace(2,2,3),(2,4,5),$ $(2,5,1)\rbrace$.\hfill $\lhd$
\end{example}

As a second approach, and again similarly to the scheme proposed in \cite{Cooper1994}, our protocol can be considered a multilevel scheme if the dealer considers more than one $\Theta$-critical set of the key under consideration, some of them even of distinct sizes. If their entries are distributed to the participants of this new scheme, those participants who have entries associated with longer orbits have more information about the key than those with entries in smaller orbits. In this regard, those orbits with a larger number of entries can be assigned to a higher level.

A third approach, which justifies the use of critical sets that are based on paratopisms, consists of distributing to the participants more than one entry of each $\Theta$-orbit. Those participants containing these entries play the same role in the scheme and hence, they become non-essential.

\begin{example} In a bank, there are four employees and one director. To open the safe, the requirement is that there must always be two employees and the director present. Therefore, they decide to divide the safe's key using a secret sharing scheme. The protocol determines that critical sets of three entries will be sought, one for the director and two for two of the employees, who will be the main participants of the scheme. The remaining employees will be considered substitutes for the participating employees, so redundant entries from the orbits used with the other employees will be assigned to them. As an illustrative example, we can consider the autoparatopism $\Theta=(1^3;((13)(45),(25)(34),(13)(45)))$ of the following $\Theta$-colored Latin square.
{\small \begin{center}
        $L\equiv{\footnotesize \begin{array}{|c|c|c|c|c|}\hline
        \cellcolor{red!100}{\color{white} 1} & 
        \cellcolor{purple!100}{\color{white} 2}  & 
        \cellcolor{orange!100}{\color{white} 3} &
        \cellcolor{green!100}{\color{black} 4} &
        \cellcolor{violet!100}{\color{white} 5}
        \\ \hline
        \cellcolor{magenta!100}{\color{white} 2} &
        \cellcolor{blue!100}{\color{white} 1} & 
        \cellcolor{brown!100}{\color{white} 4}  &
        \cellcolor{brown!100}{\color{white} 5} &
        \cellcolor{blue!100}{\color{white} 3}
        \\ \hline
        \cellcolor{red!100}{\color{white} 3} &
        \cellcolor{violet!100}{\color{white} 4} &
        \cellcolor{green!100}{\color{black} 5} & 
        \cellcolor{orange!100}{\color{white} 1}  & 
        \cellcolor{purple!100}{\color{white} 2}
        \\ \hline
        \cellcolor{pink!100}{\color{white} 4} &
        \cellcolor{teal!100}{\color{white} 5} &
        \cellcolor{yellow!100}{\color{white} 2} &
        \cellcolor{gray!100}{\color{white} 3} & 
        \cellcolor{olive!100}{\color{white} 1}
        \\ \hline
        \cellcolor{pink!100}{\color{white} 5} &
        \cellcolor{olive!100}{\color{white} 3} &
        \cellcolor{gray!100}{\color{white} 1} &
        \cellcolor{yellow!100}{\color{white} 2} &
        \cellcolor{teal!100}{\color{white} 4}
        \\ \hline
        \end{array}}$
    \end{center}}

A $\Theta$-critical set of $L$ of size three is formed by the entries $\lbrace(1,4,4),$ $(4,2,5),(4,4,3)\rbrace$. The entry $(1,4,4)$ is given to the director, while each of the other two entries is, respectively, given to one of the other two participants. In addition, the third and fourth employees, who are not main participants in the scheme, receive, respectively, the entries $(5,5,4)$ and $(5,3,1)$. The role of the director is indispensable because, if the entry $(1, 4, 4)$ is unknown, then the following alternative Latin square also has $\Theta$ as an autoparatopism.
{\small    \begin{center}
        ${\footnotesize \begin{array}{|c|c|c|c|c|}\hline
        \cellcolor{red!100}{\color{black} 4} & 
        \cellcolor{purple!100}{\color{black} 1}  & 
        \cellcolor{orange!100}{\color{black} 3} &
        \cellcolor{green!100}{\color{black} 2} &
        \cellcolor{violet!100}{\color{black} 5}
        \\ \hline
        \cellcolor{magenta!100}{\color{white} 2} &
        \cellcolor{blue!100}{\color{black} 3} & 
        \cellcolor{brown!100}{\color{black} 5}  &
        \cellcolor{brown!100}{\color{black} 4} &
        \cellcolor{blue!100}{\color{black} 1}
        \\ \hline
        \cellcolor{red!100}{\color{black} 5} &
        \cellcolor{violet!100}{\color{black} 4} &
        \cellcolor{green!100}{\color{black} 2} & 
        \cellcolor{orange!100}{\color{black} 1}  & 
        \cellcolor{purple!100}{\color{black} 3}
        \\ \hline
        \cellcolor{pink!100}{\color{black} 1} &
        \cellcolor{teal!100}{\color{white} 5} &
        \cellcolor{yellow!100}{\color{black} 4} &
        \cellcolor{gray!100}{\color{white} 3} & 
        \cellcolor{olive!100}{\color{black} 2}
        \\ \hline
        \cellcolor{pink!100}{\color{black} 3} &
        \cellcolor{olive!100}{\color{black} 2} &
        \cellcolor{gray!100}{\color{white} 1} &
        \cellcolor{yellow!100}{\color{black} 5} &
        \cellcolor{teal!100}{\color{white} 4}
        \\ \hline
        \end{array}}$
    \end{center}}

Furthermore, the first and second employees are not indispensable because they can, respectively, be replaced by the third and fourth employees. This is due to the fact that each substitute has an alternative entry of the same $\Theta$-orbit being considered. In this way, the director's entry is superior in rank because it constitutes a non-redundant, indispensable piece of information required to achieve unique completion.  \hfill $\lhd$
\end{example}

The security of these secret sharing schemes is primarily framed in terms of the brute-force search-space size. Note that the total number of Latin squares grows super-exponentially, with the known lower bound $|\mathcal{L}(n)|\geq \frac {\left(n!\right)^{2n}}{n^{n^2}}\geq n^{(n^2-o(n^2))}$. More precisely, the exact number is known for $n\leq 11$ \cite{Hulpke2011, Kolesova1990, McKay2007}. In particular, $|\mathcal{L}(10)|\approx 9.9824 \times 10^{36} \leq 2^{128}\leq 7.7696\times 10^{47}\approx|\mathcal{L}(11)|$. Since 128 bits of effective security is the current threshold to consider an absolutely secure system against traditional brute-force attacks, our schemes are adequate for $n\geq 11$. Our examples fall well short of this bound. They are not presented as real-world deployment examples, but only as a mere illustration of the described techniques. 

For an adversary who does not hold all the entries of a $\Theta$-critical set, the relevant hardness is completing a partial Latin square with the missing entries unknown. This problem remains NP-complete in the worst case \cite{Colbourn1984}, although specific instances may be solved considerably faster using standard heuristics. For any authorized coalition (holding all the entries of a critical set, and having $\Theta$ public as part of the protocol), recovery does not require brute-force search over all Latin squares of order $n$. Since $\Theta$ is known, the coalition immediately fills every entry in $\mathrm{Orb}_\Theta(e)$ for each known entry $e$.

\section{Conclusion and further work}

In this paper, we have introduced the problem of determining critical sets of Latin squares containing a given non-trivial paratopism in their group of autoparatopisms. We have proved that the feasible sizes of these critical sets depend only on the conjugacy class of the paratopism and the main class of the Latin square under consideration. A pair of algorithms has been described to compute these sizes for Latin squares. As an illustrative example, we have implemented them to determine the sizes of critical sets based on autoparatopisms of Latin squares of order up to six. 

We have also described a new secret sharing scheme arising from critical sets based on an autoparatopism of a given Latin square, which is the key. This cryptographic protocol reduces the complexity when searching for critical sets for a given Latin square by incorporating an autoparatopism that determines the orbits of the entries within that set. It reduces the number of assumptions that need to be made for the remaining empty entries. It also allows for the use of multilevel secret sharing schemes without the need to employ more than one critical set. This is because the use of orbits of the elements related to the mentioned set allows us to assign entries related to longer orbits (more information) to individuals with higher rank, or alternatively, reduce the number of entries that are distributed to assign redundant entries to the substitutes.

\vspace{0.25cm}

\section*{Acknowledgements}

This paper has partially been supported by the Research and Innovation Project PPIT-FEDER 2023  {\em ``Modeling small-world, scale-free networks from combinatorial designs based on quasigroup digraphs''},  co-financed by the EU - Ministry of Finance and Public Administration - European Funds - Andalusian Regional Government - Ministry of University, Research and Innovation.

\newpage

\begin{appendices}

\section{Bounds for critical sets}\label{secA}

This appendix collects our computational results of implementing Algorithm \ref{alg_2} in {\sc GAP} concerning the smallest and largest sizes of critical sets based on autoparatopisms of Latin squares of order $n\in\{3,4,5,6\}$. 

\begin{table}[ht]
    \caption{Smallest and largest sizes of critical sets based on isotopisms.}
    \label{Table_scs_lcs}
    \footnotesize
    \renewcommand{\tabcolsep}{3pt}
    \centering
{\tiny   \begin{tabular}{lllccr||lllccr}\toprule
    $L$ & $z$ & class & $\mathrm{scs}$ & $\mathrm{lcs}$ & time $(s)$ &
    $L$ & $z$ & class & $\mathrm{scs}$ & $\mathrm{lcs}$ & time $(s)$\\ \midrule

    $L_3$ & $(1^3,1^3,1^3)$ & $c_{3.1}$ & 2 & 3 & 0.06 &
    $L_{6.4}$ & $(1^6,3^2,3^2)$ & $c_{6.4.3}$ & 8 & 8 & 0.11\\
         & $(1^3,3,3)$ & $c_{3.2}$ & 2 & 2 & 0.00 &
         & $(1^6,6,6)$ & $c_{6.4.4}$ & 5 & 5 & 0.00\\
         & $(21,21,21)$ & $c_{3.3}$ & 1 & 2 & 0.00 &
         & $(2^21^2,2^21^2,2^21^2)$ & $c_{6.4.5}$ & 7 & 9 & 114.09\\
         & $(3,3,3)$ & $c_{3.4}$ & 1 & 1 & 0.00 &
         & $(2^21^2,2^3,2^3)$ & $c_{6.4.6}$ & 6 & 8 & 34.78\\
    $L_{4.1}$ & $(1^4,1^4,1^4)$ & $c_{4.1.1}$ & 5 & 7 & 1.72 &
         & $(3^2,3^2,3^2)$ & $c_{6.4.7}$ & 4 & 5 & 0.42\\
         & $(1^4,2^2,2^2)$ & $c_{4.1.2}$ & 4 & 4 & 0.02 &
         & $(3^2,6,2^3)$ & $c_{6.4.8}$ & 4 & 4 & 0.00\\
         & $(21^2,21^2,21^2)$ & $c_{4.1.3}$ & 4 & 4 & 0.03 &
         & $(3^2,6,6)$ & $c_{6.4.9}$ & 3 & 3 & 0.02\\
         & $(21^2,4,4)$ & $c_{4.1.4}$ & 2 & 2 & 0.00 &
    $L_{6.5}$ & $(1^6,1^6,1^6)$ & $c_{6.5.1}$ & 10 & 17 & $>$36,000.00\\
         & $(2^2,2^2,2^2)$ & $c_{4.1.5}$ & 3 & 3 & 0.00 &
         & $(1^6,2^3,2^3)$ & $c_{6.5.2}$ & 9 & 10 & 22.70\\
         & $(31,31,31)$ & $c_{4.1.6}$ & 2 & 2 & 0.00 &
         & $(2^21^2,2^21^2,2^21^2)$ & $c_{6.5.3}$ & 7 & 9 & 120.77\\
    $L_{4.2}$ & $(1^4,1^4,1^4)$ & $c_{4.2.1}$ & 4 & 6 & 1.37 &
         & $(2^21^2,2^3,2^3)$ & $c_{6.5.4}$ & 6 & 8 & 32.84\\
         & $(1^4,2^2,2^2)$ & $c_{4.2.2}$ & 4 & 4 & 0.01 &
         & $(31^3,3^2,3^2)$ & $c_{6.5.5}$ & 4 & 6 & 0.53\\
         & $(1^4,4,4)$ & $c_{4.2.3}$ & 3 & 3 & 0.00 &
         & $(31^3,6,6)$ & $c_{6.5.6}$ & 3 & 3 & 0.01\\
         & $(21^2,21^2,21^2)$ & $c_{4.2.4}$ & 4 & 4 & 0.03 &
    $L_{6.6}$ & $(1^6,1^6,1^6)$ & $c_{6.6.1}$ & 11 & 17 & $>$36,000.00\\
         & $(21^2,2^2,2^2)$ & $c_{4.2.5}$ & 3 & 3 & 0.01 &
         & $(21^4,2^3,2^3)$ & $c_{6.6.2}$ & 7 & 9 & 36.36\\
         & $(2^2,4,4)$ & $c_{4.2.6}$ & 2 & 2 & 0.00 &
         & $(2^21^2,2^21^2,2^21^2)$ & $c_{6.6.3}$ & 7 & 9 & 117.95\\
    $L_{5.1}$ & $(1^5,1^5,1^5)$ & $c_{5.1.1}$ & 6 & 10 & 3,483.17 &
         & $(31^3,3^2,3^2)$ & $c_{6.6.4}$ & 4 & 6 & 0.53\\
         & $(5,5,1^5)$ & $c_{5.1.2}$ & 4 & 4 & 0.00 &
         & $(321,6,6)$ & $c_{6.6.5}$ & 2 & 3 & 0.01\\
         & $(2^21,2^21,2^21)$ & $c_{5.1.3}$ & 3 & 5 & 0.36 &
         & $(41^2,41^2,41^2)$ & $c_{6.6.6}$ & 5 & 6 & 0.45\\
         & $(41,41,41)$ & $c_{5.1.4}$ & 3 & 3 & 0.00 &
         & $(51,51,51)$ & $c_{6.6.7}$ & 3 & 4 & 0.05\\
         & $(5,5,5)$ & $c_{5.1.5}$ & 2 & 2 & 0.00 &
    $L_{6.7}$ & $(1^6,1^6,1^6)$ & $c_{6.7.1}$ & 11 & 17 & $>$36,000.00\\
    $L_{5.2}$ & $(1^5,1^5,1^5)$ & $c_{5.2.1}$ & 7 & 11 & 4,069.39 &
         & $(1^6,2^3,2^3)$ & $c_{6.7.2}$ & 9 & 10 & 9.45\\
         & $(2^21,2^21,2^21)$ & $c_{5.2.2}$ & 3 & 5 & 0.45 &
         & $(2^21^2,2^21^2,2^21^2)$ & $c_{6.7.3}$ & 7 & 9 & 113.36\\
         & $(31^2,31^2,31^2)$ & $c_{5.2.3}$ & 5 & 5\footnotemark[1] & 0.09 &
         & $(2^3,2^3,2^21^2)$ & $c_{6.7.4}$ & 6 & 8 & 33.03\\
    $L_{6.1}$ & $(1^6,1^6,1^6)$ & $c_{6.1.1}$ & 11 & 17 & $>$36,000.00 &
         & $(3^2,3^2,3^2)$ & $c_{6.7.5}$ & 4 & 5 & 0.45\\
         & $(1^6,2^3,2^3)$ & $c_{6.1.2}$ & 9 & 10 & 36.87 &
         & $(3^2,6,6)$ & $c_{6.7.6}$ & 3 & 3 & 0.00\\
         & $(2^21^2,2^21^2,2^21^2)$ & $c_{6.1.3}$ & 7 & 10 & 104.11 &
    $L_{6.8}$ & $(1^6,1^6,1^6)$ & $c_{6.8.1}$ & 10 & 17 & $>$36,000.00\\
         & $(2^21^2,2^3,2^3)$ & $c_{6.1.4}$ & 7 & 8 & 27.64 &
         & $(21^4,2^3,2^3)$ & $c_{6.8.2}$ & 7 & 10 & 31.92\\
    $L_{6.2}$ & $(1^6,1^6,1^6)$ & $c_{6.2.1}$ & 12 & 18 & $>$36,000.00 &
         & $(2^21^2,2^21^2,2^21^2)$ & $c_{6.8.3}$ & 7 & 9 & 119.67\\
         & $(1^6,2^3,2^3)$ & $c_{6.2.2}$ & 9 & 10 & 57.44 &
         & $(41^2,41^2,41^2)$ & $c_{6.8.4}$ & 5 & 6 & 0.48\\
         & $(1^6,3^2,3^2)$ & $c_{6.2.3}$ & 8 & 8 & 0.23 &
    $L_{6.9}$ & $(1^6,1^6,1^6)$ & $c_{6.9.1}$ & 10 & 17 & $>$36,000.00\\
         & $(2^21^2,2^21^2,2^21^2)$ & $c_{6.2.4}$ & 7 & 10 & 133.55 &
         & $(2^21^2,2^3,2^3)$ & $c_{6.9.2}$ & 6 & 8 & 33.30\\
         & $(2^3,2^3,2^21^2)$ & $c_{6.2.5}$ & 7 & 9 & 64.95 &
    $L_{6.10}$ & $(1^6,1^6,1^6)$ & $c_{6.10.1}$ & 10 & 17 & $>$36,000.00\\
         & $(2^3,3^2,6)$ & $c_{6.2.6}$ & 4 & 4 & 0.02 &
         & $(21^4,2^3,2^3)$ & $c_{6.10.2}$ & 7 & 9 & 32.16\\
         & $(31^3,31^3,31^3)$ & $c_{6.2.7}$ & 8 & 9 & 23.59 &
         & $(2^21^2,2^21^2,2^21^2)$ & $c_{6.10.3}$ & 7 & 9 & 122.47\\
         & $(3^2,3^2,31^3)$ & $c_{6.2.8}$ & 6 & 7 & 0.22 &
    $L_{6.11}$ & $(1^6,1^6,1^6)$ & $c_{6.11.1}$ & 10 & 17 & $>$36,000.00\\
         & $(31^3,6,6)$ & $c_{6.2.9}$ & 3 & 3 & 0.03 &
         & $(2^21^2,2^21^2,2^21^2)$ & $c_{6.11.2}$ & 7 & 9 & 123.72\\
         & $(6,6,2^21^2)$ & $c_{6.2.10}$ & 4 & 4 & 0.00 &
    $L_{6.12}$ & $(1^6,1^6,1^6)$ & $c_{6.12.1}$ & 11 & 17 & $>$36,000.00\\
    $L_{6.3}$ & $(1^6,1^6,1^6)$ & $c_{6.3.1}$ & 11 & 16 & $>$36,000.00 &
         & $(1^6,3^2,3^2)$ & $c_{6.12.2}$ & 8 & 8 & 0.12\\
         & $(1^6,3^2,3^2)$ & $c_{6.3.2}$ & 8 & 8 & 0.22 &
         & $(21^4,2^3,2^3)$ & $c_{6.12.3}$ & 7 & 9 & 32.52\\
         & $(2^21^2,2^3,2^3)$ & $c_{6.3.3}$ & 6 & 8 & 43.19 &
         & $(21^4,6,6)$ & $c_{6.12.4}$ & 4 & 5 & 0.00\\
         & $(2^21^2,6,6)$ & $c_{6.3.4}$ & 4 & 4 & 0.01 &
         & $(2^21^2,2^21^2,2^21^2)$ & $c_{6.12.5}$ & 7 & 9 & 129.17\\
         & $(31^3,31^3,31^3)$ & $c_{6.3.5}$ & 8 & 9 & 20.41 &
         & $(31^3,31^3,31^3)$ & $c_{6.12.6}$ & 8 & 9 & 31.50\\
         & $(3^2,3^2,31^3)$ & $c_{6.3.6}$ & 6 & 6 & 0.20 &
         & $(31^3,3^2,3^2)$ & $c_{6.12.7}$ & 6 & 7 & 0.67\\
    $L_{6.4}$ & $(1^6,1^6,1^6)$ & $c_{6.4.1}$ & 9 & 17 & $>$36,000.00 &
         & $(321,6,6)$ & $c_{6.12.8}$ & 3 & 3 & 0.00\\
         & $(1^6,2^3,2^3)$ & $c_{6.4.2}$ & 9 & 10 & 9.14 &
         & & & & & \\ \hline
    \end{tabular}}
    \footnotetext[1]{This corrects a wrong value given in \cite{Falcon2021}.}
\end{table}

\begin{table}[ht]
\caption{Smallest and largest sizes of critical sets based on paratopisms of Type I.}\label{table1}%
    \footnotesize
    \renewcommand{\tabcolsep}{3pt}
    \centering
\begin{tabular}{lllccr||lllccr}
\toprule
$L$ & $z$  & class & $\mathrm{scs}$ & $\mathrm{lcs}$ & time $(s)$ & $L$ & $z$ & class  & $\mathrm{scs}$ & $\mathrm{lcs}$ & time $(s)$\\
\midrule
$L_3$ & $(1^3,1^3,1^3)$  & $c_{3.5}$ & $2$ & $2$ & 0.00 & $L_{6.4}$ &  $(1^6,1^6,2^21^2)$ & $c_{6.4.11}$& 6 & 9 & 255.98\\
 &  $(1^3,1^3,21)$  & $c_{3.6}$ & $1$ & $1$&  0.00 & &  $(1^6,2^3,2^3)$ & $c_{6.4.12}$& 6 & 6 & 0.05\\
 & $(1^3,3,21)$  & $c_{3.7}$ & $1$ & $1$& 0.00 & &  $(1^6,3^2,2^21^2)$& $c_{6.4.13}$ & 4 & 5 & 0.00\\
 & $(1^3,3,3)$  & $c_{3.8}$ & $1$ & $1$& 0.00&  &  $(1^6,3^2,3^2)$& $c_{6.4.14}$ & 3 & 3 &  0.02 \\
$L_{4.1}$ & $(1^4,1^4,1^4)$  & $c_{4.1.7}$ & $4$ & $6$& 0.03 &  &  $(1^6,6,2^3)$& $c_{6.4.15}$& 3 & 3 &0.00 \\
 & $(1^4,1^4,21^2)$  & $c_{4.1.8}$& $3$ & $4$& 0.03  & &  $(1^6,6,6)$& $c_{6.4.16}$ & 2 & 2 & 0.00 \\
 & $(1^4,2^2,21^2)$  & $c_{4.1.9}$& $3$ & $3$& 0.00 & $L_{6.5}$  &  $(1^6,1^6,21^4)$& $c_{6.5.7}$ & 7 & 10  & 242.23\\
 & $(2^2,2^2,1^4)$  & $c_{4.1.10}$& $3$ & $3$& 0.00 &
 &  & $c_{6.5.8}$ & 6 & 10  & 263.39\\
 & $(1^4,2^2,4)$  & $c_{4.1.11}$& $2$ & $2$& 0.00 & &  $(1^6,3^2,321)$& $c_{6.5.9}$& 4 & 4 & 0.02\\
 & $(1^4,31,31)$  & $c_{4.1.12}$& $1$ & $1$& 0.00 & & & $c_{6.5.10}$& 2 & 4\footnotemark[1] & 0.02\\
$L_{4.2}$ & $(1^4,1^4,1^4)$  & $c_{4.2.7}$ & $3$ & $4$& 0.03 & $L_{6.6}$  &  $(1^6,1^6,1^6)$& $c_{6.6.8}$& 8 & 15\footnotemark[1] & 235.75\\
 & $(1^4,1^4,21^2)$  & $c_{4.2.8}$& $3$ & $4$&0.02 & &  $(1^6,1^6,21^4)$ & $c_{6.6.9}$ & 7  & 11 & 258.42\\
 & $(1^4,2^2,21^2)$ & $c_{4.2.9}$ & $3$ & $3$&0.00 & &  $(1^6,1^6,2^21^2)$ & $c_{6.6.10}$ & 6 & 9 & 265.03\\
 & $(1^4,2^2,2^2)$& $c_{4.2.10}$& $3$ & $3$&0.00  & &  $(1^6,2^21^2,41^2)$ & $c_{6.6.11}$ & 5 & 6 & 0.20\\
 & $(1^4,4,2^2)$ & $c_{4.2.11}$& $2$ & $2$&0.00 & &  $(1^6,3^2,31^3)$ & $c_{6.6.12}$ & 3  & 4 & 0.01\\
 & $(1^4,4,4)$ & $c_{4.2.12}$& $1$ & $1$&0.00 & &  $(1^6,3^2,321)$ & $c_{6.6.13}$ & 4  & 4 & 0.02\\
$L_{5.1}$ & $(1^5,1^5,1^5)$& $c_{5.1.6}$ & 4 & 6 & 1.86 & &  $(1^6,51,51)$ & $c_{6.6.14}$ & 1 & 1 & 0.00\\
 & $(1^5,1^5,2^21)$& $c_{5.1.7}$ & 3 & 4 & 0.59 & $L_{6.7}$ & $(1^6,1^6,1^6)$& $c_{6.7.7}$ & 7 & 10 & 167.67 \\
 & $(1^5,2^21,41)$& $c_{5.1.8}$ & 3 & 3 & 0.02 & & $(1^6,1^6,2^21^2)$& $c_{6.7.8}$  & 6 & 9 & 215.67\\
 & $(1^5,5,2^21)$& $c_{5.1.9}$ & 2 & 2 & 0.00 & & $(1^6,2^3,2^3)$& $c_{6.7.9}$& 6 & 6 & 0.05\\
 & $(1^5,5,5)$& $c_{5.1.10}$& 1 & 1 & 0.00 & & $(1^6,3^2,3^2)$& $c_{6.7.10}$& 3 & 3 & 0.02\\
$L_{5.2}$ & $(1^5,1^5,21^3)$& $c_{5.2.4}$ & 4 & 6 & 1.59 & & $(1^6,6,6)$& $c_{6.7.11}$ & 2 & 2 & 0.00\\
 & $(1^5,2^21,41)$& $c_{5.2.5}$ & 3 & 3 & 0.00 & $L_{6.8}$  &  $(1^6,1^6,1^6)$& $c_{6.8.5}$ & 7 & 11 & 279.34\\
$L_{6.1}$ & $(1^6,1^6,21^4)$& $c_{6.1.5}$ & 6 & 9 & 446.11 & & $(1^6,1^6,21^4)$& $c_{6.8.6}$ & 7 & 9 & 252.94\\
 & & $c_{6.1.6}$  & 7 & 10 & 146.87 & & $(1^6,1^6,2^21^2)$& $c_{6.8.7}$ & 6 & 9 & 214.11\\
 & $(1^6,2^3,42)$& $c_{6.1.7}$ & 4 & 4 & 0.03 & & $(1^6,6,51)$& $c_{6.8.8}$ & 5 & 6 & 0.22\\
 & & $c_{6.1.8}$  & 5 & 5 & 0.05 & $L_{6.9}$ & $(1^6,1^6,21^4)$& $c_{6.9.3}$  & 6  & 10   & 241.41 \\
$L_{6.2}$ & $(1^6,1^6,21^4)$& $c_{6.2.11}$  & 7 & 10 & 247.89 & & $(1^6,2^3,42)$ & $c_{6.9.4}$  & 4  & 5   &  0.06\\
 & & $c_{6.2.12}$  & 8 & 11 & 175.30 & $L_{6.10}$ & $(1^6,1^6,1^6)$ & $c_{6.10.4}$ & 6  & 9   & 252.11\\
 & $(1^6,2^3,42)$& $c_{6.2.13}$ & 4 & 4 & 0.05 & & $(1^6,1^6,21^4)$& $c_{6.10.5}$ & 6  & 10   & 281.40\\
 & & $c_{6.2.14}$  & 5 & 5 & 0.03 & & & $c_{6.10.6}$ & 7  & 10   & 289.14\\
 & $(1^6,31^3,321)$& $c_{6.2.15}$ & 4 & 4 & 0.17 & & $(1^6,1^6,2^21^2)$& $c_{6.10.7}$ & 7  & 10   & 251.95\\
 & $(1^6,3^2,321)$ & $c_{6.2.16}$ & 4 & 4 & 0.00 & $L_{6.11}$ & $(1^6,1^6,21^4)$& $c_{6.11.3}$ & 6 & 10 & 182.61\\
 & & $c_{6.2.17}$ & 4 & 5 & 0.01 & & $(1^6,2^21^2,41^2)$& $c_{6.11.4}$ & 5 & 6 & 0.19\\
 & $(1^6,3^2,21^4)$& $c_{6.2.18}$ & 5 & 5 & 0.02 & $L_{6.12}$ & $(1^6,1^6,1^6)$ & $c_{6.12.9}$ & 7 & 10 & 243.44\\
 & $(1^6,6,42)$& $c_{6.2.19}$ & 2 & 2 & 0.00 & & $(1^6,1^6,21^4)$& $c_{6.12.10}$& 6 & 10 & 314.14\\
$L_{6.3}$ & $(1^6,1^6,21^4)$& $c_{6.3.7}$& 6 & 9 & 275.06 & & & $c_{6.12.11}$ & 7 & 11\footnotemark[1] & 261.97\\
 & $(1^6,2^3,42)$ & $c_{6.3.8}$ & 4 & 5 & 0.00 & & $(1^6,1^6,2^21^2)$ & $c_{6.12.12}$ & 7 & 10 & 276.50\\
 & $(1^6,3^2,321)$ & $c_{6.3.9}$& 4 & 4 & 0.00 & & $(1^6,31^3,321)$&$c_{6.12.13}$& 4 & 4 & 0.20\\
 & $(1^6,6,42)$ & $c_{6.3.10}$ & 2 & 2 & 0.00 & & $(1^6,3^2,21^4)$&$c_{6.12.14}$& 5 & 5 & 0.02\\
$L_{6.4}$ & $(1^6,1^6,1^6)$ & $c_{6.4.10}$& 6 & 10 & 184.14 & & $(1^6,3^2,31^3)$ & $c_{6.12.15}$& 4 & 4 & 0.01\\
 &  & &  &  &  & & $(1^6,3^2,2^21^2)$ & $c_{6.12.16}$& 4 & 5 & 0.02\\
 &  & &  &  &  & & $(1^6,3^2,321)$ & $c_{6.12.17}$& 4 & 5 & 0.02\\
\hline
\end{tabular}
\footnotetext[1]{There is a gap in the sizes.}
\end{table}

\clearpage

\begin{table}[ht]
\caption{Smallest and largest sizes of critical sets based on paratopisms of Type II.}\label{table2}%
    \footnotesize
    \renewcommand{\tabcolsep}{3pt}
    \centering
\begin{tabular}{lllccr||lllccr}
\toprule
$L$ & $z$ & class  & $\mathrm{scs}$ & $\mathrm{lcs}$ &time $(s)$ & $L$ & $z$ & class  & $\mathrm{scs}$ & $\mathrm{lcs}$ &time $(s)$ \\
\midrule
$L_3$ & $(1^3,1^3,1^3)$ & $c_{3.9}$ & $1$ & $1$ & 0.00  &  $L_{6.1}$  & $(1^6,1^6,1^6)$ & $c_{6.1.9}$ & 4 & 7 & 1.81\\
& $(1^3,1^3,21)$ & $c_{3.10}$ & $0$ & $0$ & 0.00 & & $(1^6,1^6,2^21^2)$ & $c_{6.1.10}$ & 3 & 4 & 0.03\\
$L_{4.1}$ & $(1^4,1^4,1^4)$ & $c_{4.1.13}$ & $3$ & $3$ &0.00 & $L_{6.2}$   & $(1^6,1^6,1^6)$ & $c_{6.2.20}$ & 5 & 7 & 1.80\\
 &  & $c_{4.1.14}$ & $2$ & $3$& 0.02   & & $(1^6,1^6,2^21^2)$ & $c_{6.2.21}$ & 3 & 4 & 0.03\\
 & $(1^4,1^4,21^2)$ & $c_{4.1.15}$ & $2$ & $2$&0.00 &  & $(1^6,1^6,31^3)$ & $c_{6.2.22}$ & 3 & 3 & 0.00\\ 
 & $(1^4,1^4,2^2)$ & $c_{4.1.16}$ & $1$ & $1$& 0.00 & $L_{6.3}$ & $(1^6,1^6,1^6)$ & $c_{6.3.11}$ & 4 & 7 & 0.80\\
$L_{4.2}$ & $(1^4,1^4,1^4)$ & $c_{4.2.13}$ & 2 & 3  & 0.02& & $(1^6,1^6,31^3)$ & $c_{6.3.12}$ & 3 & 3 & 0.02\\
  & $(1^4,1^4,21^2)$ & $c_{4.2.14}$ & 2 & 2& 0.00& $L_{6.4}$  & $(1^6,1^6,1^6)$ &  $c_{6.4.17}$ & 3 & 5\footnotemark[1] & 0.52\\
$L_{5.1}$ & $(1^5,1^5,1^5)$ & $c_{5.1.11}$ & 2 & 4 & 0.01&  && $c_{6.4.18}$ & 5 & 6 & 1.55\\
 & $(1^5,1^5,2^21)$ & $c_{5.1.12}$ & 1 & 1 & 0.00& &$(1^6,1^6,2^21^2)$ & $c_{6.4.19}$ & 3  & 4  & 0.03 \\
  & $(1^5,1^5,41)$ & $c_{5.1.13}$ & 1 & 1 & 0.00 & $L_{6.7}$ & $(1^6,1^6,1^6)$ & $c_{6.7.12}$ & 5 & 7 & 1.62\\
$L_{5.2}$ & $(1^5,1^5,1^5)$ & $c_{5.2.6}$ & 3 & 5 & 0.03 & &$(1^6,1^6,2^21^2)$ & $c_{6.7.13}$ & 3 & 4 & 0.03\\
     &  & $c_{5.2.7}$ & 3 & 6 & 0.09&$L_{6.9}$ & $(1^6,1^6,1^6)$ & $c_{6.9.5}$ & 4 & 8 & 1.94\\
 & & $c_{5.2.8}$  & 4 & 5& 0.02&$L_{6.11}$ & $(1^6,1^6,1^6)$ & $c_{6.11.5}$ & 4 & 7 & 1.87\\
 & $(1^5,1^5,2^21)$ & $c_{5.2.9}$ & 1 & 1 & 0.00\\
\hline
\end{tabular}
\footnotetext[1]{There is a gap in the sizes.}
\end{table}

\section{Examples of critical sets based on non-trivial autoparatopisms}\label{secA1}

This appendix provides examples of critical sets associated with non-trivial autoparatopisms of the Latin squares considered in this paper. For each conjugacy class listed in Tables~\ref{Table_scs_lcs}--\ref{table2}, we provide a representative non-trivial autoparatopism together with one critical set for each cardinality occurring in that class. The labels coincide with those used in the main tables. Furthermore, we represent each entry $(r,c,s)$ in a critical set as $rcs$.

\renewcommand{\tabcolsep}{6pt}
\begin{table}[htbp]
\caption{Critical sets based on non-trivial isotopisms (I).}
    \label{Table_Atop2_app}
\tiny
\centering
\begin{tabularx}{\textwidth}{l l X c X}
\toprule
$L$ & Class & Autotopism & Size & Critical set \\
\midrule
$L_3$ & $c_{3.2}$ & $(\mathrm{Id}_3;(\mathrm{Id}_3,(123),(123)))$ & 2 & $\{111,212\}$ \\
      & $c_{3.3}$ & $(\mathrm{Id}_3;((23),(23),(23)))$ & 1 & $\{223\}$ \\
      &           &                                    & 2 & $\{122,212\}$ \\
      & $c_{3.4}$ & $(\mathrm{Id}_3;((123),(123),(132)))$ & 1 & $\{111\}$ \\
\midrule
$L_{4.1}$ & $c_{4.1.2}$ & $(\mathrm{Id}_3;(\mathrm{Id}_4,(12)(34),(12)(34)))$ & 4 & $\{111,133,313,432\}$ \\
          & $c_{4.1.3}$ & $(\mathrm{Id}_3;((34),(34),(34)))$ & 4 & $\{111,133,313,331\}$ \\
          & $c_{4.1.4}$ & $(\mathrm{Id}_3;((34),(1324),(1324)))$ & 2 & $\{111,313\}$ \\
          & $c_{4.1.5}$ & $(\mathrm{Id}_3;((12)(34),(13)(24),(14)(23)))$ & 3 & $\{111,122,313\}$ \\
          & $c_{4.1.6}$ & $(\mathrm{Id}_3;((243),(243),(243)))$ & 2 & $\{122,212\}$ \\
\midrule
$L_{4.2}$ & $c_{4.2.2}$ & $(\mathrm{Id}_3;(\mathrm{Id}_4,(12)(34),(12)(34)))$ & 4 & $\{111,133,313,431\}$ \\
          & $c_{4.2.3}$ & $(\mathrm{Id}_3;(\mathrm{Id}_4,(1324),(1324)))$ & 3 & $\{111,212,313\}$ \\
          & $c_{4.2.4}$ & $(\mathrm{Id}_3;((34),(34),(34)))$ & 4 & $\{111,133,313,332\}$ \\
          & $c_{4.2.5}$ & $(\mathrm{Id}_3;((34),(13)(24),(13)(24)))$ & 3 & $\{111,122,332\}$ \\
          & $c_{4.2.6}$ & $(\mathrm{Id}_3;((12)(34),(1324),(1423)))$ & 2 & $\{111,313\}$ \\
\midrule
$L_{5.1}$ & $c_{5.1.2}$ & $(\mathrm{Id}_3;(\mathrm{Id}_5,(12345),(12345)))$ & 4 & $\{111,212,313,414\}$ \\
          & $c_{5.1.3}$ & $(\mathrm{Id}_3;((25)(34),(25)(34),(25)(34)))$ & 3 & $\{223,245,352\}$ \\
          &             &                                                 & 4 & $\{122,212,223,335\}$ \\
          &             &                                                 & 5 & $\{122,133,212,223,313\}$ \\
          & $c_{5.1.4}$ & $(\mathrm{Id}_3;((2354),(2354),(2354)))$ & 3 & $\{122,212,223\}$ \\
          & $c_{5.1.5}$ & $(\mathrm{Id}_3;((12345),(12345),(13524)))$ & 2 & $\{111,122\}$ \\
\midrule
$L_{5.2}$ & $c_{5.2.2}$ & $(\mathrm{Id}_3;((13)(45),(25)(34),(13)(45)))$ & 3 & $\{133,155,451\}$ \\
&  &  & 4 & $\{111,122,144,451\}$ \\
&  &  & 5 & $\{111,122,133,425,432\}$ \\
 & $c_{5.2.3}$ & $(\mathrm{Id}_3;((345),(345),(345)))$ & 5 & $\{111,133,234,313,352\}$ \\ \midrule$L_{6.1}$ & $c_{6.1.2}$ & $(\mathrm{Id}_3;(\mathrm{Id}_6,(12)(35)(46),(12)(35)(46)))$ & 9 & $\{111,133,144,234,313,341,414,531,645\}$\\
&&&10& $\{111,133,144,234,243,313,336,442,616,632\}$\\
 & $c_{6.1.3}$ & $(\mathrm{Id}_3;((36)(45),(36)(45),(36)(45)))$ & 7 & $\{111,133,313,341,354,435,453\}$\\
&&&8& $\{111,133,144,234,313,341,435,453\}$\\
&&&9& $\{111,133,144,234,313,325,414,442,453\}$\\
&&&10& $\{111,133,144,234,325,362,414,426,442,461\}$\\
 & $c_{6.1.4}$ & $(\mathrm{Id}_3;((36)(45),(12)(34)(56),(12)(34)(56)))$ & 7 & $\{111,133,155,313,336,341,453\}$\\
&&&8& $\{111,133,155,313,325,336,453,461\}$\\ \midrule
$L_{6.2}$ & $c_{6.2.2}$ & $(\mathrm{Id}_3;(\mathrm{Id}_6,(14)(25)(36),(14)(25)(36)))$ & 9 & $\{111,122,133,212,223,313,435,524,634\}$\\
&&&10& $\{111,122,133,212,223,231,313,321,435,524\}$\\
 & $c_{6.2.3}$ & $(\mathrm{Id}_3;(\mathrm{Id}_6,(123)(465),(123)(465)))$ & 8 & $\{111,144,212,245,414,441,515,643\}$\\
 & $c_{6.2.4}$ & $(\mathrm{Id}_3;((23)(56),(23)(56),(23)(56)))$ & 7 & $\{111,122,212,231,256,536,563\}$\\
 &&&8& $\{111,122,155,212,264,426,536,563\}$\\
&&&9& $\{111,122,155,212,245,426,515,551,563\}$\\
&&&10& $\{111,122,155,212,264,453,515,524,542,551\}$\\
 & $c_{6.2.5}$ & $(\mathrm{Id}_3;((14)(26)(35),(14)(25)(36),(23)(56)))$ & 7 & $\{111,122,133,212,264,321,346\}$\\
 &&&8& $\{111,122,133,144,212,231,354,365\}$\\
 &&&9& $\{111,122,133,155,231,245,256,321,332\}$\\ 
 & $c_{6.2.6}$ & $(\mathrm{Id}_3;((123)(456),(14)(25)(36),(153426)))$ & 4 & $\{111,122,414,435\}$\\
 & $c_{6.2.7}$ & $(\mathrm{Id}_3;((456),(456),(456)))$ & 8 & $\{111,144,223,245,414,426,441,453\}$\\
 &&&9& $\{111,122,144,212,245,414,426,441,453\}$\\
 & $c_{6.2.8}$ & $(\mathrm{Id}_3;((123)(456),(123)(465),(132)))$ & 6 & $\{111,144,155,414,426,441\}$\\
&&&7& $\{111,122,144,155,414,435,441\}$\\
& $c_{6.2.9}$ & $(\mathrm{Id}_3;((456),(152634),(152634)))$ & 3 & $\{111,212,426\}$\\
& $c_{6.2.10}$ & $(\mathrm{Id}_3;((153624),(152634),(23)(45)))$ & 4 & $\{111,122,144,155\}$\\   \midrule
$L_{6.3}$ & $c_{6.3.2}$ & $(\mathrm{Id}_3;(\mathrm{Id}_6,(123)(465),(123)(465)))$ & 8 & $\{111,144,212,245,414,443,515,642\}$\\
& $c_{6.3.3}$ & $(\mathrm{Id}_3;((23)(56),(15)(24)(36),(15)(24)(36)))$ & 6 & $\{111,122,212,223,515,562\}$\\
&&&7& $\{111,122,133,212,223,414,541\}$\\
&&&8& $\{111,122,133,212,245,414,426,541\}$\\
& $c_{6.3.4}$ & $(\mathrm{Id}_3;((23)(56),(143526),(143526)))$ & 4 & $\{111,212,414,515\}$\\
& $c_{6.3.5}$ & $(\mathrm{Id}_3;((456),(456),(456)))$ & 8 & $\{111,144,223,245,414,426,443,452\}$\\
&&&9& $\{111,122,144,212,245,414,426,443,452\}$\\
& $c_{6.3.6}$ & $(\mathrm{Id}_3;((123)(456),(123)(465),(132)))$ & 6 & $\{111,144,155,414,426,443\}$\\ \midrule
$L_{6.4}$ & $c_{6.4.2}$ & $(\mathrm{Id}_3;(\mathrm{Id}_6,(16)(25)(34),(16)(25)(34)))$ & 9 & $\{111,122,133,212,223,313,426,435,536\}$\\
&&&10& $\{111,122,133,212,223,313,321,414,435,536\}$\\
 & $c_{6.4.3}$ & $(\mathrm{Id}_3;(\mathrm{Id}_6,(123)(465),(123)(465)))$ & 8 & $\{111,144,212,246,414,442,515,643\}$\\
 & $c_{6.4.4}$ & $(\mathrm{Id}_3;(\mathrm{Id}_6,(142635),(142635)))$ & 5 & $\{111,212,313,414,515\}$\\
 & $c_{6.4.5}$ & $(\mathrm{Id}_3;((23)(45),(23)(45),(23)(45)))$ & 7 & $\{111,122,144,212,246,435,442\}$\\
 &&&8& $\{111,122,144,212,246,414,426,442\}$\\
  &&&9& $\{111,122,144,212,246,414,442,463,625\}$\\
& $c_{6.4.6}$ & $(\mathrm{Id}_3;((23)(45),(14)(25)(36),(14)(25)(36)))$ & 6 & $\{111,122,133,231,435,451\}$\\
 &&&7& $\{111,122,133,212,231,426,634\}$\\
&&&8& $\{111,122,133,212,223,246,435,616\}$\\
& $c_{6.4.7}$ & $(\mathrm{Id}_3;((123)(465),(123)(465),(132)(456)))$ & 4 & $\{111,144,414,442\}$\\
 &&&5& $\{111,122,144,426,451\}$\\
& $c_{6.4.8}$ & $(\mathrm{Id}_3;((123)(465),(142635),(16)(25)(34)))$ & 4 & $\{111,122,414,426\}$\\
& $c_{6.4.9}$ & $(\mathrm{Id}_3;((123)(465),(153624),(142635)))$ & 3 & $\{111,122,426\}$\\
\bottomrule
\end{tabularx}
\end{table}

\clearpage

\renewcommand{\tabcolsep}{5pt}
\begin{table}[htbp]
\caption{Critical sets based on non-trivial isotopisms  (II).}
    \label{Table_scs_lcs_6_app}
\centering
\tiny
\begin{tabularx}{\textwidth}{l l X c X}
\toprule
$L$ & Class & Autotopism & Size & Critical set \\
   \midrule
$L_{6.5}$ &  $c_{6.5.2}$ & $(\mathrm{Id}_3;(\mathrm{Id}_6,(15)(24)(36),(15)(24)(36)))$ & 9 & $\{111,122,133,212,223,335,421,532,616\}$\\
&&&10& $\{111,122,133,212,223,231,313,421,616,634\}$\\
&  $c_{6.5.3}$ & $(\mathrm{Id}_3;((12)(45),(23)(46),(12)(45)))$ & 7 & $\{111,144,313,324,421,463,641\}$\\
&&&8& $\{111,122,144,313,324,342,421,463\}$\\
&&&9& $\{111,122,133,155,313,445,452,625,641\}$\\
&  $c_{6.5.4}$ & $(\mathrm{Id}_3;((12)(45),(15)(26)(34),(14)(25)(36)))$ & 6 & $\{111,122,133,324,421,616\}$\\
&&& 7 & $\{111,122,133,313,324,414,625\}$\\
&&& 8 & $\{111,122,133,144,414,436,616,625\}$\\
&  $c_{6.5.5}$ & $(\mathrm{Id}_3;((345),(123)(465),(123)(465)))$ & 4 & $\{111,246,313,356\}$\\
&&& 5 & $\{111,144,212,324,356\}$\\
&&& 6 & $\{111,144,212,246,324,616\}$\\
&  $c_{6.5.6}$ & $(\mathrm{Id}_3;((345),(143526),(143526)))$ & 3 & $\{111,212,313\}$\\    \midrule
$L_{6.6}$ &  $c_{6.6.2}$ & $(\mathrm{Id}_3;((45),(15)(26)(34),(15)(26)(34)))$ & 7 & $\{111,122,212,414,436,625,634\}$\\
&&& 8 & $\{111,122,133,212,313,414,436,625\}$\\
&&& 9 & $\{111,122,133,212,224,235,313,414,625\}$\\
&  $c_{6.6.3}$ & $(\mathrm{Id}_3;((23)(45),(16)(25),(16)(25)))$ & 7 & $\{111,122,133,212,224,445,451\}$\\
&&& 8 & $\{111,122,133,212,224,235,451,462\}$\\
&&& 9 & $\{111,122,133,212,224,235,263,414,451\}$\\
&  $c_{6.6.4}$ & $(\mathrm{Id}_3;((345),(142)(356),(142)(356)))$ & 4 & $\{111,235,321,365\}$\\
&&& 5 & $\{111,133,212,313,365\}$\\
&&& 6 & $\{111,133,212,235,313,616\}$\\
&  $c_{6.6.5}$ & $(\mathrm{Id}_3;((12)(345),(152346),(162543)))$ & 2 & $\{111,321\}$\\
&&& 3 & $\{111,313,616\}$\\
&  $c_{6.6.6}$ & $(\mathrm{Id}_3;((2345),(1364),(1364)))$ & 5 & $\{111,122,212,224,241\}$\\
&&& 6 & $\{111,122,212,224,235,256\}$\\
&  $c_{6.6.7}$ & $(\mathrm{Id}_3;((12345),(26543),(12345)))$ & 3 & $\{111,122,133\}$\\
&&& 4 & $\{122,133,144,155\}$\\     \midrule
$L_{6.7}$ &  $c_{6.7.2}$ & $(\mathrm{Id}_3;(\mathrm{Id}_6,(15)(26)(34),(15)(26)(34)))$ & 9 & $\{111,122,133,212,224,313,425,436,631\}$\\
&&& 10 & $\{111,122,133,212,224,313,321,414,436,631\}$\\
 &  $c_{6.7.3}$ & $(\mathrm{Id}_3;((23)(46),(16)(25),(16)(25)))$ & 7 & $\{111,122,133,212,256,414,442\}$\\
 &&& 8 & $\{111,122,133,212,224,241,414,442\}$\\
  &&& 9 & $\{111,122,133,212,224,235,414,453,526\}$\\
 &  $c_{6.7.4}$ & $(\mathrm{Id}_3;((23)(46),(12)(34)(56),(12)(34)(56)))$ & 6 & $\{111,133,212,235,263,425\}$\\
  &&& 7 & $\{111,133,155,224,235,425,461\}$\\
  &&& 8 & $\{111,133,155,224,235,414,436,461\}$\\ 
&  $c_{6.7.5}$ & $(\mathrm{Id}_3;((123)(456),(164)(235),(136)(254)))$ & 4 & $\{111,122,133,414\}$\\
  &&& 5 & $\{111,133,144,166,461\}$\\
&  $c_{6.7.6}$ & $(\mathrm{Id}_3;((123)(456),(124563),(146532)))$ & 3 & $\{111,122,425\}$\\     \midrule
$L_{6.8}$ &  $c_{6.8.2}$ & $(\mathrm{Id}_3;((24),(14)(23)(56),(14)(23)(56)))$ & 7 & $\{111,122,212,256,325,616,651\}$\\
  &&& 8 & $\{111,122,155,212,235,313,515,651\}$\\
  &&& 9 & $\{111,122,155,212,235,313,325,616,651\}$\\  
  &&& 10 & $\{111,122,155,224,263,313,325,354,515,521\}$\\
 &  $c_{6.8.3}$ & $(\mathrm{Id}_3;((12)(34),(14)(36),(24)(56)))$ & 7 & $\{111,155,336,342,515,521,534\}$\\
  &&& 8 & $\{111,122,133,313,336,515,521,632\}$\\
  &&& 9 & $\{111,122,133,144,313,336,515,521,534\}$\\
 &  $c_{6.8.4}$ & $(\mathrm{Id}_3;((1234),(2653),(1234)))$ & 5 & $\{111,122,166,515,521\}$\\
  &&& 6 & $\{111,122,133,155,515,521\}$\\  \midrule
$L_{6.9}$ &  $c_{6.9.2}$ & $(\mathrm{Id}_3;((12)(34),(16)(25)(34),(13)(26)(45)))$ & 6 & $\{111,122,133,313,334,532\}$\\
  &&& 7 & $\{111,122,133,144,326,334,515\}$\\
&&& 8 & $\{111,122,133,144,326,515,532,631\}$\\    \midrule
$L_{6.10}$ &  $c_{6.10.2}$ & $(\mathrm{Id}_3;((12),(14)(26)(35),(13)(24)(56)))$ & 7 & $\{111,122,133,326,423,515,634\}$\\
&&& 8 & $\{111,122,133,155,332,423,531,616\}$\\
&&& 9 & $\{111,122,133,313,326,524,616,621,634\}$\\
 &  $c_{6.10.3}$ & $(\mathrm{Id}_3;((12)(34),(25)(36),(12)(34)))$ & 7 & $\{111,122,133,326,332,515,524\}$\\
 &&& 8 & $\{111,122,133,166,332,354,515,524\}$\\
 &&& 9 & $\{111,122,133,326,365,515,524,531,621\}$\\  \midrule
$L_{6.11}$ & $c_{6.11.2}$  & $(\mathrm{Id}_3;((12)(35),(12)(56),(16)(34)))$ & 7 & $\{111,133,155,313,341,435,654\}$\\
 &&& 8 & $\{111,122,133,155,336,364,414,435\}$\\
 &&& 9 & $\{111,122,133,155,325,336,364,435,616\}$\\  \midrule
$L_{6.12}$ & $c_{6.12.2}$ & $(\mathrm{Id}_3;(\mathrm{Id}_6,(126)(345),(126)(345)))$ & 8 & $\{111,133,212,234,313,332,414,536\}$\\
 & $c_{6.12.3}$ & $(\mathrm{Id}_3;((12),(13)(25)(46),(14)(23)(56)))$ & 7 & $\{111,122,144,324,425,515,643\}$\\
 &&& 8 & $\{111,122,133,324,346,425,616,643\}$\\
 &&& 9 & $\{111,122,133,144,313,425,616,621,643\}$\\
 & $c_{6.12.4}$ & $(\mathrm{Id}_3;((45),(132465),(132465)))$ & 4 & $\{111,212,313,414\}$\\
  &&& 5 & $\{111,212,414,431,616\}$\\
 & $c_{6.12.5}$ & $(\mathrm{Id}_3;((12)(45),(26)(35),(12)(45)))$ & 7 & $\{111,122,133,313,425,431,635\}$\\
 &&& 8 & $\{111,122,133,155,313,324,425,431\}$\\
&&& 9 & $\{111,122,133,144,313,324,425,442,456\}$\\
 & $c_{6.12.6}$ & $(\mathrm{Id}_3;((345),(345),(345)))$ & 8 & $\{111,133,226,234,313,324,332,346\}$\\
&&& 9 & $\{111,122,133,212,234,313,324,332,346\}$\\
 & $c_{6.12.7}$ & $(\mathrm{Id}_3;((345),(126)(354),(126)(354)))$ & 6 & $\{111,133,212,313,332,635\}$\\
 &&& 7 & $\{111,133,212,234,313,324,332\}$\\
 & $c_{6.12.8}$ & $(\mathrm{Id}_3;((12)(345),(136425),(146523)))$ & 3 & $\{111,133,313\}$\\
\bottomrule
\end{tabularx}
\end{table}

\clearpage

\renewcommand{\tabcolsep}{1pt}
\begin{table}[htbp]
\caption{Critical sets based on non-trivial paratopisms of type I (I).}
    \label{table1_app}
\centering
\tiny
\begin{tabularx}{\textwidth}{l l X c X}
\toprule
$L$ & Class & Autoparatopism & Size & Critical set \\
\midrule
$L_3$ &  $c_{3.5}$ & $((23);(\mathrm{Id}_3,(23),(23)))$ & 2 & $\{111,212\}$\\
&  $c_{3.6}$ & $((23);((23),\mathrm{Id}_3,\mathrm{Id}_3))$ & 1 & $\{212\}$\\
&  $c_{3.7}$ & $((23);((23),(123),(123)))$ & 1 & $\{212\}$\\
&  $c_{3.8}$ & $((23);((123),(23),(13)))$ & 1 & $\{111\}$\\  \midrule
$L_{4.1}$ &  $c_{4.1.7}$ & $((13);(\mathrm{Id}_4,\mathrm{Id}_4,\mathrm{Id}_4))$ & 4 & $\{111,122,234,313\}$\\
 &&& 5 & $\{111,122,133,212,313\}$\\
 &&& 6 & $\{122,133,144,234,243,324\}$\\
&  $c_{4.1.8}$ & $((13);((34),(34),(34)))$ & 3 & $\{111,133,313\}$\\
 &&& 4 & $\{111,133,234,423\}$\\
&  $c_{4.1.9}$ & $((13);((12),(12),(34)))$ & 3 & $\{111,133,313\}$\\
&  $c_{4.1.10}$ & $((13);((12)(34),(12)(34),\mathrm{Id}_4))$ & 3 & $\{111,133,313\}$\\
&  $c_{4.1.11}$ & $((13);((1324),(1324),(34)))$ & 2 & $\{111,122\}$\\
&  $c_{4.1.12}$ & $((13);((243),(243),(243)))$ & 1 & $\{122\}$\\ \midrule
$L_{4.2}$ & $c_{4.2.7}$ & $((23);(\mathrm{Id}_4,(34),(34)))$ & 3 & $\{111,234,313\}$\\ 
 &&& 4 & $\{111,133,234,414\}$\\
  & $c_{4.2.8}$ & $((23);((34),\mathrm{Id}_4,\mathrm{Id}_4))$ & 3 & $\{111,234,313\}$\\ 
 &&& 4 & $\{111,133,313,324\}$\\
  & $c_{4.2.9}$ & $((23);((34),(1324),(1324)))$ & 3 & $\{111,313,332\}$\\ 
  & $c_{4.2.10}$ & $((23);((12)(34),(34),(12)))$ & 3 & $\{111,133,313\}$\\ 
  & $c_{4.2.11}$ & $((23);((13)(24),\mathrm{Id}_4,(1423)))$ & 2 & $\{111,212\}$\\ 
  & $c_{4.2.12}$ & $((23);((13)(24),(34),(14)(23)))$ & 1 & $\{111\}$\\ \midrule
$L_{5.1}$ & $c_{5.1.6}$ & $((13);((25)(34),\mathrm{Id}_5,(25)(34)))$ & 4 & $\{111,122,234,245\}$\\
 &&& 5 & $\{111,122,133,212,352\}$\\
 &&& 6 & $\{111,122,133,144,313,324\}$\\
& $c_{5.1.7}$ & $((13);(\mathrm{Id}_5,(25)(34),\mathrm{Id}_5))$ & 3 & $\{122,133,245\}$\\
 &&& 4 & $\{122,133,144,234\}$\\
 & $c_{5.1.8}$ & $((13);((2453),(2354),(2453)))$ & 3 & $\{122,133,245\}$\\
  & $c_{5.1.9}$ & $((13);((12345),(12)(35),\mathrm{Id}_5))$ & 2 & $\{111,133\}$\\
  & $c_{5.1.10}$ & $((13);((12)(35),(12345),(25)(34)))$ & 1 & $\{111\}$\\ \midrule
$L_{5.2}$ & $c_{5.2.4}$ & $((23);((45),(12)(35),(12)(35)))$ & 4 & $\{111,234,313,425\}$\\
 &&& 5 & $\{111,133,212,324,432\}$\\
 &&& 6 & $\{111,133,212,324,335,451\}$\\
 & $c_{5.2.5}$ & $((23);((1345),(12354),(152)))$ & 3 & $\{111,122,155\}$\\
\midrule
$L_{6.1}$ & $c_{6.1.5}$ & $((23);((45),(12)(36)(45),(12)(36)(45)))$ & 6 & $\{111,234,313,426,435,645\}$\\
&&& 7 & $\{111,133,212,234,313,414,645\}$\\
&&& 8 & $\{111,133,212,221,234,313,426,645\}$\\
&&& 9 & $\{111,133,212,234,313,336,354,414,663\}$\\
& $c_{6.1.6}$ & $((23);((45),(34)(56),(34)(56)))$ & 7 & $\{111,133,256,313,414,426,645\}$\\
&&& 8 & $\{111,122,234,265,313,426,435,645\}$\\
&&& 9 & $\{111,122,133,234,265,313,414,435,453\}$\\
&&& 10 & $\{111,133,155,234,265,336,414,435,453,461\}$\\
& $c_{6.1.7}$ & $((23);((12)(3465),(35)(46),(12)(36)(45)))$ & 4 & $\{111,133,313,336\}$\\
& $c_{6.1.8}$ & $((23);((12)(3465),(12),(34)(56)))$ & 5 & $\{111,133,155,313,336\}$\\  \midrule
$L_{6.2}$ & $c_{6.2.11}$ & $((13);((23)(56),(56),(23)(56)))$ & 7 & $\{111,122,155,245,354,515,536\}$\\
&&& 8 & $\{111,122,133,155,212,264,536,625\}$\\
&&& 9 & $\{111,122,133,155,166,212,256,536,625\}$\\
&&& 10 & $\{111,122,133,155,166,212,223,256,515,536\}$\\
& $c_{6.2.12}$ & $((13);((14)(25)(36),(56),(14)(25)(36)))$ & 8 & $\{111,122,133,212,245,256,346,453\}$\\
&&& 9 & $\{111,122,133,155,212,223,231,346,453\}$\\
&&& 10 & $\{111,122,133,155,212,223,332,346,441,563\}$\\
&&& 11 & $\{111,122,133,155,212,223,231,313,321,332,453\}$\\
& $c_{6.2.13}$ & $((13);((14)(26)(35),(14)(2536),(23)(56)))$ & 4 & $\{111,122,166,212\}$\\
& $c_{6.2.14}$ & $((13);(\mathrm{Id}_6,(14)(2536),(14)(25)(36)))$ & 5 & $\{111,122,166,212,313\}$\\
& $c_{6.2.15}$ & $((13);((143625),(123)(46),(153426)))$ & 4 & $\{111,122,144,441\}$\\
& $c_{6.2.16}$ & $((13);((12)(46),(123)(46),(23)(56)))$ & 4 & $\{111,144,155,414\}$\\
& $c_{6.2.17}$ & $((13);((16)(24)(35),(123)(46),(14)(25)(36)))$ & 4 & $\{111,144,155,441\}$\\
&&&  5 & $\{111,122,133,144,441\}$\\
& $c_{6.2.18}$ & $((13);((153426),(56),(153426)))$ & 5 & $\{111,122,133,155,453\}$\\
& $c_{6.2.19}$ & $((13);((123)(456),(14)(2536),(153426)))$ & 2 & $\{111,122\}$\\ \midrule
$L_{6.3}$ & $c_{6.3.7}$ & $((13);((23)(56),(46),(23)(56)))$ & 6 & $\{111,122,245,354,426,435\}$\\
&&& 7 & $\{111,122,133,144,365,426,435\}$\\
&&& 8 & $\{111,122,133,144,212,256,414,426\}$\\
&&& 9 & $\{111,122,133,144,166,212,264,414,426\}$\\
& $c_{6.3.8}$ & $((13);((152634),(14)(2635),(123)(456)))$ & 4 & $\{111,122,133,264\}$\\
&&& 5 & $\{111,122,133,212,332\}$\\
& $c_{6.3.9}$ & $((13);((12)(46),(123)(45),(23)(56)))$ & 4 & $\{111,144,155,414\}$\\
& $c_{6.3.10}$ & $((13);((143625),(14)(2635),\mathrm{Id}_6))$ & 2 & $\{111,122\}$\\
\bottomrule
\end{tabularx}
\end{table}

\renewcommand{\tabcolsep}{1pt}
\begin{table}[htbp]
\caption{Critical sets based on non-trivial paratopisms of type I (II).}
    \label{table1_app_1}
\centering
\tiny
\begin{tabularx}{\textwidth}{l l X c X}
\toprule
$L$ & Class & Autoparatopism & Size & Critical set \\ \midrule
$L_{6.4}$ & $c_{6.4.10}$ & $((23);(\mathrm{Id}_6,(23)(45),(23)(45)))$ & 6 & $\{111,144,223,345,426,515\}$\\
&&& 7 & $\{111,122,144,212,356,435,524\}$\\
&&& 8 & $\{111,122,144,212,254,313,435,524\}$\\
&&& 9 & $\{111,122,144,212,254,313,414,435,515\}$\\
&&& 10 & $\{111,166,223,254,414,435,515,524,616,625\}$\\
& $c_{6.4.11}$ & $((23);((23)(45),\mathrm{Id}_6,\mathrm{Id}_6))$ & 6 & $\{111,212,246,435,625,634\}$\\
&&& 7 & $\{111,122,212,223,246,426,634\}$\\
&&& 8 & $\{111,122,133,212,246,414,435,451\}$\\
&&& 9 & $\{111,122,133,212,223,231,246,414,463\}$\\
& $c_{6.4.12}$ & $((23);((14)(25)(36),(123)(465),(142635)))$ & 6 & $\{111,122,212,223,231,321\}$\\
& $c_{6.4.13}$ & $((23);((23)(45),(123)(465),(123)(465)))$ & 4 & $\{212,246,414,616\}$\\
&&& 5 & $\{111,212,246,414,442\}$\\
& $c_{6.4.14}$ & $((23);((123)(465),(23)(45),(13)(46)))$ & 3 & $\{111,144,414\}$\\
& $c_{6.4.15}$ & $((23);((14)(25)(36),\mathrm{Id}_6,(153624)))$ & 3 & $\{111,212,313\}$\\
& $c_{6.4.16}$ & $((23);((142635),(23)(45),(15)(26)(34)))$ & 2 & $\{111,122\}$\\ \midrule
$L_{6.5}$ & $c_{6.5.7}$ & $((23);((45),(13)(56),(13)(56)))$ & 7 & $\{111,122,246,335,421,445,616\}$\\
&&& 8 & $\{111,122,144,212,335,414,421,616\}$\\
&&& 9 & $\{111,122,144,212,231,313,414,436,625\}$\\
&&& 10 & $\{111,122,155,231,335,356,414,436,445,463\}$\\
 & $c_{6.5.8}$ & $((23);((45),(16)(24)(35),(16)(24)(35)))$ & 6 & $\{111,231,342,452,616,625\}$\\
 &&& 7 & $\{111,122,212,313,324,361,445\}$\\
 &&& 8 & $\{111,122,212,223,324,361,421,653\}$\\
 &&& 9 & $\{111,122,212,223,313,361,414,452,625\}$\\
 &&& 10 & $\{111,122,313,324,335,342,414,452,625,653\}$\\
 & $c_{6.5.9}$ & $((23);((12)(345),\mathrm{Id}_6,(132)(456)))$ & 4 & $\{111,144,313,324\}$\\
 & $c_{6.5.10}$ & $((23);((12)(345),(15)(24)(36),(162534)))$ & 2 & $\{111,313\}$\\
 &&& 4 & $\{111,144,324,616\}$\\ \midrule
$L_{6.6}$ & $c_{6.6.8}$ & $((23);(\mathrm{Id}_6,(16)(25)(34),(16)(25)(34)))$ & 8 & $\{111,122,212,313,451,625,634,643\}$\\
 &&& 9 & $\{111,122,212,224,321,451,515,634,643\}$\\
 &&& 10 & $\{111,122,212,224,321,515,616,643,652,661\}$\\
 &&& 11 & $\{111,122,212,224,321,332,414,451,515,531,643\}$\\
 &&& 15 & $\{111,122,133,212,224,241,313,321,332,414,423,$\\
  &&&  &  $\phantom{\{}451,515,531,542\}$\\
& $c_{6.6.9}$ & $((23);((45),(12)(56),(12)(56)))$ & 7 & $\{111,133,212,346,414,451,625\}$\\
&&& 8 & $\{111,133,144,212,346,414,423,625\}$\\
&&& 9 & $\{111,133,144,212,224,321,423,445,616\}$\\
&&& 10 & $\{111,133,144,212,224,313,321,414,445,451\}$\\
&&& 11 & $\{111,133,144,212,224,313,321,414,423,436,451\}$\\
& $c_{6.6.10}$ & $((23);((23)(45),(34),(34)))$ & 6 & $\{111,133,212,224,445,625\}$\\
&&& 7 & $\{111,122,133,212,224,235,451\}$\\
&&& 8 & $\{111,122,155,212,224,235,451,643\}$\\
&&& 9 & $\{111,122,155,224,235,256,462,634,643\}$\\
& $c_{6.6.11}$ & $((23);((2345),(1463)(25),(1463)(25)))$ & 5 & $\{111,122,212,224,241\}$\\
&&& 6 & $\{111,122,212,224,235,256\}$\\
& $c_{6.6.12}$ & $((23);((345),(132645),(132645)))$ & 3 & $\{111,212,321\}$\\
&&& 4 & $\{111,212,313,616\}$\\
& $c_{6.6.13}$ & $((23);((12)(345),(124),(365)))$ & 4 & $\{111,133,313,321\}$\\
& $c_{6.6.14}$ & $((23);((12345),(162)(35),(156)(24)))$ & 1 & $\{111\}$\\  \midrule
$L_{6.7}$ & $c_{6.7.7}$ & $((23);(\mathrm{Id}_6,(12)(56),(12)(56)))$ & 7 & $\{111,133,144,212,346,425,623\}$\\
&&& 8 & $\{111,133,144,212,224,313,425,645\}$\\
&&& 9 & $\{111,133,144,212,224,256,321,414,534\}$\\
&&& 10 & $\{111,133,144,212,224,313,321,414,425,623\}$\\
 & $c_{6.7.8}$ & $((23);((23)(46),(34),(34)))$ & 6 & $\{111,133,212,414,442,526\}$\\
 &&& 7 & $\{111,122,133,212,224,414,425\}$\\
 &&& 8 & $\{111,122,133,212,224,235,425,461\}$\\
 &&& 9 & $\{111,122,133,224,263,414,436,442,461\}$\\
 & $c_{6.7.9}$ & $((23);((14)(26)(35),(123)(456),(142536)))$ & 6 & $\{111,122,212,224,235,321\}$\\
  & $c_{6.7.10}$ & $((23);((123)(456),(2463),(1453)))$ & 3 & $\{111,122,425\}$\\
& $c_{6.7.11}$ & $((23);((142536),(1354),(15)(2463)))$ & 2 & $\{111,133\}$\\ \midrule
$L_{6.8}$ & $c_{6.8.5}$ & $((23);(\mathrm{Id}_6,(15)(23)(46),(15)(23)(46)))$ & 7 & $\{111,122,212,431,515,623,664\}$\\
 &&& 8 & $\{111,122,212,224,336,515,651,664\}$\\
  &&& 9 & $\{111,122,212,224,313,414,534,546,651\}$\\
   &&& 10 & $\{111,122,212,224,313,414,616,623,632,664\}$\\
    &&& 11 & $\{111,122,212,224,313,354,414,431,515,521,534\}$\\
 & $c_{6.8.6}$ & $((23);((24),(16)(45),(16)(45)))$ & 7 & $\{111,122,212,235,313,645,651\}$\\
  &&& 8 & $\{111,122,133,212,235,313,515,651\}$\\
   &&& 9 & $\{111,122,133,212,235,256,361,616,623\}$\\
 & $c_{6.8.7}$ & $((23);((13)(24),(153462),(126435)))$ & 6 & $\{111,122,256,515,623,632\}$\\
   &&& 7 & $\{111,122,133,212,515,521,632\}$\\
     &&& 8 & $\{111,122,133,155,224,515,534,623\}$\\
       &&& 9 & $\{111,122,133,155,212,224,515,521,534\}$\\
& $c_{6.8.8}$ & $((23);((1234),(15246),(13645)))$ & 5 & $\{111,122,133,515,521\}$\\
   &&& 6 & $\{111,144,155,166,515,521\}$\\
\bottomrule
\end{tabularx}
\end{table}

\clearpage

\renewcommand{\tabcolsep}{2pt}
\begin{table}[htbp]
\caption{Critical sets based on non-trivial paratopisms of type I (III).}
    \label{table1_app_2}
\centering
\tiny
\begin{tabularx}{\textwidth}{l l X c X}
\toprule
$L$ & Class & Autoparatopism & Size & Critical set \\  \midrule
$L_{6.9}$ & $c_{6.9.3}$ & $((23);((34),(12)(35),(12)(35)))$ & 6 & $\{111,133,235,326,546,625\}$\\
&&& 7 & $\{111,133,144,224,313,515,625\}$\\
&&& 8 & $\{111,133,144,212,224,313,334,625\}$\\
&&& 9 & $\{111,133,144,212,224,235,326,342,625\}$\\
&&& 10 & $\{111,133,166,212,235,256,313,351,365,616\}$\\
 & $c_{6.9.4}$ & $((23);((1324)(56),(13526),(3654)))$ & 4 & $\{111,122,133,532\}$\\
 &&& 5 & $\{111,122,133,144,515\}$\\ \midrule
$L_{6.10}$ & $c_{6.10.4}$ & $((23);(\mathrm{Id}_6,(16)(23)(45),(16)(23)(45)))$ & 6 & $\{122,332,414,515,621,634\}$\\
&&& 7 & $\{111,122,212,332,423,524,616\}$\\
&&& 8 & $\{111,122,212,225,313,461,515,634\}$\\
&&& 9 & $\{111,122,212,225,313,332,423,616,621\}$\\
 & $c_{6.10.5}$ & $((23);((34),(15)(46),(15)(46)))$ & 6 & $\{111,122,236,332,524,616\}$\\
&&& 7 & $\{111,122,133,212,332,531,616\}$\\
&&& 8 & $\{111,122,133,212,251,332,515,634\}$\\
&&& 9 & $\{111,122,133,212,236,313,332,515,531\}$\\
&&& 10 & $\{111,122,144,326,354,365,515,524,546,616\}$\\
 & $c_{6.10.6}$ & $((23);((12),(152)(346),(125)(364)))$ & 7 & $\{111,133,313,332,553,616,621\}$\\
 &&& 8 & $\{111,122,133,155,332,515,634,663\}$\\
 &&& 9 & $\{111,122,133,155,313,423,515,553,634\}$\\
 &&& 10 & $\{111,122,144,326,515,524,546,616,621,652\}$\\
  & $c_{6.10.7}$ & $((23);((12)(34),(162453),(135426)))$ & 7 & $\{111,122,313,341,515,524,652\}$\\
 &&& 8 & $\{111,122,133,313,332,524,553,616\}$\\
 &&& 9 & $\{111,122,133,155,332,354,524,553,616\}$\\
 &&& 10 & $\{111,122,166,326,332,341,365,531,553,645\}$\\ \midrule
$L_{6.11}$ & $c_{6.11.3}$ & $((23);((23),(16)(23)(45),(16)(23)(45)))$ & 6 & $\{111,212,234,414,621,645\}$\\
 &&& 7 & $\{111,122,212,243,414,621,645\}$\\
 &&& 8 & $\{111,122,212,226,243,423,645,654\}$\\
 &&& 9 & $\{111,122,212,226,234,414,461,645,654\}$\\
 &&& 10 & $\{111,122,212,226,414,423,616,621,632,654\}$\\
 & $c_{6.11.4}$ & $((23);((1253),(2453),(154623)))$ & 5 & $\{111,122,144,414,435\}$\\
  &&& 6 & $\{122,133,144,155,414,435\}$\\ \midrule
$L_{6.12}$ & $c_{6.12.9}$ & $((23);(\mathrm{Id}_6,(13)(25)(46),(13)(25)(46)))$ & 7 & $\{111,122,313,346,414,552,621\}$\\
 &&& 8 & $\{111,122,212,313,332,425,536,654\}$\\
 &&& 9 & $\{111,122,212,313,324,414,431,515,552\}$\\
 &&& 10 & $\{111,122,212,313,324,332,414,425,515,552\}$\\
  & $c_{6.12.10}$ & $((23);((45),(26)(35),(26)(35)))$ & 6 & $\{111,133,313,442,643,662\}$\\
  &&& 7 & $\{111,122,133,212,332,456,643\}$\\
 &&& 8 & $\{111,122,133,212,253,313,332,456\}$\\
 &&& 9 & $\{111,122,133,212,253,313,324,456,662\}$\\
 &&& 10 & $\{111,122,133,313,332,431,456,463,635,662\}$\\
 & $c_{6.12.11}$ & $((23);((12),(162)(354),(126)(345)))$ & 7 & $\{111,133,324,442,616,621,662\}$\\
  &&& 8 & $\{111,122,133,313,425,616,621,662\}$\\
 &&& 9 & $\{111,122,133,313,425,616,643,654,662\}$\\
 &&& 11 & $\{111,133,313,324,332,414,425,442,515,523,552\}$\\
 & $c_{6.12.12}$ & $((23);((12)(45),(132465),(156423)))$ & 7 & $\{111,122,313,324,414,456,635\}$\\
  &&& 8 & $\{111,122,313,324,414,425,442,635\}$\\
 &&& 9 & $\{111,122,133,313,324,332,414,425,442\}$\\
 &&& 10 & $\{111,122,133,313,324,332,414,425,431,463\}$\\
  & $c_{6.12.13}$ & $((23);((12)(345),(126)(354),(345)))$ & 4 & $\{111,133,313,324\}$\\
  & $c_{6.12.14}$ & $((23);((12),\mathrm{Id}_6,(162)(354)))$ & 5 & $\{111,133,313,414,515\}$\\
& $c_{6.12.15}$ & $((23);((345),(136425),(136425)))$ & 4 & $\{111,212,313,332\}$\\
& $c_{6.12.16}$ & $((23);((12)(45),(14)(25)(36),(132465)))$ & 4 & $\{111,414,431,616\}$\\
&&& 5 & $\{111,133,313,414,431\}$\\
& $c_{6.12.17}$ & $((23);((12)(345),(345),(162)))$ & 4 & $\{111,133,313,616\}$\\
&&& 5 & $\{111,133,313,324,332\}$\\
\bottomrule
\end{tabularx}
\end{table}

\clearpage

\renewcommand{\tabcolsep}{8pt}
\begin{table}[htbp]
\caption{Critical sets based on non-trivial paratopisms of type II.}
    \label{table2_app}
\centering
\tiny
\begin{tabularx}{\textwidth}{l l X c X}
\toprule
$L$ & Class & Autoparatopism & Size & Critical set \\
\midrule
$L_3$ &  $c_{3.9}$ & $((123);(\mathrm{Id}_3,(23),(23)))$ & 1 & $\{111\}$\\
 &  $c_{3.10}$ & $((123);((23),\mathrm{Id}_3,\mathrm{Id}_3))$ & 0 & $\{\}$\\ \midrule
$L_{4.1}$ &  $c_{4.1.13}$ & $((132);(\mathrm{Id}_4,\mathrm{Id}_4,\mathrm{Id}_4))$ & 3 & $\{111,122,133\}$\\
 &  $c_{4.1.14}$ & $((132);((243),(243),(243)))$ & 2 & $\{122,133\}$\\
 &&& 3 & $\{111,122,234\}$\\
 &  $c_{4.1.15}$ & $((132);((34),(34),(34)))$ & 2 & $\{111,133\}$\\
 &  $c_{4.1.16}$ & $((132);((123),(243),(123)))$ & 1 & $\{111\}$\\\midrule
$L_{4.2}$ & $c_{4.2.13}$ & $((123);(\mathrm{Id}_4,(34),(34)))$ & 2 & $\{111,234\}$\\
 &&& 3 & $\{111,133,144\}$\\
  & $c_{4.2.14}$ & $((123);((34),\mathrm{Id}_4,\mathrm{Id}_4))$ & 2 & $\{111,133\}$\\ \midrule
$L_{5.1}$ & $c_{5.1.11}$ & $((132);((25)(34),\mathrm{Id}_5,(25)(34)))$ & 2 & $\{223,352\}$\\
 &&& 3 & $\{111,223,234\}$\\
 &&& 4 & $\{111,122,155,223\}$\\
 & $c_{5.1.12}$ & $((132);(\mathrm{Id}_5,(25)(34),\mathrm{Id}_5))$ & 1 & $\{223\}$\\
 & $c_{5.1.13}$ & $((132);((2453),(2354),(2453)))$ & 1 & $\{223\}$\\ \midrule
$L_{5.2}$ & $c_{5.2.6}$ & $((123);((12)(345),(12),(354)))$ & 3 & $\{111,133,245\}$\\
 &&& 4 & $\{111,133,144,234\}$\\
 &&& 5 & $\{212,234,245,253,335\}$\\
& $c_{5.2.7}$ & $((123);((12),(12)(354),(345)))$ & 3 & $\{111,133,245\}$\\
 &&& 4 & $\{111,133,144,234\}$\\
 &&& 5 & $\{133,144,155,212,335\}$\\
 &&& 6 & $\{133,144,212,335,443,554\}$\\
& $c_{5.2.8}$ & $((123);((12)(354),(12)(345),\mathrm{Id}_5))$ & 4 & $\{111,133,212,234\}$\\
 &&& 5 & $\{111,133,234,245,253\}$\\
& $c_{5.2.9}$ & $((123);((1342),(1243),(14)(35)))$ & 1 & $\{133\}$\\ \midrule
$L_{6.1}$ & $c_{6.1.9}$ & $((123);(\mathrm{Id}_6,(12)(36),(12)(36)))$ & 4 & $\{111,133,256,354\}$\\
 &&& 5 & $\{111,133,144,221,256\}$\\
 &&& 6 & $\{111,133,144,155,234,243\}$\\
 &&& 7 & $\{133,144,155,221,256,336,663\}$\\
& $c_{6.1.10}$ & $((123);(\mathrm{Id}_6,(3564),(3564)))$ & 3 & $\{111,133,234\}$\\
 &&& 4 & $\{111,133,144,243\}$\\ \midrule
$L_{6.2}$ & $c_{6.2.20}$ & $((132);((23),\mathrm{Id}_6,(23)))$ & 5 & $\{111,122,133,144,256\}$\\
 &&& 6 & $\{111,122,133,144,155,245\}$\\
  &&& 7 & $\{111,245,256,264,346,354,365\}$\\ 
  & $c_{6.2.21}$ & $((132);((14)(2635),\mathrm{Id}_6,(14)(2635)))$ & 3 & $\{111,122,223\}$\\
 &&& 4 & $\{111,122,155,453\}$\\ 
  & $c_{6.2.22}$ & $((132);((12)(465),\mathrm{Id}_6,(12)(465)))$ & 3 & $\{111,144,245\}$\\ \midrule
$L_{6.3}$ & $c_{6.3.11}$ & $((132);((13)(456),(456),(13)(456)))$ & 4 & $\{111,122,245,354\}$\\
 &&& 5 & $\{111,122,144,231,256\}$\\
  &&& 6 & $\{111,122,144,155,231,354\}$\\
   &&& 7 & $\{122,144,155,166,245,256,264\}$\\
 & $c_{6.3.12}$ & $((132);((12)(465),(456),(12)(465)))$ & 3 & $\{111,144,245\}$\\ \midrule
$L_{6.4}$ & $c_{6.4.17}$ & $((123);(\mathrm{Id}_6,(12)(56),(12)(56)))$ & 3 & $\{111,144,356\}$\\
 &&& 5 & $\{111,133,144,246,254\}$\\
  & $c_{6.4.18}$ & $((123);(\mathrm{Id}_6,(23)(45),(23)(45)))$ & 5 & $\{111,122,133,254,345\}$\\
 &&& 6 & $\{111,122,144,223,265,332\}$\\
  & $c_{6.4.19}$ & $((123);((23)(45),\mathrm{Id}_6,\mathrm{Id}_6))$ & 3 & $\{111,122,254\}$\\
 &&& 4 & $\{111,144,223,246\}$\\ \midrule
$L_{6.7}$ & $c_{6.7.12}$ & $((123);((26)(34),(2463),(2463)))$ & 5 & $\{122,133,155,224,332\}$\\
 &&& 6 & $\{111,122,133,155,263,346\}$\\
  &&& 7 & $\{111,122,133,144,166,263,346\}$\\
 & $c_{6.7.13}$ & $((123);((24)(36),(132546),(132546)))$ & 3 & $\{111,133,321\}$\\
 &&& 4 & $\{111,122,321,534\}$\\ \midrule
$L_{6.9}$ & $c_{6.9.5}$ & $((123);((25364),(23456),(23456)))$ & 4 & $\{111,155,334,445\}$\\
 &&& 5 & $\{111,122,166,224,445\}$\\
  &&& 6 & $\{111,122,133,235,256,423\}$\\
   &&& 7 & $\{111,122,133,144,256,334,423\}$\\
    &&& 8 & $\{111,122,133,224,256,263,334,423\}$\\  \midrule
$L_{6.11}$ & $c_{6.11.5}$ & $((123);((26435),(24563),(24563)))$ & 4 & $\{144,243,336,645\}$\\
 &&& 5 & $\{111,122,226,243,435\}$\\
  &&& 6 & $\{111,122,133,144,226,645\}$\\
   &&& 7 & $\{111,122,133,144,234,524,645\}$\\
\bottomrule
\end{tabularx}
\end{table}

\clearpage

\end{appendices}

\newpage

\bibliographystyle{unsrt}
\bibliography{ref_arxiv}

\end{document}